\newlist{customitemize}{itemize}{3}
\setlist[customitemize,1]{label=RC\arabic{customitemizei}.}
\newcommand{\ignore}[1]{}
\newtheorem{theorem}{Theorem}
\newtheorem{lemma}[theorem]{Lemma}
\renewcommand{\Pr}{{\bf Pr}}
\newcommand{\ZZ}{\mathbb{Z}}
\newcommand{\fa}{\mathfrak{a}}
\begin{document}

\title{Sublinear Time Algorithms for Abelian Group\\ Isomorphism and Basis Construction}
%\author{Nader H. Bshouty}
%\institute{Technion, Haifa, Israel\\
%bshouty@ca.technion.ac.il}
\author{{\bf Nader H. Bshouty}\\ Dept. of Computer Science\\ Technion,  Haifa\\ }

\maketitle
\begin{abstract}
\color{black}
In this paper, we study the problems of Abelian group isomorphism and basis construction in two models. In the {\it partially specified model} (PS-model), the algorithm does not know the group size but can access randomly chosen elements of the group along with the Cayley table of those elements, which provides the result of the binary operation for every pair of selected elements. In the stronger {\it fully specified model} (FS-model), the algorithm knows the size of the group and has access to its elements and Cayley table. 

Given two Abelian groups, $G$, and $H$, we present an algorithm in the PS-model (and hence in the FS-model) that runs in time $\tilde O(\sqrt{|G|})$ and decides if they are isomorphic. This improves on Kavitha's linear-time algorithm and gives the first sublinear-time solution for this problem. We then prove the lower bound $\Omega(|G|^{1/4})$ for the FS-model and the tight bound $\Omega(\sqrt{|G|})$ for the PS-model. This is the first known lower bound for this problem. We obtain similar results for finding a basis for Abelian groups. For deterministic algorithms, a simple $\Omega(|G|)$ lower bound is given. 
\end{abstract}
%\newpage
%\tableofcontents
%\newpage
\section{Introduction}

In this paper, we study the problems of Abelian group isomorphism and constructing a basis in two models. In the {\it fully specified model}\footnote{Also known as the Cayley or multiplication table model} (FS-model), the size of the groups is known, and the algorithm has access to the elements of the group and their Cayley tables. In the {\it partially specified model} (PS-model), the size of the group is unknown; the algorithm can receive uniform random elements of the groups and access the Cayley table of elements observed so far.

To the best of our knowledge, we provide the first sublinear-time algorithm (in the size of the group) for isomorphism testing and basis construction problems of Abelian groups in these models while also establishing the first tight lower bounds.

All results in this paper are stated for the PS-model, unless explicitly indicated otherwise for the FS-model. Additionally, we assume that the computation is performed in the RAM model, where reading each element of the group requires one unit of time. %Otherwise, in models like the TM model, the time complexities must be multiplied by the worst-case length of the bit representation of the elements of the group.

\subsection{Generators For Abelian Groups}
To address these problems, we first study the problem of constructing a set of generators for Abelian groups $G$ using Oracle access to the Cayley table of $G$. Our algorithm is randomized and runs in time $\tilde O(\sqrt{|G|})$ and constructs a set of generators $A=\{a_i\}_{i\in [t]}$ for $G$ with size at most $t\le \log |G|$, satisfying the {\it triangular relations}\footnote{A {\it Relation} is an equation for elements of the group. For example, $a^2b=c^3d$ for some elements $a,b,c,d$ in a group. For a set of generators $\{a_1,a_2,\ldots,a_t\}$, triangular relations are relations of the form $a_i^{k_i}=a_1^{\lambda_{i,1}}\cdots a_{i-1}^{\lambda_{i,i-1}}$ for all $i\in [t]$}. 
We then use these relations to construct an Abelian group $\Gamma_G$ isomorphic to~$G$, that is a subset of some quotient ring, where each operation in $\Gamma_G$ can be performed in time $poly(\log |G|)$. 

Our algorithm operates within the PS-model. We also show that this algorithm is optimal, i.e., any randomized algorithm in the PS-model that produces a set of generators with any set of relations of size at most $|G|^{o(1)}$ must run in time $\Omega(\sqrt{|G|})$. For the FS-model, we obtain the lower bound $\Omega(|G|^{1/4})$.

We then apply this to isomorphism testing and basis construction. 
\color{black}

\color{black}
\subsection{Isomorphism Testing}
Group isomorphism is a fundamental problem in group theory and computation  \cite{Babai16,BabaiCGQ11,BabaiQ12,BabaiS84,ChenGQTZ24,DietrichW21,GarzonZ91,GrochowQ17,GrochowQ24,Kavitha07,KaragiorgosP11,KaragiorgosP112,Miller78,LiptonSZ76,Rosen,Savage80,Vikas96}. 

In this paper, we focus on the Abelian group isomorphism problem~\cite{GarzonZ91,KaragiorgosP11,KaragiorgosP112,Kavitha07,Miller78,LiptonSZ76,Savage80,Vikas96}: Given two finite Abelian groups $H$ and $G$ with access to their Cayley tables. Decide if $H$ is isomorphic to~$G$. 

Lipton et al. ~\cite{LiptonSZ76} %and independently Miller~\cite{Miller78} 
showed that Abelian group isomorphism could be solved in polynomial time. Savage~\cite{Savage80} gave an algorithm that runs in time $O(|G|^2)$. Vikas~\cite{Vikas96} improved this bound and gave an $O(|G|\log |G|)$ time algorithm. Then Kavitha~\cite{Kavitha07} and Karagiorgos and Paulakis \cite{KaragiorgosP11,KaragiorgosP112} improved it to $O(|G|)$. All the algorithms listed above are deterministic. In this paper, we show.
\begin{theorem}
    Any deterministic algorithm that decides whether an Abelian group is isomorphic to a given group $G$ must make at least $\Omega({|G|})$ accesses to the elements of the groups and to the Cayley table of $G$. 
\end{theorem}

Therefore, for deterministic algorithms, the algorithms of Kavitha~\cite{Kavitha07} and Karagiorgos and Paulakis  \cite{KaragiorgosP11,KaragiorgosP112} are optimal.

In this paper, we give a randomized algorithm for this problem. We show.
\begin{theorem}
    The Abelian group isomorphism problem can be solved in time\footnote{All results in this paper are stated for the PS-model, unless explicitly indicated otherwise for the FS-model.} $\tilde O(\sqrt{|G|})$. The algorithm also computes an explicit isomorphism. 
\end{theorem}

We also establish tight lower bounds for this problem.
\begin{theorem}
    Any Abelian group isomorphism algorithm must make at least $\Omega(\sqrt{|G|})$ accesses to the elements of the group and the Cayley table of $G$. 
\end{theorem}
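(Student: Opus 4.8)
The plan is to prove the bound via Yao's minimax principle: it suffices to exhibit a distribution over input pairs $(H,G)$ on which every \emph{deterministic} algorithm making fewer than $c\sqrt{|G|}$ accesses errs with constant probability. Fix a large parameter $M$, let $p$ and $p'$ be independent uniformly random primes in $[M,2M]$ (so $|H|,|G|=\Theta(M)$ and $\sqrt{|G|}=\Theta(\sqrt M)$), always set $H=\mathbb{Z}_{p}$, and with probability $1/2$ each set $G=\mathbb{Z}_{p}$ (an isomorphic instance) or $G=\mathbb{Z}_{p'}$ (a non‑isomorphic one with probability $1-o(1)$, since distinct primes give non‑isomorphic cyclic groups). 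Recall that in the PS‑model the algorithm sees only opaque labels for group elements, receives uniformly random samples, and may query the Cayley table on pairs of already‑seen labels; in particular it knows neither $|H|$ nor $|G|$. I will show that after $q=o(\sqrt M)$ accesses the transcript distributions in the two experiments are within $o(1)$ in statistical distance, which forces any correct algorithm to use $\Omega(\sqrt M)=\Omega(\sqrt{|G|})$ accesses.

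The core is a generic‑group simulation. Run $\mathcal{A}$ against a symbolic oracle that, for each label obtained from $G$, records the integer coefficient vector expressing that element as a $\mathbb{Z}$‑linear combination $\sum_i c_i r_i$ of the random samples $r_1,r_2,\ldots$ drawn so far; Cayley‑table queries are answered by adding coefficient vectors, and equality of two labels is reported exactly when their coefficient vectors coincide. (Operations on $H$ are simulated identically in both experiments and never interact with $G$, so only the $G$‑part matters.) By induction on the query sequence, as long as no ``collision'' has occurred — no two formally distinct coefficient vectors have evaluated to the same element of the true group $\mathbb{Z}_m$, $m\in\{p,p'\}$ — the real transcript equals the symbolic one, hence is identical in the two experiments, and the algorithm's next move is the same symbolic operation on both sides. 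Thus the statistical distance between the two transcript distributions is at most the probability that some collision occurs.

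It remains to bound the collision probability, over the joint randomness of the secret prime $m$ and the samples $r$. Given a generic history the at most $q$ coefficient vectors are fixed, with $\ell^\infty$‑norm at most $2^{q}$ (each group operation at most doubles coefficients). A collision between vectors $v_a\neq v_b$ occurs iff $\langle v_a-v_b,\,r\rangle\equiv 0\pmod m$. Conditioning on $r$, the inner product is a fixed nonzero integer $z$ with $\log|z|=O(q+\log M)$, so the number of primes of $[M,2M]$ dividing $z$ is $O\!\big((q+\log M)/\log M\big)$, and since $[M,2M]$ contains $\Theta(M/\log M)$ primes we get $\Pr_m[\,m\mid z\,]=O(q/M)$. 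The one extra subtlety is the deterministic case $v_a\equiv v_b\pmod m$, i.e. $m\mid\gcd(v_a-v_b)$, when the collision happens for \emph{every} $r$: here one argues that the set of primes in $[M,2M]$ that can divide any such gcd has size $O(q^{2}/\log M)$, because producing an element whose coefficient vector has a prime factor $\geq M$ costs $\Omega(\log M)$ operations, so the total ``large‑prime content'' of all computed coefficients is $O(q^{2})$; hence $\Pr_m$ of landing in this set is $O(q^{2}/M)$. Refining the pair count the same way (every pair realizing a large prime divisor forces an expensive computation) replaces a crude union bound over $\binom{q}{2}$ pairs by $O(q^{2}/M)$. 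So the collision probability is $O(q^{2}/M)=o(1)$ once $q=o(\sqrt M)$, completing the argument.

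I expect the main obstacle to be exactly this last point — the ``hidden order'' phenomenon, i.e. that because $|G|$ is unknown the algorithm's coefficients are honest integers rather than residues and can blow up through repeated doubling, so that a single cleverly chosen pair could in principle collide with probability $1$ for an unlucky secret prime. The resolution, sketched above, is a budget argument: the number of secret primes for which this happens is controlled by the operation count, so it contributes only $O(q^{2}/M)$. Pinning down this counting cleanly, and making the inductive coupling for an adaptive algorithm watertight (the coefficient vectors are themselves random objects, so one must argue ``conditioned on a generic prefix, which fixes the vectors, the next collision probability is small''), are where the real work lies; everything else is the standard Shoup‑style generic‑group bookkeeping.
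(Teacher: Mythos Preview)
Your approach is quite different from the paper's, and the difficulty you yourself flag as ``the main obstacle'' is real and not closed by your budget sketch. In Case~B (the secret prime $m$ divides $\gcd(v_a-v_b)$) you assert that the set of bad primes in $[M,2M]$ has size $O(q^{2}/\log M)$, but each $\gcd(v_a-v_b)\le 2^{q+1}$ contributes at most $(q{+}1)/\log M$ large prime factors, and there are $\binom{q}{2}$ pairs; the honest bound is $O(q^{3}/\log M)$ bad primes, hence collision probability $O(q^{3}/M)$ and only $q=\Omega(|G|^{1/3})$. Your heuristic ``producing an element whose coefficient vector has a prime factor $\ge M$ costs $\Omega(\log M)$ operations'' bounds the large-prime content of the \emph{coefficients} $(v_i)_j$, not of the \emph{differences} $v_a-v_b$ whose gcd is what matters; and the follow-up ``every pair realizing a large prime divisor forces an expensive computation'' is not an argument---once $\log M$ doublings have produced one large coordinate, each further addition creates a new element and $\Theta(q)$ new pairwise differences at unit cost. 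There is also a second subtlety you have not addressed: the algorithm's $G$-queries may depend adaptively on the $H$-transcript, and since $H=\mathbb{Z}_p$ the $H$-transcript carries information about $p$; in Experiment~1 (where $|G|=p$ as well) the coefficient vectors used on $G$ are therefore correlated with $|G|$, so ``condition on $r$ then average over $m$'' is not available even in principle.

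The paper sidesteps the unknown-order pathology entirely. It first proves an FS-model bound for distinguishing $H_1=\mathbb{Z}_{p^2}^m$ from $H_2=\mathbb{Z}_{p^2}^{m-1}\times\mathbb{Z}_p^2$: these have the \emph{same} size $p^{2m}$, so the generic simulation can take coefficients in $\mathbb{Z}_{p^2}$ rather than in $\mathbb{Z}$, the blow-up issue vanishes, and a clean Schwartz--Zippel–type estimate gives $\Pr[\text{collision}]\le O(T^{2}/p^{m-1})$, hence $T=\Omega(p^{(m-1)/2})$. The PS-model bound is then a one-line reduction: since $p\cdot H_1\cong\mathbb{Z}_p^{m}$ and $p\cdot H_2\cong\mathbb{Z}_p^{m-1}$, a $T$-query PS-model algorithm distinguishing $\mathbb{Z}_p^{m}$ from $\mathbb{Z}_p^{m-1}$ yields an $O(T\log p)$-query FS-model algorithm distinguishing $H_1$ from $H_2$; taking $p$ constant gives $T=\Omega(p^{(m-1)/2})=\Omega(\sqrt{|G|})$. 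This reduction-to-a-larger-group trick is exactly what lets one keep coefficients in a fixed finite ring and avoid the integer-coefficient analysis that is blocking your argument.
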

To the best of our knowledge, this is the first sublinear upper bound and the first tight lower bound for this problem.

In the FS-model, we establish the following lower bound.
\begin{theorem}
    In the FS-model, any Abelian group isomorphism algorithm must make at least $\Omega({|G|^{1/4}})$ access to the elements of the groups and the Cayley table of $G$. 
\end{theorem}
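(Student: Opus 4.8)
The plan is to prove the lower bound through Yao's minimax principle: it suffices to exhibit a distribution over inputs on which every \emph{deterministic} algorithm making $q=o(|G|^{1/4})$ accesses decides isomorphism correctly with probability only $1/2+o(1)$. First I would fix the hard pair of groups. Take $n=2^{k}$ with $4\mid k$ and $k\ge 8$, and set
$$G_{0}=\ZZ_{8}^{\,k/4}\times\ZZ_{2}^{\,k/4},\qquad H_{0}=\ZZ_{8}^{\,k/4}\times\ZZ_{4}\times\ZZ_{2}^{\,k/4-2}.$$
Both have order $n$ and they are non-isomorphic. The decisive property is that $G_{0}$ and $H_{0}$ agree on every invariant $|G^{2^{i}}|=|\{x^{2^{i}}:x\in G\}|$ \emph{except} at $i=1$, where $|G_{0}^{2}|=2^{k/2}$ while $|H_{0}^{2}|=2^{k/2+1}$; equivalently they differ only in their $2$-rank $\dim_{\FF_{2}}(G/G^{2})$. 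Since the isomorphism type of an abelian $2$-group is recovered from the sequence $(|G^{2^{i}}|)_{i}$, and a uniform relabeling of a group retains only its isomorphism type, detecting whether a relabeled copy $K$ of one of these groups is $G_{0}$ or $H_{0}$ is \emph{exactly} the task of deciding whether the subgroup of squares $K^{2}$ — which has size $\approx\sqrt{n}$ — has size $2^{k/2}$ or $2^{k/2+1}$. (For general $n$ one uses this family together with a padding argument.)

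Second I would take the input distribution $\cD$ to be: with probability $1/2$ let $(G,H)$ be two independent uniformly relabeled copies of $G_{0}$ (a YES instance), and with probability $1/2$ let $G$ be a uniformly relabeled copy of $G_{0}$ and $H$ a uniformly relabeled copy of $H_{0}$ (a NO instance). In both cases the marginal on the black box for $G$ is the same and is independent of the black box for $H$, so accesses to $G$'s Cayley table carry no information about YES versus NO; the problem therefore reduces to: given oracle access to the Cayley table of a uniformly relabeled copy $K$ of $G_{0}$ or of $H_{0}$, distinguish the two using $q=o(n^{1/4})$ Cayley queries.

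Third — the technical core — I would run an adversary argument showing that the two transcript distributions are within $O(q^{2}/\sqrt{n})$ in total variation, which is $o(1)$ once $q=o(n^{1/4})$. The adversary answers each query $(a,b)$ while maintaining the set $T$ of labels touched so far (at most $3q$ of them) together with a realization of the partial multiplication table on $T$ as a presentation $\ZZ^{S}/R$ with $|S|,|R|=O(q)$, on the labels that entered as fresh (non-derived) values; this partial structure is kept simultaneously embeddable into $G_{0}$ and into $H_{0}$. The only event that forces the adversary to commit to a relation present in one group but not the other is the discovery, among the $O(q^{2})$ pairs implicitly formed from the touched elements, of a collision inside $K^{2}$; because $x\mapsto x^{2}$ is a surjective homomorphism $K\to K^{2}$ (so it pushes the uniform distribution on $K$ to the uniform distribution on $K^{2}$) and $|K^{2}|\ge 2^{k/2}=\sqrt{n}$ under \emph{both} hypotheses, a birthday bound caps the probability of such a collision by $O(q^{2}/\sqrt{n})$. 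Conditioned on no collision, the adversary's answers are distributed identically under $K=G_{0}$ and under $K=H_{0}$, so the transcript distributions differ by at most $O(q^{2}/\sqrt{n})$, and Yao's principle then yields the bound $\Omega(n^{1/4})$.

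The step I expect to be the main obstacle is making the adversary argument fully rigorous, since a partial multiplication table is far from freely extendable: associativity, the existence of inverses, and the a priori knowledge that $|K|=n$ all constrain the completions, so one must track the partial structure algebraically and prove that the presentation $\ZZ^{S}/R$ remains embeddable into both $G_{0}$ and $H_{0}$ precisely as long as no ``short'' relation of $2$-power order witnessing the difference in $2$-rank has been produced — and that producing such a relation coincides with the collision event bounded above. Granting this lemma the theorem follows; moreover the same instance gives the $\Omega(|G|^{1/4})$ lower bound for basis construction, since outputting a basis of $K$ in particular exposes its $2$-rank.
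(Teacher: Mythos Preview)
Your high-level plan --- Yao's principle over uniformly relabeled copies of two fixed non-isomorphic abelian groups of the same order, then a birthday-type bound showing that $o(n^{1/4})$ queries cannot expose a distinguishing relation --- is exactly the paper's strategy. The paper, however, takes a much simpler hard pair, namely $H_1=\ZZ_{p^2}^{\,m}$ versus $H_2=\ZZ_{p^2}^{\,m-1}\times\ZZ_p^{2}$ (so $n=p^{2m}$), and replaces your adversary argument by a one-shot simulation in the free $\ZZ_{p^2}$-module $L$: run the deterministic algorithm once with each accessed element replaced by a formal generator $x_i$, and then show that under a random bijection the actual execution in either $H_r$ follows the same path with probability $1-O(T^2/p^{m-1})$, because \emph{every} nonzero $\ZZ_{p^2}$-combination of random elements vanishes with probability at most $1/p^{m-1}\approx 1/\sqrt{n}$ in both groups. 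This sidesteps the delicate ``embeddable-into-both'' bookkeeping entirely.

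Your choice of groups actually blocks this clean route and is the source of the gap you flag. In $G_0=\ZZ_8^{k/4}\times\ZZ_2^{k/4}$ (and in $H_0$), a random element satisfies $4\sigma=0$ with probability $|K[4]|/|K|=n^{-1/4}$, so a free-$\ZZ_8$-module simulation would diverge from the true execution with probability $\Theta(T^2 n^{-1/4})$, yielding only $T=\Omega(n^{1/8})$. Your fix is to argue that the $\gcd=4$ relations are non-distinguishing and that only $\gcd=2$ relations (your ``collisions in $K^2$'') matter; that is correct at the level of marginals, but your claim that the transcript distributions are \emph{identical} conditioned on no such collision is not literally true (e.g.\ $|K[2]|$ differs between $G_0$ and $H_0$, so even single-element events like $2\sigma=0$ have different probabilities), and making the coupling precise while the non-distinguishing $\gcd=4$ events fire with probability $\Theta(T\cdot n^{-1/4})=\Theta(1)$ is exactly the obstacle you anticipate. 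The paper's pair $H_1,H_2$ eliminates this layer entirely: with exponent $p^2$ there is no ``$\gcd=4$'' stratum, every nonzero relation already has probability $O(1/\sqrt{n})$, and the whole argument collapses to one union bound.
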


We also note that the problem of determining the existence of a homomorphism between two Abelian groups $G$ and $H$ can also be solved with the same time complexity. Our algorithm also finds the homomorphism and isomorphism. 

\subsection{Basis Construction}
Computing a basis for a finite Abelian group is a fundamental problem in computational group theory, with applications in cryptography, coding theory, and algebraic computations~\cite{Borges-QuintanaBM05,BuchmannS05,ChenF11,GarzonZ91,Iliopoulos85,Iliopoulos89,KaragiorgosP11,KaragiorgosP112,KoblitzM04,Teske99}. 

Every finite Abelian group $G$ can be represented as a direct product of cyclic groups $G_1\times G_2\times \cdots \times G_t$. If $a_i$ generates $G_i$, then the elements $a_1,a_2,\ldots,a_t$ are called {\it basis} for $G$.

All known sublinear time algorithms that find a basis for an Abelian group run in time $\tilde O(\sqrt{|G|})$ and assume that the group is given by a set of generators along with their orders and the Cayley table~\cite{BuchmannS05,ChenF11,Iliopoulos85,KaragiorgosP11,Teske99}. Chen and Fu \cite{ChenF11}, and Karagiorgos and Paulakis \cite{KaragiorgosP11,KaragiorgosP112} gave an $O(|G|)$ time deterministic algorithm that accesses only the Cayley table. We show that this algorithm is optimal.
\begin{theorem}
    Any deterministic algorithm that finds a basis for an Abelian group $G$ must access the elements of the group and its Cayley table at least $\Omega({|G|})$ times.
\end{theorem}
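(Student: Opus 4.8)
The plan is to prove this by an adversary argument on a single cyclic group, namely $G=\ZZ_n$ with $n=2^k$. The structural fact to exploit is that any basis of $\ZZ_n$ consists of a single element, and that element must be a generator, i.e.\ it must lie outside the unique maximal subgroup $H=\{0,2,4,\dots,n-2\}$, which has size $n/2$ and is itself a group (a copy of $\ZZ_{n/2}$). Hence ``$G$ is $\ZZ_n$, but every element I have examined happens to lie in $H$'' is a world an adversary can keep consistent, and in that world nothing the algorithm has touched can be a generator, so no correct basis can be output.

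Concretely, the adversary maintains a partial injection $\phi$ from labels to elements of $H$ and extends it lazily: when the algorithm queries the product of labels $x,y$, any of $x,y$ not yet in the domain of $\phi$ is mapped to an arbitrary so-far-unused element of $H$, and the adversary returns the (possibly freshly created) label that stands for $\phi(x)+\phi(y)\in H$. Every answer is consistent with $G=\ZZ_n$: at the end one maps the untouched labels bijectively onto $\ZZ_n\setminus\phi(\text{touched labels})$. In the PS-model the adversary also dictates the outcomes of the random-element draws; since a correct deterministic algorithm must succeed with probability $1$, it suffices to exhibit one bad outcome. The only way this process can break is if $H$ runs out of unused elements, and since each query enlarges the image of $\phi$ by at most three elements, after $q$ queries at most $3q$ elements of $H$ are used, so the adversary is never stuck while $3q<n/2$.

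To conclude: suppose a deterministic algorithm halts after $q<n/10$ queries and outputs $(a_1,\dots,a_t)$; without loss of generality the listed labels are distinct, and if $t$ exceeds $\log_2 n+1$ the output cannot be a basis, so assume $t\le \log_2 n+1$. Then at most $3q+\log_2 n +1<n/2=|H|$ labels occur among the touched ones together with $a_1,\dots,a_t$; the adversary therefore extends $\phi$ so that $a_1,\dots,a_t$ all land in $H$ and then completes it to a bijection onto $\ZZ_n$. In the resulting group — consistent with every answer given — $\langle a_1,\dots,a_t\rangle\subseteq H\subsetneq G$, so $(a_1,\dots,a_t)$ fails even to generate $G$ and the algorithm is wrong. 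Hence any correct deterministic algorithm makes $\Omega(n)=\Omega(|G|)$ queries. The argument runs verbatim in the FS-model, where the algorithm additionally knows $n$ and receives all $n$ labels up front: that information still leaves it unable to decide which untouched label is a generator.

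I do not expect a real obstacle; the points needing care are routine — verifying that the ``keep everything inside $H$'' policy never contradicts a previously returned product (it cannot, since old products involve only already-assigned labels, whose $\phi$-values are never revised) and choosing the constants so $H$ never overflows. The one conceptual check is that the algorithm has no shortcut for certifying that a given element generates $G$ without in effect ruling out its membership in the maximal subgroup $H$; the adversary strategy shows there is none below $\Omega(|G|)$ queries.
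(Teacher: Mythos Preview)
Your argument is correct, and it takes a genuinely different route from the paper's. The paper proves this bound by reduction from a decision problem: it exhibits two non-isomorphic groups $D_1=\ZZ_p^m$ and $D_2=\ZZ_p^{m-2}\times\ZZ_{p^2}$ of the same order and an adversary that confines every accessed element to the common subgroup $W=\ZZ_p^{m-2}\times\{(0,0)\}$, so that after fewer than $|W|$ accesses the algorithm cannot tell which of the two groups it is facing; since a correct basis would reveal the isomorphism type, the basis-finding lower bound follows from the indistinguishability lower bound. Your argument instead works with a \emph{single} group $\ZZ_{2^k}$ and exploits directly that any basis must contain a generator, i.e.\ an element outside the index-$2$ subgroup $H$; the adversary simply keeps every touched label inside $H$ and, at termination, places the output labels there too. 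Your approach is more elementary and self-contained for the basis problem --- no second group, no reduction through isomorphism --- while the paper's construction has the advantage of simultaneously yielding the deterministic lower bounds for isomorphism testing, basis construction, and generator-with-relations from one adversary. Both give $\Omega(|G|)$ with comparable constants.
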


In this paper, we give a randomized algorithm for this problem. We show.
\begin{theorem}
    There exists a randomized algorithm that accesses the Cayley table of an Abelian group $G$ at most $\tilde O(\sqrt{|G|})$ time and finds a basis for $G$.
\end{theorem}
To the best of our knowledge, this result provides the first sublinear algorithm for finding the basis of an Abelian group using only the Cayley table. 

We then establish the following tight lower bound.
\begin{theorem}
    Any randomized algorithm that finds a basis for an Abelian group $G$ must access the elements of the group and its Cayley table at least $\Omega(\sqrt{|G|})$ times.
\end{theorem}
This is the first tight lower bound for the problem.

Furthermore, in the FS-model, we show.

\begin{theorem}
    Any randomized algorithm that finds a basis for an Abelian group $G$ in the FS-model must access the elements of the group and its Cayley table at least $\Omega({|G|^{1/4}})$ times.
\end{theorem}

\color{black}

\section{Our Technique}
In this section, we outline the techniques used in our algorithms, beginning with those applied to establish the upper bounds, followed by the methods for deriving the lower bounds.

\subsection{Generators with Triangular Relation}\label{GwTR}
We first provide an algorithm that, for every Abelian group $G$, runs in time $\tilde O(\sqrt{|G|})$ and constructs a set of generators $A=\{a_1,\ldots,a_t\}$ of size at most $t=\log |G|$ that satisfies {\it triangular relations}. That is, for every $i\in [t]$, there exists a relation of the form $a_i^{k_i}=a_1^{\lambda_{i,1}}\cdots a_{i-1}^{\lambda_{i,i-1}}$, where $k_i\ge 2$.

We present two algorithms. A deterministic algorithm that runs in time $O(|G|)$ and a randomized algorithm that runs in time $\tilde O(\sqrt{|G|})$. In both algorithms, the main idea is to construct a sequence of subgroups $\{e\}< G_1< G_2< \cdots< G_t=G$, where $G_i$ is generated by $\{a_1,\ldots,a_i\}$. To construct $G_{i+1}$, we choose an element $a_{i+1}$ in $G\backslash G_{i}$, find the order $k_{i+1}$ of $a_{i+1}$ in the quotient group $G/G_i$, and construct the elements of all the disjoint cosets of $G_i$ to form $G_{i+1}$ and establish the relation for $a_{i+1}$. 

In the deterministic algorithm, we leverage the fact that elements of different cosets are disjoint, which guarantees that each group element is generated exactly once. As a result, the algorithm enumerates all elements of the group without repetition and runs in time $\tilde{O}(|G|)$.

In the randomized algorithm, we use the following simple idea. Let ${\cal F}$ be a homomorphism ${\cal F}:G\to F$ with a readily computable inverse ${\cal F}^{-1}$. For $a\in G$, to determine ${\cal F}(a)$, we choose a set of $\tilde O(\sqrt{|G|})$ elements $R_1$ in $G$ with known ${\cal F}$ values, then choose another set $R_2$ of $\tilde O(\sqrt{|G|})$ elements in $G$ uniformly at random, also with known ${\cal F}$ values. This can be done by choosing  $f\in F$ uniformly at random, computing $\chi={\cal F}^{-1}(f)$ and then choosing $x\in \chi$ uniformly at random. Since $aG=G$, with high probability, $R_1\cap aR_2$ contains an element $b=ac$, where $b\in R_1$ and $c\in R_2$. Then ${\cal F}(a)={\cal F}(bc^{-1})={\cal F}(b){\cal F}(c)^{-1}$.

We use this technique to find an element $a_{i+1}$ in $G\backslash G_i$, determine its order in the quotient group $G/G_i$, and compute the relation for $a_{i+1}$ in time $\tilde O(\sqrt{|G|})$. 
\color{black}
\color{black}

\subsection{Isomorphism and Basis Construction}
Given an Abelian group $G$, we use the algorithm from the previous section to find a set of generators $A$ with triangular relations in time $\tilde O(\sqrt{|G|})$. Using these relations, ${\cal R}$, we construct a group $\Gamma({\cal R})$ that is isomorphic to $G$, which we call the monomial Abelian group derived from ${\cal R}$. 
This is because the elements of this group are monomials over $t=|A|$ formal elements $x_1,\ldots,x_t$, where multiplication in this group is modulo polynomials corresponds to the relations ${\cal R}$. 
The elements $x_i$, satisfy the same relations as $a_i$ in $G$. 
We also show that the multiplication and inversion operation in $\Gamma({\cal R})$ can be performed in time $poly(t)$. 

Then, one can apply the Smith normal form to convert these generators and relations into a basis. The time complexity of this method is $poly(\log |G|)$~\cite{KannanB79}. 

This addresses both the basis construction problem and the isomorphism problem.

\subsection{Our Technique for the Lower Bounds}
In this section, we present the technique used to establish two foundational lower bounds from which all other lower bounds are derived. Our approach is inspired by the technique introduced in~\cite{GallY13}.

For the lower bound in the FS-model, we first define the class of all Abelian groups, ${\cal G}$ on the ground set $[p^{2m}]$ that are isomorphic to either $H_1={\mathbb{Z}}_{p^2}^m$ or $H_2=\mathbb{Z}_{p^2}^{m-1}\times \mathbb{Z}_p^2$. We then show that any algorithm which, with probability at least $2/3$, determines whether $G\in{\cal G}$ is isomorphic to $H_1$ or $H_2$, must access the elements of $G$ and its Cayley table at least $\Omega(|G|^{1/4})$ times.

We apply Yao's Minimax Principle and show that, for any deterministic algorithm~$A$ that decides whether a given group~$G$ is isomorphic to~$H_1$ or~$H_2$ (by outputting~$1$ or~$2$), if~$A$ is executed on a uniformly random group~$G \in \mathcal{G}$, then with probability at least~$2/3$,~$A$ must access elements of~$G$ and entries of its Cayley table at least~$\Omega(|G|^{1/4})$ times.

To this end, consider such an algorithm ${\cal A}$. We map the elements of the group $G\in {\cal G}$ to abstract elements $G'=\{\sigma_1,\ldots,\sigma_{|G|}\}$ through a bijective function $\phi:G\to G'$ chosen uniformly at random. Under this mapping, $G'$ is isomorphic to $G$ with the product $\sigma_i\sigma_j=\phi(\phi^{-1}(\sigma_i)\phi^{-1}(\sigma_j))$. Next, we map $G'$ to the free $\mathbb{Z}_{p^2}$-module $L=\{\alpha_1x_1+\cdots+\alpha_{w}x_{w}|\alpha_i\in \mathbb{Z}_{p^2},w\in[|G|]\}$ with coefficients in $\mathbb{Z}_{p^2}$, where $\{x_i\}_{i=1}^\infty$ are formal elements. In this mapping, each $\sigma_i$ is mapped to $x_i$. We then run the algorithm ${\cal A}$ on this mapped group, i.e., we run the algorithm and replace each multiplication in the group $G'$ with $+$ in $L$ and each $\sigma_i$ with $x_i$. 

Since in ${\cal A}$ each $\sigma_i$ is replaced with an element in $L$, and comparisons in the IF commands (i.e., the command ``If $x=y$ Then ... Else ...'') involve elements in $L$, the execution follows a single, well-defined path $P$ in the algorithm~${\cal A}$. Assume that this path results in an output of $1$. 

We then show that if the path $P$ contains at most $|G|^{1/4}$ group operations and elements of the group, then, for $\phi$ chosen uniformly at random, running the algorithm on $G'$, with high probability, the logical values of the comparisons in the IF commands in the algorithm ${\cal A}$ are consistent with the path $P$. That is, with high probability, the execution on $G'$ will follow the same path $P$. The intuition here is that both $H_1$ and $H_2$ behave similarly to $L$ under a randomly chosen $\phi$.   
Since this path outputs~$1$, the failure probability of ${\cal A}$ is close to $1/2$. 

In the PS-model, we define the class of all Abelian groups ${\cal G}$ that are isomorphic to either $G_1=\mathbb{Z}_p^m$ or $G_2=\mathbb{Z}_p^{m-1}$. We show that if there exists an algorithm running in time $T$ that, with probability at least $2/3$, distinguishes between these groups, then there is also an algorithm running in time $T$ that, with probability at least $2/3$, distinguishes between $H_1$ and $H_2$. 
This result follows from the fact that $pH_1$ is isomorphic to $G_1$ and $pH_2$ is isomorphic to $G_2$. Consequently, any algorithm for $G_i$ gives an algorithm for $H_i$. 
Hence, we establish that $T=\Omega(|H_1|^{1/4})=\Omega(p^{m/2})=\Omega(\sqrt{|G_1|})$.

For deterministic algorithms, consider the two Abelian groups $D_1=\ZZ_p^m$ and $D_2=\ZZ_p^{m-2}\times \ZZ_{p^2}$. An adversary can provide elements from $W=\ZZ_p^{m-2}\times \{(0,0)\}$ for each access to an element $\sigma_i$. The sum of any two elements in $W$ remains in $W$, preventing the algorithm from distinguishing between the Abelian groups until the $(p^{m-2}+1)$-th element is requested. At this point, the adversary must reveal whether the hidden group is $D_1$ or $D_2$. Thus, the complexity of the algorithm is at least $\Omega(p^{m-2})=\Omega(n)$ for any constant $p$. 

Using these two lower bounds, we get the lower bounds for isomorphism testing and basis construction.

\section{Abelian Groups and Preliminary Results}
\color{black} 
In this section, we provide definitions and preliminary results that will be used to prove the main results.
\subsection{Basic Facts in Groups}
Let $(G,\cdot)$ be a finite Abelian (commutative) group. We denote by $e$ the identity element of $G$. For two sets $A,B\subseteq G$ we define $AB=\{ab|a\in A,b\in B\}$. For a singleton $A=\{a\}$ or $B=\{b\}$, we simply write $aB$ and $Ab$, respectively. 
When $H$ is a subgroup of $G$, we write $H\le G$.

For a nonempty subset $A$ of $G$, 
$\langle A\rangle:=\{a_1\cdots a_k|a_i\in A,k\in\mathbb{N}\}$
is the subgroup {\it generated} by~$A$. If $\langle A\rangle =G$, then we say that $A$ is a {\it set of generators} of $G$. We also write $\langle A_1,A_2,\ldots,A_t\rangle$ for $\left\langle \cup_{i=1}^tA_i\right\rangle$, and replace $A_i$ with $a_i$ when $A_i=\{a_i\}$ is a singleton.

Let $H\le G$ be a subgroup of $G$. For $a\in G$, we call $Ha$ a {\it coset}. The following are known results and can be easily proved.
\begin{lemma}
    Let $H\le G$. Then
    \begin{enumerate}
        \item $|H|$ divides $|G|$.
        \item For every $a\in G$, we have $|aH|=|H|$.
        \item $a\in H$ if and only if $aH=H$.
        \item If $a\not\in H$ then $aH\cap H=\emptyset$.
        \item $aH=bH$ if and only if $a^{-1}b\in H$.
    \end{enumerate}
\end{lemma}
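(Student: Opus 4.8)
The plan is to prove the five items in an order that lets each one build on the previous, starting from the bijection underlying item~2 and ending with the counting argument for item~1 (Lagrange's theorem). Throughout I use only that $G$ is a group (so cancellation holds) and that $H$ is closed under the operation and under inverses, together with the standing assumption that $G$ is finite.

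First I would prove item~2. Fix $a\in G$ and consider the map $\lambda_a\colon H\to aH$ given by $\lambda_a(h)=ah$. It is surjective by the definition of $aH$, and it is injective because $ah=ah'$ implies $h=h'$ upon multiplying on the left by $a^{-1}$. Hence $|aH|=|H|$. Next, item~3: if $a\in H$ then $aH\subseteq H$ since $H$ is closed under the operation, and $H\subseteq aH$ because every $h\in H$ can be written $h=a(a^{-1}h)$ with $a^{-1}h\in H$; thus $aH=H$. Conversely, if $aH=H$ then $a=ae\in aH=H$.

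Item~5 then follows quickly: $aH=bH$ holds iff $a^{-1}(aH)=a^{-1}(bH)$, i.e.\ iff $H=(a^{-1}b)H$, which by item~3 holds iff $a^{-1}b\in H$. Item~4 is a special case read contrapositively: if $aH\cap H\neq\emptyset$, pick $x=ah=h'$ with $h,h'\in H$; then $a=h'h^{-1}\in H$, so $a\notin H$ forces $aH\cap H=\emptyset$. Finally, for item~1 I would use items~2,~4, and~5 to see that the distinct cosets of $H$ partition $G$: define $a\sim b$ iff $a^{-1}b\in H$, which is an equivalence relation (reflexive as $e\in H$, symmetric as $H$ is closed under inverses, transitive as $H$ is closed under the operation), and by item~5 its classes are exactly the distinct cosets $aH$. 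Since $G$ is finite there are finitely many classes, say $r$ of them, each of size $|H|$ by item~2, so $|G|=r\,|H|$ and $|H|$ divides $|G|$.

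All steps are routine and there is no genuine obstacle; the only points worth flagging are that item~1 relies on the finiteness of $G$, and that one must check the cosets truly \emph{partition} $G$ rather than merely cover it — this is the content of items~4 and~5, which is why item~1 is proved last.
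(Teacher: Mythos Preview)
Your proof is correct and entirely standard. The paper does not actually supply a proof of this lemma; it simply states that these ``are known results and can be easily proved,'' so there is nothing to compare against and your argument fills the gap appropriately.
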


We now prove.
\begin{lemma}\label{bHbellH}
    Let $H\le G$ and $b\in G$. Let $\ell$ be the minimal integer such that $b^\ell\in H$. Then 
    \begin{enumerate}
        \item\label{BBB1} $\langle b,H\rangle=\langle b\rangle H=\{e,b,\ldots,b^{\ell-1}\}H$.
        \item\label{BBB2} The cosets $H,bH,\ldots,b^\ell H$ are pairwise disjoint.
        \item\label{BBB3} $\ell$ divides $|G|/|H|$.
        \item\label{BBB4} If $b^w=e$, then $\ell|w$.
    \end{enumerate}
\end{lemma}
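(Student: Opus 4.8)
The plan is to prove the four items of Lemma~\ref{bHbellH} in order, since each builds on the previous one. Throughout, let $\ell$ be the minimal positive integer with $b^\ell\in H$; such an $\ell$ exists because $b^{|G|}=e\in H$.

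\textbf{Item \ref{BBB1}.} I would first observe $\langle b\rangle H\subseteq \langle b,H\rangle$ trivially, and conversely any product of elements of $\{b\}\cup H$ can be rearranged (using commutativity) into the form $b^k h$ with $h\in H$, so $\langle b,H\rangle\subseteq \langle b\rangle H$; thus $\langle b,H\rangle=\langle b\rangle H$. For the second equality, I claim $\langle b\rangle=\{e,b,\ldots,b^{\ell-1}\}H$ modulo absorbing powers into $H$: given any $b^k$, write $k=q\ell+r$ with $0\le r<\ell$; then $b^k=(b^\ell)^q b^r\in H b^r$, so $b^k H = b^r H$ with $r\in\{0,\ldots,\ell-1\}$. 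Hence $\langle b\rangle H=\bigcup_{r=0}^{\ell-1} b^r H=\{e,b,\ldots,b^{\ell-1}\}H$.

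\textbf{Item \ref{BBB2}.} I would show the $\ell$ cosets $H,bH,\ldots,b^{\ell-1}H$ are pairwise disjoint (note the statement writes $b^\ell H$ but $b^\ell H=H$, so the genuinely distinct ones are the first $\ell$). Suppose $b^i H=b^j H$ with $0\le i\le j<\ell$; by item~5 of the previous lemma this means $b^{j-i}\in H$ with $0\le j-i<\ell$, forcing $j-i=0$ by minimality of $\ell$. Disjointness then follows from item~4 of the previous lemma (distinct cosets of a subgroup are disjoint), or directly from the computation above.

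\textbf{Items \ref{BBB3} and \ref{BBB4}.} For item~\ref{BBB3}, combining items~\ref{BBB1} and~\ref{BBB2} gives $|\langle b,H\rangle| = \ell\,|H|$ (a disjoint union of $\ell$ cosets each of size $|H|$ by item~2 of the previous lemma), and since $\langle b,H\rangle\le G$, item~1 of the previous lemma gives $\ell\,|H|$ divides $|G|$, i.e.\ $\ell$ divides $|G|/|H|$. For item~\ref{BBB4}, suppose $b^w=e$; write $w=q\ell+r$ with $0\le r<\ell$, so $b^r=b^w (b^\ell)^{-q}\in H$, and minimality of $\ell$ forces $r=0$, hence $\ell\mid w$. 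I expect the only mild subtlety is bookkeeping the index range in item~\ref{BBB2} (whether to list $\ell$ or $\ell+1$ cosets), and confirming that division-with-remainder arguments apply verbatim in the additive-exponent sense; none of this is a real obstacle.
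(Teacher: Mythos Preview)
Your proposal is correct and follows essentially the same approach as the paper: commutativity for $\langle b,H\rangle=\langle b\rangle H$, division with remainder to reduce exponents modulo $\ell$, minimality of $\ell$ for disjointness, and Lagrange for divisibility. Your remark about the index range in item~\ref{BBB2} is apt (the paper's statement has the same off-by-one, and its proof likewise treats only $0\le j<i<\ell$); the only cosmetic difference is that for item~\ref{BBB4} the paper argues via $b^wH=b^{w\bmod\ell}H$ and invokes item~\ref{BBB2}, whereas you argue directly that $b^r\in H$ forces $r=0$.
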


\begin{proof}
    $\langle b,H\rangle=\langle b\rangle H$ follows from the fact that $G$ is Abelian. Let $b^\ell=h\in H$. Then $b^iH=b^{(i\mod \ell)}h^{\lfloor i/\ell\rfloor}H=b^{(i\mod \ell)}H$.  Therefore, $\langle b\rangle H=\{e,b,\ldots,b^{\ell-1}\}H$. This proves item~\ref{BBB1}. 
    
    If, to the contrary, there are $\ell>i> j\ge 0$ such that $b^iH=b^jH$, then $b^{i-j}\in H$ and, since $0<i-j<\ell$, a contradiction to the minimality of $\ell$. Therefore, $b^{i-j}\not\in H$ and $b^iH\cap b^jH=\emptyset$. This proves item~\ref{BBB2}. 

    Since $\langle b,H\rangle\le G$ we have $|\langle b,H\rangle|=\ell |H|$ which divides $|G|$, and therefore, $\ell$ divides $|G|/|H|$. This proves item~\ref{BBB3}.

    If $b^w=e$, then $H=b^wH=b^{w\mod \ell}H$, and therefore $w\mod \ell=0$. This proves item~\ref{BBB4}.
\end{proof}

\subsection{The Monomial Abelian Group}
Let $n\in\mathbb{N}$, $n\ge 2$, and $K=(\kappa_1,\ldots,\kappa_t)\in \mathbb{N}^t$ such that $\kappa_i\ge 2$ and $\kappa_1\kappa_2\cdots \kappa_t=n$. Let $0\le \ell_{i,j}\le \kappa_i-1$ for $i=2,\ldots,t$ and $j=1,\ldots,i-1$ and let $L=(\ell_{i,j})$. We define 
   $$\Gamma(K,L)=\langle x_1,x_2,\ldots,x_t\ |\ x_i^{\kappa_i}= x_1^{\ell_{i,1}}\cdots x_{i-1}^{\ell_{i,i-1}},i=2,\ldots,t\ ;\ x_1^{\kappa_1}=1\rangle$$
the set of all monomials over the variables\footnote{Monomials over $\mathbb{R}$, so in particular, $x_ix_j=x_jx_i$.} $x_1,\ldots,x_t$ of the form $x_1^{j_1}x_2^{j_2}\cdots x_t^{j_t}$, where $0\le j_i\le \kappa_i-1$ for all $i\in [t]$, with multiplication modulo $x_i^{\kappa_i}- x_1^{\ell_{i,1}}\cdots x_{i-1}^{\ell_{i,i-1}}$ for all $i=2,\ldots,t$ and $ x_1^{\kappa_1}-1$. 
In this group, $\{x_1,\ldots,x_t\}$ is a set of generators of the group with the relations $x_i^{\kappa_i}= x_1^{\ell_{i,1}}\cdots x_{i-1}^{\ell_{i,i-1}}$ for all $i=2,\ldots,t$. Such relations are called {\it triangular relations}. 

This group is a subset of the quotient ring (or factor ring) $${\mathbb{R}}[x_1,x_2,\ldots,x_t]/(x_i^{\kappa_i}- x_1^{\ell_{i,1}}\cdots x_{i-1}^{\ell_{i,i-1}}, i=2,\ldots,t\ ;\ x_1^{\kappa_1}-1)$$ of the polynomial ring modulo the ideal generated by $x_i^{\kappa_i}- x_1^{\ell_{i,1}}\cdots x_{i-1}^{\ell_{i,i-1}},i=2,\ldots,t$ and $x_1^{\kappa_1}-1$.

\noindent
{\bf Example:} Let $n=36$, $K=(4,3,3)$, and 
$L=(\ell_{2,1},\ell_{3,1},\ell_{3,2})=(3,2,1).$ Then
$$\Gamma(K,L)=\langle x_1,x_2,x_3\ |\ x_1^4=1;\ x_2^3=x_1^3;\ x_3^3=x_1^2x_2\rangle.$$
The multiplication of $x_1^2x_2^2x_3^2$ with $x_1^3x_2x_3^2$ is
$$(x_1^2x_2^2x_3^2)(x_1^3x_2x_3^2)=x_1^5x_2^3x_3^4=x_1x_1^3(x_1^2x_2x_3)=x_1^2x_2x_3.$$

The following lemma shows that for any $K$ and $L$ with the above constraints, the set $\Gamma(K,L)$ is an Abelian group. 
\begin{lemma}\label{GamisAb}
   Let $\kappa_1,\ldots,\kappa_t\in \mathbb{N}$ such that $\kappa_i\ge 2$ and $\kappa_1\kappa_2\cdots \kappa_t=n$. Let $0\le \ell_{i,j}\le \kappa_j-1$ for $i=2,\ldots,t$ and $j=1,\ldots,i-1$. Then, for $K=(\kappa_i)$ and $L=(\ell_{i,j})$,
   $\Gamma(K,L)$
   is an Abelian group of order $|\Gamma(K,L)|=n$.
\end{lemma}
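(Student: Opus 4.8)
The plan is to prove that $\Gamma(K,L)$ is a well-defined abelian group of order exactly $n$ by exhibiting a canonical normal form for its elements and verifying that multiplication respects it. The key point is that every monomial $x_1^{j_1}\cdots x_t^{j_t}$ can be reduced to one with $0\le j_i\le \kappa_i-1$ for all $i$: whenever some exponent $j_i\ge \kappa_i$, we rewrite $x_i^{\kappa_i}$ using the relation $x_i^{\kappa_i}=x_1^{\ell_{i,1}}\cdots x_{i-1}^{\ell_{i,i-1}}$ (for $i\ge 2$) or $x_1^{\kappa_1}=1$ (for $i=1$), which strictly decreases the exponent of $x_i$ while only affecting exponents of $x_j$ with $j<i$. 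Processing the indices in decreasing order $i=t,t-1,\ldots,1$, this terminates and produces a representative with all exponents in range. So there are at most $\kappa_1\kappa_2\cdots\kappa_t=n$ elements, and closure, associativity, commutativity, and the identity $1=x_1^0\cdots x_t^0$ are all inherited from the ambient commutative quotient ring $\mathbb{R}[x_1,\ldots,x_t]/I$, where $I$ is the stated ideal (noting that $\Gamma(K,L)$ is closed under the ring multiplication by the reduction argument just given, and each generator $x_i$ is a unit since a high enough power of it equals a product of lower-indexed units, bottoming out at $x_1^{\kappa_1}=1$).

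The substantive part is the lower bound $|\Gamma(K,L)|\ge n$, i.e. that the $n$ candidate normal forms are pairwise distinct in the quotient ring — equivalently, that no nontrivial $\mathbb{R}$-linear combination of distinct in-range monomials lies in the ideal $I$. The cleanest route is to build an explicit abelian group $\widehat\Gamma$ of order $n$ together with a surjective ring-compatible map from the monomials onto it, forcing the $n$ normal forms to be distinct. Concretely, I would take $\widehat\Gamma$ to be the set $\{(j_1,\ldots,j_t): 0\le j_i\le \kappa_i-1\}$ (which has exactly $n$ elements) and define a multiplication on it by: multiply coordinatewise as integers, then reduce coordinate $t$ modulo $\kappa_t$ carrying the "overflow" $\lfloor \cdot/\kappa_t\rfloor$ copies of the tuple $(\ell_{t,1},\ldots,\ell_{t,t-1},0)$ down into the lower coordinates, then reduce coordinate $t-1$ similarly, and so on down to coordinate $1$ which is reduced mod $\kappa_1$ (with overflow discarded). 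This is exactly the monomial reduction procedure transported to tuples. One then checks the group axioms for $\widehat\Gamma$ directly — or, more slickly, observes that the assignment $x_i\mapsto (0,\ldots,0,1,0,\ldots,0)$ (the $i$-th unit tuple) extends to a well-defined ring homomorphism from $\mathbb{R}[x_1,\ldots,x_t]/I$ to the group algebra $\mathbb{R}[\widehat\Gamma]$ (well-definedness amounts to checking the generators of $I$ map to zero, which is immediate from how $\widehat\Gamma$'s multiplication was defined), and under this homomorphism the $n$ normal-form monomials map to the $n$ distinct group elements of $\widehat\Gamma$, hence are linearly independent in the group algebra and in particular distinct.

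The main obstacle I anticipate is verifying associativity of the multiplication on $\widehat\Gamma$ (equivalently, that the reduction/carry procedure is confluent — order-independent). This is the one place where the triangular structure of the relations is essential: because the relation for $x_i^{\kappa_i}$ involves only $x_1,\ldots,x_{i-1}$, the reductions can be performed from the top index down with no feedback loops, so the procedure is Noetherian and locally confluent in a controlled way. I would argue this either by a direct induction on $t$ (split off the top coordinate: $\widehat\Gamma$ is, as a set, $\widehat\Gamma' \times \mathbb{Z}_{\kappa_t}$ where $\widehat\Gamma'$ is the group for $(\kappa_1,\ldots,\kappa_{t-1})$ with the truncated $L$, and the multiplication is a kind of crossed/twisted product), or by appealing directly to the ring-homomorphism formulation above where associativity in the image is automatic since $\mathbb{R}[\widehat\Gamma]$ only needs to be defined after we know $\widehat\Gamma$ is a group — so in practice the induction is unavoidable and is the heart of the argument. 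Once associativity of $\widehat\Gamma$ is in hand, everything else (identity, inverses — each generator has finite order by item~\ref{BBB4}-type reasoning, the homomorphism is onto, the counts match) follows routinely, and combining the upper bound $\le n$ from reduction with the lower bound $\ge n$ from the surjection onto $\widehat\Gamma$ gives $|\Gamma(K,L)|=n$.
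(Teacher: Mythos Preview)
Your proposal is correct but considerably more elaborate than the paper's own proof. The paper's argument is brief: it asserts without justification that the $n$ normal-form monomials are distinct (so $|\Gamma|=n$), inherits commutativity and associativity from the ambient quotient ring, and then focuses entirely on inverses, proving by a clean induction that $x_i^{\kappa_1\kappa_2\cdots\kappa_i}=1$ for each $i$ (whence $x_i^{-1}=x_i^{\kappa_1\cdots\kappa_i-1}$). What you single out as the ``substantive part'' --- distinctness of the $n$ normal forms in the quotient ring, equivalently associativity/confluence of the reduction procedure on tuples --- the paper treats as evident and does not argue. Your plan to establish it by building an auxiliary group $\widehat\Gamma$ via induction on $t$ is sound, though note that $\widehat\Gamma$ is nothing other than $\Gamma$ viewed as exponent tuples, so the induction is really a direct verification that the tuple multiplication is associative; a slicker phrasing of the same induction is to observe that adjoining $x_i$ subject to a \emph{monic} degree-$\kappa_i$ relation over the ring already built yields a free module of rank $\kappa_i$, so the full quotient has $\mathbb{R}$-dimension exactly $n$ with the normal-form monomials as a basis. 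In short: you fill a gap the paper leaves implicit, while the paper supplies the explicit inverse formula $x_i^{\kappa_1\cdots\kappa_i}=1$ that you only mention in passing.
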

\begin{proof}
    First, it is clear that $\Gamma$ is closed under product, and the elements of $\Gamma$ are $x_1^{r_1}\cdots x_t^{r_t}$ for $0\le r_i\le \kappa_i-1$, and therefore $|\Gamma|=\kappa_1\kappa_2\cdots \kappa_t=n$. Since $\Gamma$ is a subset of the quotient ring 
    $${\mathbb{R}}[x_1,x_2,\ldots,x_t]/(x_i^{\kappa_i}- x_1^{\ell_{i,1}}\cdots x_{i-1}^{\ell_{i,i-1}},i=2,\ldots,t\ ;\ x_1^{\kappa_1}-1)$$ and is closed under product, $\Gamma$ is commutative and associative with the unit element $x_1^0x_2^0\cdots x_t^0=1$. Therefore, $\Gamma$ is an Abelian monoid\footnote{Abelian Group but without the requirement for each element to have an inverse.}. It remains to show that every element in $\Gamma$ has an inverse. 

    It is enough to show that every $x_i$ has an inverse. We prove by induction that $x_i^{\kappa_1\kappa_2\cdots \kappa_i}=1$, and therefore $x_i^{\kappa_1\kappa_2\cdots \kappa_i-1}$ is the inverse of $x_i$. First, by the definition of $\Gamma$, we have $x_1^{\kappa_1}=1$. Suppose this is true for $x_1,\ldots, x_{i-1}$. Then
    \begin{eqnarray*}
       x_i^{\kappa_1\kappa_2\cdots \kappa_i}&=&(x_i^{\kappa_i})^{\kappa_1\kappa_2\cdots \kappa_{i-1}}=\left(x_1^{\ell_{i,1}}\cdots x_{i-1}^{\ell_{i,i-1}}\right)^{\kappa_1\kappa_2\cdots \kappa_{i-1}}\\
    &=&(x_1^{\kappa_1})^{\ell_{i,1}\kappa_2\cdots \kappa_{i-1}}(x_2^{\kappa_1\kappa_2})^{\ell_{i,2}\kappa_3\cdots \kappa_{i-1}}\cdots 
    (x_{i-1}^{\kappa_1\kappa_2\cdots \kappa_{i-1}})^{\ell_{i,i-1}}=1. 
    \end{eqnarray*}
\end{proof}

We will call $\Gamma(K,L)$ the {\it monomial group generated by $K$ and $L$}.

We now prove that $\Gamma=\Gamma(K,L)$ is isomorphic to a group $G$ with a set of generators that satisfies similar relations. 
\begin{lemma}\label{IsoGamG}
    Let $G$ be an Abelian group and $A=\{a_1,\ldots,a_t\}$ be a set of generators that satisfy the triangular relations:
    $$a_1^{k_i}=1;\ \  a_i^{k_i}=a_1^{\lambda_{i,1}}\cdots a_{i-1}^{\lambda_{i,i-1}}$$
    for $i=2,\ldots,t$ where $k_i\ge 2$ is the smallest integer such that, for all $i\in [t]$,
    $$a_i^{k_i}\in\langle a_1,\ldots,a_{i-1}\rangle.$$
    Then, $G$ is isomorphic to $\Gamma(K,L)$, where $K=(k_{i})$ and $L=(\lambda_{i,j})$, with the isomorphism
    $$\Psi(x_1^{j_1}\cdots x_t^{j_t})=a_1^{j_1}\cdots a_t^{j_t}$$ for every $0\le j_i\le k_i-1$ and $i\in [t]$.
\end{lemma}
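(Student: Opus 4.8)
The plan is to show that $\Psi$ is a well-defined surjective homomorphism from $\Gamma:=\Gamma(K,L)$ onto $G$, and then to conclude that it is an isomorphism by a counting argument: since $|\Gamma|=n:=k_1k_2\cdots k_t$ by Lemma~\ref{GamisAb}, it suffices to prove $|G|=n$, after which any surjection between finite sets of equal cardinality is automatically a bijection. Note first that $\Psi$ is a genuine function: its domain is the set of reduced monomials $x_1^{j_1}\cdots x_t^{j_t}$ with $0\le j_i\le k_i-1$, which are exactly the elements of $\Gamma$, and the formula assigns to each such monomial the element $a_1^{j_1}\cdots a_t^{j_t}\in G$. (We may also assume $0\le\lambda_{i,j}\le k_j-1$, since otherwise each $a_j^{\lambda_{i,j}}$ can be rewritten with exponent reduced modulo $k_j$ at the cost of terms in the lower-indexed $a_1,\dots,a_{j-1}$, so that $L=(\lambda_{i,j})$ satisfies the hypotheses in the definition of $\Gamma(K,L)$.)

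First I would compute $|G|$. Put $G_0=\{e\}$ and $G_i=\langle a_1,\ldots,a_i\rangle$ for $i\in[t]$, so $G_t=G$. Each $k_i$ is well defined and finite: $G$ is finite, so $a_i$ has finite order $o_i$ with $a_i^{o_i}=e\in G_{i-1}$, hence the set of positive integers $m$ with $a_i^m\in G_{i-1}$ is nonempty and $k_i\ge 2$ is its minimum. Since $G$ is abelian, $G_i=\langle a_i\rangle G_{i-1}=\langle a_i,G_{i-1}\rangle$. Applying Lemma~\ref{bHbellH} with $H=G_{i-1}$, $b=a_i$ and $\ell=k_i$ (here $k_i$ is precisely the minimal integer with $a_i^{k_i}\in G_{i-1}$), items~\ref{BBB1} and~\ref{BBB2} give that $G_i$ is the disjoint union of the $k_i$ cosets $G_{i-1},a_iG_{i-1},\ldots,a_i^{k_i-1}G_{i-1}$, so $|G_i|=k_i|G_{i-1}|$. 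By induction $|G|=|G_t|=k_1k_2\cdots k_t=n=|\Gamma|$.

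Next I would show that $\Psi$ is a homomorphism. Introduce the evaluation map $\eta$ sending an arbitrary monomial $x_1^{m_1}\cdots x_t^{m_t}$ with $m_i\ge 0$ to $a_1^{m_1}\cdots a_t^{m_t}\in G$; this is multiplicative, i.e.\ $\eta(\mu\nu)=\eta(\mu)\eta(\nu)$ for all monomials $\mu,\nu$. By definition, the product in $\Gamma$ of two reduced monomials $u,v$ is obtained from the monomial product $uv$ by repeatedly applying the rewriting steps $x_i^{k_i}\rightsquigarrow x_1^{\lambda_{i,1}}\cdots x_{i-1}^{\lambda_{i,i-1}}$ for $i\ge 2$ and $x_1^{k_1}\rightsquigarrow 1$ until a reduced monomial $w$ is reached. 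Each such rewriting step leaves the value of $\eta$ unchanged, precisely because the corresponding identities $a_i^{k_i}=a_1^{\lambda_{i,1}}\cdots a_{i-1}^{\lambda_{i,i-1}}$ and $a_1^{k_1}=e$ hold in $G$ by hypothesis. Hence $\Psi(u\cdot_\Gamma v)=\eta(w)=\eta(uv)=\eta(u)\eta(v)=\Psi(u)\Psi(v)$. Consequently $\Psi(\Gamma)$ is a subgroup of $G$ containing $\Psi(x_i)=a_i$ for every $i$; since $A=\{a_1,\ldots,a_t\}$ generates $G$, this image is all of $G$, so $\Psi$ is onto. A surjection between finite sets of the same size $n$ is a bijection, so $\Psi$ is a bijective homomorphism, i.e.\ an isomorphism.

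The main obstacle is the second step, pinning down $|G|=n$: here one genuinely needs the minimality of the exponents $k_i$ (not merely the validity of the triangular relations) together with Lemma~\ref{bHbellH}, so that the cosets counted at each stage are truly disjoint. The homomorphism step is conceptually routine but requires a little care about the rewriting process; an alternative that sidesteps any concern about confluence of that rewriting is to present both $G$ and $\Gamma$ as quotients of the free abelian group on $t$ generators by the subgroup $N$ generated by the triangular relations, note that $N$ lies in both kernels, and then combine $|G|=|\Gamma|=n$ with the easy normal-form bound $|(\text{free abelian group})/N|\le n$ to force both quotient maps to be isomorphisms identifying the two groups via exactly the formula defining $\Psi$.
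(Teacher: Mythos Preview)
Your proof is correct, but it follows a different route from the paper's. The paper proves injectivity of $\Psi$ directly: assuming $a_1^{j_1}\cdots a_t^{j_t}=a_1^{j'_1}\cdots a_t^{j'_t}$ with $(j_i)\neq(j'_i)$, it looks at the largest index $\ell$ where $j_\ell\neq j'_\ell$ and obtains $a_\ell^{j_\ell-j'_\ell}\in\langle a_1,\ldots,a_{\ell-1}\rangle$ with $0<|j_\ell-j'_\ell|<k_\ell$, contradicting the minimality of $k_\ell$. You instead establish $|G|=k_1\cdots k_t=|\Gamma|$ by iterating Lemma~\ref{bHbellH} to get $|G_i|=k_i|G_{i-1}|$, and then deduce bijectivity from surjectivity plus equal cardinality. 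Both arguments ultimately spend the minimality hypothesis on $k_i$, just in different places: the paper uses it once in the injectivity contradiction, you use it to make the coset decomposition disjoint. Your route has the advantage of reusing Lemma~\ref{bHbellH} rather than reproving its content, and it makes $|G|=\prod k_i$ explicit; the paper's route is slightly more self-contained and yields the unique-representation statement (Lemma~\ref{ProAbb}, item~\ref{exept3}) as a direct byproduct of injectivity. For the homomorphism step, your rewriting-preserves-$\eta$ argument is essentially a repackaging of the paper's appendix Lemma~\ref{PHIPHI}, which proves the same thing by induction on the number of variables.
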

\begin{proof}
    We first prove that $\Psi$ is a homomorphism and then show that it is injective and subjective. 
    
    {\bf Homomorphism}: By Lemma~\ref{PHIPHI} in the appendix, we show that $\Psi(x_1^{\ell_1}\cdots x_t^{\ell_t})=a^{\ell_1}\cdots a^{\ell_t}$ for all $\ell_i\in \mathbb{N}$, $i\in [t]$ and therefore (Here and throughout this proof $0\le j_i,j_i'\le k_i-1$)
    $$\Psi(x_1^{j_1}\cdots x_t^{j_t}\cdot x_1^{j'_1}\cdots x_t^{j'_t})=\Psi(x_1^{j_1+j_2}\cdots x_t^{j_t+j_t})=a_1^{j_1+j_2}\cdots a_t^{j_t+j_t}=\Psi(x_1^{j_1}\cdots x_t^{j_t})\cdot \Psi(x_1^{j'_1}\cdots x_t^{j'_t}).$$
    
    {\bf Injectivity}: If $\Psi(x_1^{j_1}\cdots x_t^{j_t})=\Psi(x_1^{j'_1}\cdots x_t^{j'_t})$, then $a_1^{j_1}\cdots a_t^{j_t}=a_1^{j'_1}\cdots a_t^{j'_t}$. Suppose, to the contrary that $(j_1,\ldots,j_t)$ $\not=(j_1',\ldots,j_t')$, and let $\ell$ be such that $(j_t,j_{t-1},\ldots,j_{\ell+1})=(j'_t,j'_{t-1},\ldots,j'_{\ell+1})$ but $j_\ell\not= j_\ell'$. Assuming without loss of generality that $j_\ell>j_\ell'$, we have
    $$a_\ell^{j_\ell-j_\ell'}=a_1^{j_1'-j_1}\cdots a_{\ell-1}^{j_{\ell-1}'-j_{\ell-1}}\in \langle a_1,\cdots,a_{\ell-1}\rangle.$$ Since $0<j_\ell-j_\ell'<k_\ell$, this leads to a contradiction.  Therefore, $(j_1,\ldots,j_t)=(j_1',\ldots,j_t')$, and hence $x_1^{j_1}\cdots x_t^{j_t}=x_1^{j'_1}\cdots x_t^{j'_t}$.

    {\bf Surjectivity}: Since $A$ is a set of generators, every element $a$ in $G$ can be represented as $a_1^{\alpha_1}\cdots a_t^{\alpha_t}$. By the proof in the appendix, the element $x=x_1^{\alpha_1}\cdots x_t^{\alpha_t}\in \Gamma(K,L)$ satisfies~$\Psi(x)=a$.
\end{proof}

The next lemma shows that multiplying two elements in $\Gamma$ and finding an inverse can be done in poly$(\log |\Gamma|)$ time.
\begin{lemma} \label{MulInv} 
    The multiplication of two elements and finding the inverse of an element in $\Gamma(K,L)$, where $K=(\kappa_i)$ and $L=(\ell_{i,j})$, can be computed in time $\tilde O(\log^2 n)$ where $n=|\Gamma(K,L)|$. 
\end{lemma}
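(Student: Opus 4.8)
The plan is to represent each element of $\Gamma(K,L)$ by its exponent vector $(j_1,\ldots,j_t)$ with $0\le j_i\le \kappa_i-1$, and to describe multiplication and inversion as operations on these vectors modulo the triangular relations. Note first that $t\le \log_2 n$ since each $\kappa_i\ge 2$, so a factor of $\mathrm{poly}(t)$ is a factor of $\mathrm{poly}(\log n)$; also each exponent $j_i<\kappa_i\le n$, so it fits in $O(\log n)$ bits, and each $\ell_{i,j}<\kappa_j\le n$.

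For multiplication, given $x_1^{j_1}\cdots x_t^{j_t}$ and $x_1^{j_1'}\cdots x_t^{j_t'}$, I would form the naive product $x_1^{j_1+j_1'}\cdots x_t^{j_t+j_t'}$ and then reduce the exponents into range, processing the indices from $i=t$ down to $i=1$. Concretely, maintaining a current exponent vector $(r_1,\ldots,r_t)$ (initially $r_i=j_i+j_i'\le 2\kappa_i-2$), for $i$ from $t$ down to $2$: write $r_i=q_i\kappa_i+s_i$ with $0\le s_i\le \kappa_i-1$ and $q_i=\lfloor r_i/\kappa_i\rfloor$ (here $q_i\in\{0,1\}$ on the first pass, but possibly larger after earlier updates), set $r_i\gets s_i$, and then update $r_j\gets r_j+q_i\ell_{i,j}$ for each $j<i$ using the relation $x_i^{\kappa_i}=x_1^{\ell_{i,1}}\cdots x_{i-1}^{\ell_{i,i-1}}$; finally reduce $r_1$ modulo $\kappa_1$ using $x_1^{\kappa_1}=1$. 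A subtlety is bounding the intermediate exponents so the arithmetic stays cheap: after the step handling index $i$, the new contribution to $r_j$ is $q_i\ell_{i,j}$; since $\Gamma$ is a group of order $n$, the total "unreduced" element still represents a genuine element of the group, so I would argue by a direct bound that after the full sweep every $r_j$ is brought to its canonical range, and that at each step $q_i\le r_i/\kappa_i$ with $r_i$ bounded by a polynomial in $\max_j\kappa_j\le n$ accumulated over at most $t$ additions — hence every quantity manipulated has $O(t\log n)=O(\log^2 n)$ bits, and there are $O(t^2)$ additions/multiplications/divisions on such numbers, giving $\tilde O(\log^2 n)$ bit-operations total (using fast integer arithmetic for the final $\tilde O$).

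For inversion, I would use the identity established in the proof of Lemma~\ref{GamisAb}, namely $x_i^{\kappa_1\kappa_2\cdots\kappa_i}=1$, so $x_i^{-1}=x_i^{\,\kappa_1\cdots\kappa_i-1}$; more simply, the inverse of $y=x_1^{j_1}\cdots x_t^{j_t}$ is $y^{\,n-1}$ since $|\Gamma|=n$, and this can be computed by repeated squaring in $O(\log n)$ multiplications, each costing $\tilde O(\log^2 n)$ by the first part — but that would give $\tilde O(\log^3 n)$, so to hit $\tilde O(\log^2 n)$ I would instead invert coordinate-by-coordinate: compute $x_i^{-1}$ as the canonical form of $x_i^{\kappa_1\cdots\kappa_i-1}$ (which itself is one reduction as above), and multiply these $t$ canonical vectors together; organizing the product left to right costs $t$ multiplications, again $\tilde O(t\cdot\log^2 n)$. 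Alternatively, and most cleanly, observe that reducing a single monomial $x_i^{m}$ to canonical form for $m<\kappa_1\cdots\kappa_i\le n$ is just the multiplication routine applied to $x_i$ repeatedly via repeated squaring within the single variable, $O(\log n)$ steps, and then inversion is $t$ such reductions plus $t$ multiplications.

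The main obstacle I anticipate is the bookkeeping in the reduction step: making the "process $i$ from $t$ down to $1$" argument rigorous, in particular proving that one downward sweep suffices to reach the canonical range (it does, because substituting the relation for $x_i^{\kappa_i}$ only ever increases exponents of $x_j$ with $j<i$, so once index $i$ is finalized it is never touched again) and that all intermediate exponents stay $\mathrm{poly}(n)$ so that every arithmetic operation is on $O(\log n)$-bit (equivalently $O(\log^2 n)$ counting the accumulation) integers. Once that invariant is pinned down, the count of $O(t^2)$ word-operations on $O(\log n)$-word numbers, i.e.\ $\tilde O(\log^2 n)$ bit-operations, follows immediately, and inversion reduces to the multiplication subroutine with only an $O(t)=O(\log n)$ overhead.
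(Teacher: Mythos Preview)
Your multiplication algorithm is exactly the paper's: one downward sweep $i=t,t-1,\ldots,1$, replacing $x_i^{r_i}$ by $x_i^{r_i\bmod\kappa_i}$ and pushing $\lfloor r_i/\kappa_i\rfloor\cdot\ell_{i,j}$ into the exponent of each $x_j$ with $j<i$. You also correctly identify the only real issue, namely bounding the intermediate exponents. However, your bound is too loose and your arithmetic does not close: $O(t^2)$ operations on $O(t\log n)=O(\log^2 n)$-bit integers is $\tilde O(\log^4 n)$ bit-operations, not $\tilde O(\log^2 n)$. The paper fixes this by proving the sharper invariant that after handling index $t-j$ every remaining exponent $r_i$ is below $(2^{j+2}+1)\kappa_i$; since $t\le\log_2 n$ this is at most $O(n)\cdot\kappa_i\le O(n^2)$, so every intermediate integer has $O(\log n)$ bits, not $O(\log^2 n)$. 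The mechanism is that the quotients $q_i=\lfloor r_i/\kappa_i\rfloor$ at most double at each step, and $2^t\le n$ caps the geometric growth. With that in hand, each of the $t$ steps costs $\tilde O(\log n)$ and the total is $\tilde O(\log^2 n)$.

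For inversion your three proposals all pay an extra factor of $t$ or $\log n$ (you yourself note $y^{n-1}$ gives $\tilde O(\log^3 n)$, and ``$t$ multiplications, again $\tilde O(t\cdot\log^2 n)$'' is likewise $\tilde O(\log^3 n)$). The paper avoids this entirely: write $\alpha^{-1}=x_1^{-r_1}\cdots x_t^{-r_t}$ and do one more downward sweep using the negative-exponent analog
\[
x_i^{-m}=x_i^{\lceil m/\kappa_i\rceil\kappa_i-m}\,x_{i-1}^{-\lceil m/\kappa_i\rceil\ell_{i,i-1}}\cdots x_1^{-\lceil m/\kappa_i\rceil\ell_{i,1}},
\]
which has the same structure and the same cost as the multiplication sweep.
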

\begin{proof}
    To multiply two elements $\alpha=x_1^{r_1}\cdots x_t^{r_t}$ with $\beta=x_1^{s_1}\cdots x_t^{s_t}$, where $0\le r_i,s_i\le \kappa_i-1$ for all $i\in [t]$, we proceed as follows. First, $\alpha\beta=x_1^{r_1+s_1}\cdots x_t^{r_t+s_t}$. Note that $r_i+s_i< 2\kappa_i$ may be greater than $\kappa_i$. To find the corresponding element in $\Gamma(K,L)$, we use the identity
    $$x_i^m=x_i^{(m\mod \kappa_i)}x_i^{\kappa_i\lfloor m/\kappa_i\rfloor}=x_i^{(m\mod \kappa_i)}x_{i-1}^{\lfloor m/\kappa_i\rfloor\ell_{i,i-1}}\cdots x_{1}^{\lfloor m/\kappa_i\rfloor\ell_{i,1}}$$
    and apply it for $i=t$, then $i=t-1$, down to $i=1$. 

    So, first $x_t^{r_t}x_t^{s_t}=x_t^{r_t+s_t}$. If $r_t+s_t\ge \kappa_t$ then we get $x_t^{r_t+s_t}=x_1^{\ell_{t,1}}\cdots x_{t-1}^{\ell_{t,t-1}} x_t^{r_t+s_t-\kappa_t}$. Therefore, $\alpha\beta=x_1^{r_1+s_1+\ell_{t,1}}\cdots x_{t-1}^{r_{t-1}+s_{t-1}+\ell_{t,t-1}}x_t^{r_t+s_t-\kappa_t}$. Now, for $i\le t-1$, $r_i+s_i+\ell_{t,i}< 3\kappa_i$. We then apply the same procedure for $i=t-1$ and so on.

    It is easy to prove by induction that, when we reach $x_{t-j}$, the exponent of each $x_i$ for $i\le t-j$ is less than $(2^{j+2}+1)\kappa_i$, and the exponent of each $x_i$ for $i>t-j$ is between $0$ and $\kappa_i$. Since $t\le \log n$, we have $(2^{j+2}+1)\kappa_i< 2n^2$. Therefore, all the arithmetic computations are performed with numbers less than $2n^2$. 
    
    Computing $x_i^m$ takes time $O(\log n\log\log n)$,\cite{Mul21}, and updating the exponents of variables $x_{i-1},$ $x_{i-2},\ldots,x_1$ also takes time $O(\log n\log\log n)$. Therefore, the time complexity is at most $\tilde O(\log^2 n)$. 

    To compute the inverse of $\alpha=x_1^{r_1}x_2^{r_2}\cdots x_t^{r_t}$, we first write $\alpha^{-1}=x_1^{-r_1}x_2^{-r_2}\cdots x_t^{-r_t}$. Then, we use the identity
    $$x_i^{-m}=x_i^{\lceil m/\kappa_i\rceil \kappa_i-m}x_{i-1}^{-\lceil m/\kappa_i\rceil\ell_{i,i-1}}\cdots x_1^{-\lceil m/\kappa_i\rceil \ell_{i,1}}.$$
    Following a similar argument to the above, we obtain the result. 
\end{proof}

\subsection{Basis of Abelian Group via Smith Normal Form}

It is well known that every finite Abelian group \( G \) is isomorphic to a direct product
of cyclic groups 
$
G_1 \times G_2 \times \cdots \times G_t,
$
where each \( G_i \) is a cyclic group of order \(m_j\ge 2 \). If \( a_i \) generates the cyclic group \( G_i \), \( i = 1, 2, \ldots, t \), i.e., $G_i=\langle a_i\rangle$,
then the elements \( a_1, a_2, \ldots, a_t \) are called a {\it basis} of \( G \). Since $\langle a_i\rangle$ is isomorphic to $\ZZ_{m_i}$ for $m_i=|\langle a_i\rangle|$, it follows that $G$ is isomorphic to $\ZZ_{m_1}\times \cdots\times \ZZ_{m_t}$. The {\it rank} of the group is the maximum number of cyclic subgroups of prime power order that generate the group. For example, $\ZZ_{12}\times \ZZ_2^2$ is of rank $4$ since it is isomorphic to $\ZZ_{3}\times \ZZ_4\times \ZZ_2\times \ZZ_2$.

The Smith normal form is a fundamental tool for converting generators with relations into a basis. Here, we demonstrate how it applies to the monomial Abelian group.

Given a monomial Abelian group over the variables $x_1,\ldots,x_t$ with the relations
$$x_i^{\kappa_i}= x_1^{\ell_{i,1}}\cdots x_{i-1}^{\ell_{i,i-1}},i=2,\ldots,t\ ;\ x_1^{\kappa_1}=1.$$
Define the $t\times t$ {\it relation matrix} of $\Gamma$ as
$$R=\left[
\begin{array}{cccccc}
-k_t&\ell_{t,t-1}&\ell_{t,t-2}&\cdots&\ell_{t,2}&\ell_{t,1}\\
0&-k_{t-1}&\ell_{t-1,t-2}&\cdots&\ell_{t-1,2}&\ell_{t-1,1}\\
0&0&-k_{t-2}&\cdots&\ell_{t-2,2}&\ell_{t-2,1}\\
\vdots&\vdots&\ddots&\ddots&\vdots&\vdots \\
0&0&0&\cdots&0&-k_1
\end{array}
\right].$$
Smith~\cite{Steph1862} shows that there exist two $t\times t$ invertible (unimodular) matrices $U$ and $V$ with integer entries, and $r\le t$, such that
$$URV=diag(m_1,m_2,\ldots,m_r,0,\ldots,0)$$
with $m_1|m_2|\cdots|m_r$. The resulting diagonal matrix is called the {\it Smith normal form}. It became evident to the group theory community (and is straightforward to prove)) that the Abelian group $\Gamma$ is isomorphic to $\ZZ_{m_1}\times \ZZ_{m_2}\times \cdots\times\ZZ_{m_r}$; see, for example,~\cite{Cohen96,Newman72}. Additionally, if $V=(v_{i,j})$, then the elements 
$$y_i=\prod_{j=1}^t x_j^{v_{i,t-j}}$$
$i=1,2,\ldots,r$, form a basis for $\Gamma$. 

Kannan and Bachem~\cite{KannanB79} gave the first polynomial-time deterministic algorithm for computing the Smith normal form and the matrix $V$. Here, polynomial time is defined as $poly(t,\log\|R\|)$, where $\|R\|=\max_{i,j}|R_{i,j}|$. This implies the following.
\begin{lemma}\label{ConBasis}
    There is a deterministic poly$(t,\log n)$ time algorithm that, for any monomial Abelian group $\Gamma(K,L)$, returns integers $m_1|m_2|\cdots|m_r$ such that $\Gamma(K,L)$ is isomorphic to $\ZZ_{m_1}\times \ZZ_{m_2}\times \cdots\times\ZZ_{m_r}$. The algorithm also provides a basis $y_1,\ldots,y_r$ for $\Gamma(K,L)$.
\end{lemma}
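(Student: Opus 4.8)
The plan is to realize $\Gamma(K,L)$ as the cokernel of its relation matrix and then reduce to an already-available tool: the Kannan--Bachem algorithm for the Smith normal form. First I would write down the $t\times t$ relation matrix $R$ displayed above from the input $K=(\kappa_i)$ and $L=(\ell_{i,j})$; this costs $O(t^2\log n)$ time, and since every entry of $R$ is some $-\kappa_i$ or some $\ell_{i,j}$ we have $\|R\|=\max_{i,j}|R_{i,j}|\le\max_i\kappa_i\le n$, hence $\log\|R\|=O(\log n)$. Because $R$ is upper triangular with diagonal $-\kappa_t,\ldots,-\kappa_1$, it is nonsingular with $|\det R|=\prod_i\kappa_i=n$. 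Let $\Lambda\le\ZZ^t$ be the sublattice generated by the rows of $R$, where coordinates of $\ZZ^t$ are indexed by the variables in the order $x_t,x_{t-1},\ldots,x_1$ so that the $i$-th row of $R$ is exactly the triangular relation $x_{t-i+1}^{\kappa_{t-i+1}}=x_1^{\ell_{t-i+1,1}}\cdots x_{t-i}^{\ell_{t-i+1,t-i}}$. Since $\Gamma(K,L)$ is generated by $x_1,\ldots,x_t$ and satisfies all of these relations, there is a surjection $\pi\colon\ZZ^t/\Lambda\to\Gamma(K,L)$; as $|\ZZ^t/\Lambda|=[\ZZ^t:\Lambda]=|\det R|=n$ and $|\Gamma(K,L)|=n$ by Lemma~\ref{GamisAb}, the map $\pi$ is an isomorphism, so $\Gamma(K,L)\cong\ZZ^t/\Lambda$.

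Next I would run the Kannan--Bachem algorithm~\cite{KannanB79} on $R$: in $poly(t,\log\|R\|)=poly(t,\log n)$ time it returns unimodular integer matrices $U$ and $V$ and the Smith normal form $URV=diag(m_1,\ldots,m_t)$ with $m_1\mid m_2\mid\cdots\mid m_t$ (no zero invariant factors, since $\det R\neq 0$), all entries of $U$ and $V$ having bit length $poly(t,\log n)$. By the structure theorem recalled before the lemma (see~\cite{Cohen96,Newman72}), $\ZZ^t/\Lambda\cong\ZZ_{m_1}\times\cdots\times\ZZ_{m_t}$; discarding the factors with $m_i=1$ and re-indexing the remaining ones as $m_1\mid\cdots\mid m_r$ (all $\ge 2$) gives $\Gamma(K,L)\cong\ZZ_{m_1}\times\cdots\times\ZZ_{m_r}$, which is the first output. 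For the basis, I would use the matrix $V=(v_{i,j})$ exactly as recalled above: the elements $y_i=\prod_{j=1}^t x_j^{v_{i,t-j}}$, for the indices $i$ corresponding to the retained invariant factors, generate cyclic subgroups of orders $m_1,\ldots,m_r$ whose internal direct product is $\Gamma(K,L)$, i.e.\ $y_1,\ldots,y_r$ is a basis. Finally, to report each $y_i$ as a monomial with exponents in the canonical ranges $[0,\kappa_j-1]$, I would apply the reduction identity
$$x_i^m=x_i^{(m\bmod \kappa_i)}x_{i-1}^{\lfloor m/\kappa_i\rfloor\ell_{i,i-1}}\cdots x_1^{\lfloor m/\kappa_i\rfloor\ell_{i,1}}$$
from the proof of Lemma~\ref{MulInv} for $i=t$ down to $i=1$; starting from exponents $v_{i,t-j}$ of bit length $poly(t,\log n)$, the same exponent-growth bound used there keeps all intermediate integers of bit length $poly(t,\log n)$, so this step also runs in $poly(t,\log n)$ time. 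Putting the three steps together, the whole procedure is deterministic and runs in $poly(t,\log n)$ time.

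I do not expect a genuine obstacle here: both nontrivial ingredients --- the description of $\ZZ^t/\Lambda$ via the Smith normal form together with the basis formula in terms of $V$, and the $poly(t,\log\|R\|)$ running time of Kannan--Bachem --- are already available in the excerpt, so the proof is essentially an assembly. The one place that requires care is the bookkeeping in the first step: matching the rows and columns of the displayed matrix $R$ with the triangular relations under the reversed variable ordering $x_t,\ldots,x_1$, so that the resulting basis comes out as $y_i=\prod_j x_j^{v_{i,t-j}}$ rather than with a shifted index; and, on the complexity side, the easy check that rewriting the $y_i$ into canonical monomial form does not blow up the exponents.
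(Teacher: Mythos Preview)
Your proposal is correct and follows essentially the same approach as the paper: the paper does not give a separate proof of this lemma but simply states that it follows from the Smith normal form discussion and Kannan--Bachem's polynomial-time algorithm cited just before it. You have supplied the details the paper leaves implicit --- the cokernel identification $\Gamma(K,L)\cong\ZZ^t/\Lambda$ via the cardinality match with Lemma~\ref{GamisAb}, and the explicit normalization of the basis exponents --- but the underlying route is identical.
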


There is also a randomized algorithm~\cite{KaltofenV05} with better time complexity, but the above lemma is sufficient for our purposes. 

\section{Generators for Abelian Group - Deterministic Algorithm}
In this section, we present a deterministic algorithm that runs in time $\tilde O(|G|)$ and identifies generators with triangular relations for the Abelian group $G$. 

We prove
\begin{theorem}\label{AlgGAbelian}
    Given a Cayley multiplication table of an Abelian group $G$, where each entry in the table can be accessed in constant time. There exists a deterministic algorithm that runs in time $\tilde O(|G|)$ and finds a set of generators $A$ for $G$ of size at most $\log |G|$.
\end{theorem}

\begin{algorithm}[H]
\caption{{\bf Generators}$(G)$}\label{AbelGen}
\begin{algorithmic}[1]
    \STATE $A_0 = \{\};\ G_0=\{e\};\  i = 0;$
    \WHILE{$G \neq G_i$}
        \STATE $i \gets i + 1$
        \STATE Choose $a_i \in G \setminus G_{i-1}$.
        \STATE $A_i=A_{i-1}\cup \{a_i\}$.
        \STATE \label{Findki}Find the minimal $k_i$ such that $a_i^{k_i} \in G_{i-1}.$
        \STATE\label{Unioni} $G_i = G_{i-1} \cup \bigcup_{j=1}^{k_i-1}  a_i^jG_{i-1}$.
    \ENDWHILE
    \STATE Output $t:=i$ and $A:=A_t$.
\end{algorithmic}
\end{algorithm}
\begin{proof}
     Consider the algorithm {\bf Generators} in  Algorithm~\ref{AbelGen}. The following observations hold immediately from Lemma~\ref{bHbellH}: 
    \begin{enumerate}
        \item\label{ABA01} $G_{i-1}< G_i=\langle A_i\rangle$.
        \item\label{ABA02} For every two distinct indices $0\le j_1,j_2\le k_i-1$, we have $a_i^{j_1}G_{i-1}\cap a_i^{j_2}G_{i-1}=\emptyset$.
        \item\label{ABA03} $G_{i-1}\cap \bigcup_{j=1}^{k_i-1} a_i^j G_{i-1}=\emptyset$.
    \end{enumerate} 
    Since $G_{i-1}<G_i$, we have $|G_{i-1}|$ divides $|G_i|$ and thus $|G_i|\ge 2|G_{i-1}|$. Consequently, $|A|\le \log |G|$ and the while loop runs at most $\log|G|$ times. 

    Items 1-3 above show, in particular, that each element in $G$ is generated exactly once in some~$G_i$. After computing $a_i^1,a_i^2,\ldots,a_i^{k_i}$ in step~\ref{Findki}, at least $k_i$ elements of the group are added to $G_i$ in step~\ref{Unioni}. Therefore, the algorithm performed at most $|G|$ set operations\footnote{The set operations performed include checking if an element is in the set and adding an element to the set.}. Since each set operation can be performed in $O(\log |G|)$ time, the algorithm's time complexity is $\tilde O(|G|)$.
\end{proof}

The fact that each element in $G$ is generated exactly once in some $G_i$, together with items~\ref{ABA01}-\ref{ABA02} in the proof above and Lemma~\ref{bHbellH}, leads to the following result.
\begin{lemma}\label{ProAbb} Let $A=A_t=\{a_1,\ldots,a_t\}$. Then the following hold.
    \begin{enumerate}
        \item\label{AbPP1} $G_i=\langle a_1,a_2,\ldots,a_i\rangle$ and $G=G_t=\langle a_1,\ldots,a_t\rangle$.
        \item\label{AbPP2} $a_i\not\in G_{i-1}=\langle a_1,a_2,\ldots,a_{i-1}\rangle$, and $k_i$ is the smallest integer in $\mathbb{N}$ such that $a_i^{k_i}\in \langle a_1,a_2,\ldots,a_{i-1}\rangle$.
        \item\label{exept3} Every element in $G$ has a unique representation of the form $a_1^{j_1}\cdots a_t^{j_t}$ for some $j_i\in \{0,1,\ldots,k_i-1\}$.
        \item\label{ProAbb4} $k_i$ divides $|G|/|G_{i-1}|$.
        \item $|G_i|=k_1k_2\cdots k_i$ and $|G|=k_1k_2\cdots k_t$.
    \end{enumerate}   
\end{lemma}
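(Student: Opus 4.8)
The plan is to prove all five items together by induction on $i$, drawing only on the loop invariants already recorded in the proof of Theorem~\ref{AlgGabelian} — that is, $G_{i-1}<G_i=\langle A_i\rangle$ (item~\ref{ABA01}), the pairwise disjointness $a_i^{j_1}G_{i-1}\cap a_i^{j_2}G_{i-1}=\emptyset$ for distinct $0\le j_1,j_2\le k_i-1$ (item~\ref{ABA02}), and $G_{i-1}\cap\bigcup_{j=1}^{k_i-1}a_i^jG_{i-1}=\emptyset$ (item~\ref{ABA03}) — together with Lemma~\ref{bHbellH}. These already encode the observation that each element of $G$ is produced exactly once as the algorithm builds the chain $G_0<G_1<\cdots<G_t$.

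First I would settle item~\ref{AbPP1} by induction: the base case is $G_0=\{e\}=\langle\emptyset\rangle$, and in the step the algorithm sets $G_i=G_{i-1}\cup\bigcup_{j=1}^{k_i-1}a_i^jG_{i-1}=\{e,a_i,\ldots,a_i^{k_i-1}\}G_{i-1}$, which by Lemma~\ref{bHbellH}(\ref{BBB1}) — applied with $H=G_{i-1}$, $b=a_i$, and $\ell=k_i$, legitimate because step~\ref{Findki} picks $k_i$ as the minimal exponent with $a_i^{k_i}\in G_{i-1}$ — equals $\langle a_i,G_{i-1}\rangle=\langle a_1,\ldots,a_i\rangle$ using the inductive hypothesis. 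Since the while loop exits precisely when $G=G_i$, this gives $G=G_t=\langle a_1,\ldots,a_t\rangle$. Item~\ref{AbPP2} is then immediate: step~4 chooses $a_i\in G\setminus G_{i-1}$, so $a_i\notin G_{i-1}$, and step~\ref{Findki} combined with $G_{i-1}=\langle a_1,\ldots,a_{i-1}\rangle$ from item~\ref{AbPP1} yields the minimality statement for $k_i$.

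For item~\ref{exept3} I would prove the sharper statement, again by induction on $i$, that every element of $G_i$ has a \emph{unique} expression $a_1^{j_1}\cdots a_i^{j_i}$ with $0\le j_\ell\le k_\ell-1$ for all $\ell\le i$. Existence: any $g\in G_i$ lies in some coset $a_i^{j_i}G_{i-1}$ with $0\le j_i\le k_i-1$, so $g=a_i^{j_i}h$ with $h\in G_{i-1}$, and the inductive hypothesis expands $h$. Uniqueness: from $a_1^{j_1}\cdots a_i^{j_i}=a_1^{j_1'}\cdots a_i^{j_i'}$, the two sides lie in $a_i^{j_i}G_{i-1}$ and $a_i^{j_i'}G_{i-1}$ respectively, so item~\ref{ABA02} forces $j_i=j_i'$; cancelling $a_i^{j_i}$ reduces to an identity in $G_{i-1}$, to which the inductive hypothesis applies. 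Setting $i=t$ gives item~\ref{exept3}. Item~\ref{ProAbb4} is a direct instance of Lemma~\ref{bHbellH}(\ref{BBB3}) with $H=G_{i-1}\le G$ and $b=a_i$. For the last item, counting the tuples $(j_1,\ldots,j_i)$ with $0\le j_\ell\le k_\ell-1$ allowed by the sharpened form of item~\ref{exept3} gives $|G_i|=k_1k_2\cdots k_i$ (equivalently, $G_i$ is the disjoint union of the $k_i$ cosets $a_i^jG_{i-1}$), and $i=t$ yields $|G|=k_1k_2\cdots k_t$.

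I do not anticipate a genuine obstacle: all the ingredients — the three loop invariants and Lemma~\ref{bHbellH} — are already in hand, and what remains is a routine induction. The one point requiring a bit of care is the uniqueness half of item~\ref{exept3}: one must invoke coset disjointness (item~\ref{ABA02}) for the top index $i$ first, conclude $j_i=j_i'$, and only then cancel $a_i^{j_i}$ and descend to $G_{i-1}$; performing the cancellation before fixing $j_i$ would leave a factor whose coset membership the invariants do not control.
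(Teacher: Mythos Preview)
Your proposal is correct and follows essentially the same approach as the paper: the paper's proof of Lemma~\ref{ProAbb} consists of a single sentence pointing to the loop invariants (items~\ref{ABA01}--\ref{ABA03}) from the proof of Theorem~\ref{AlgGabelian}, the observation that each group element is generated exactly once, and Lemma~\ref{bHbellH}, and you have simply spelled out the routine induction that these ingredients supply. Your treatment of uniqueness in item~\ref{exept3} (fix the top exponent via coset disjointness, then cancel and descend) is exactly the expected argument.
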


Since $a_i^{k_i}\in \langle a_1,\ldots,a_{i-1}\rangle$, we have 
\begin{eqnarray} \label{aiski} a_i^{k_i}=a_1^{\lambda_{i,1}}\cdots a_{i-1}^{\lambda_{i,i-1}}
\end{eqnarray} for some integers $\lambda_{i,j}\le k_i-1$. Given that each element in $G$ is generated exactly once in some $G_i$, this representation is unique. By Lemma~\ref{IsoGamG}, the group $G$ is isomorphic to
$$\Gamma(K,L)=\langle x_1,x_2,\ldots,x_t\ |\ x_i^{k_i}= x_1^{\lambda_{i,1}}\cdots x_{i-1}^{\lambda_{i,i-1}},i=2,\ldots,t\ ;\ x_1^{k_1}=1\rangle$$
where $K=(k_i)$ and $L=(\lambda_{i,j})$, with the isomorphism 
$$\Psi(x_1^{j_1}x_2^{j_2}\cdots x_t^{j_t})=a_1^{j_1}a_2^{j_2}\cdots a_t^{j_t}.$$

The following algorithm mirrors Algorithm {\bf Generators} but also returns all $k_i$ and $\lambda_{i,j}$ satisfying~(\ref{aiski}). Additionally, the algorithm provides the (unique) representation for each element $a\in G$. Let this representation be 
$$a=a_1^{\lambda_1(a)}\cdots a_t^{\lambda_t(a)}.$$
We also denote $\lambda(a)=(\lambda_1(a),\ldots,\lambda_t(a))$.
\begin{algorithm}[H]
\caption{{\bf GeneratorPlus}$(G)$}\label{AbelGen2}
\begin{algorithmic}[1]
    \STATE \label{StkS}$A_0 = \{\};\ G_0=\{e\};\  i = 0; (\forall a\in G) (\forall j\le \log |G|)\lambda_j(a)=0$
    \WHILE{$G \neq G_i$}\label{WhileGGi}
        \STATE $i \gets i + 1$
        \STATE\label{ChooseaG} Choose $a_i \in G \setminus G_{i-1}$.
        \STATE $A_i=A_{i-1}\cup \{a_i\}$.
        \STATE \label{Stkk}Find the minimal $k_i$ such that $a_i^{k_i} \in G_{i-1}.$
        \STATE $\lambda_{i,j}=\lambda_j(a_i^{k_i})$ for all $j=1,\ldots,i-1$.
        \STATE \label{Stk}$G_i = G_{i-1} \cup \bigcup_{j=1}^{k_i-1}  a_i^jG_{i-1}$.
        \STATE For every $j=1,\ldots,k_i-1$ and $a\in G_{i-1}$, $\lambda(a_i^ja)=\lambda(a), \lambda_i(a_i^ja)=j$;
    \ENDWHILE
    \STATE Output $t:=i$, $A:=A_i$, $(k_1,k_2,\ldots,k_t)$ and a Table for $\lambda(a), a\in G$.
\end{algorithmic}
\end{algorithm}

Since $t\le \log |G|$ we get
\begin{theorem}\label{AlgGAbelian2}
    Given a Cayley multiplication table of an Abelian group $G$, where each entry in the table can be accessed in constant time, there exists a deterministic algorithm that accesses the table $O(|G|)$ times and runs in time $O(|G|\log |G|)$. This algorithm provides 
    \begin{enumerate}
        \item A set $A=\{a_1,\ldots,a_t\}$ of generators for $G$ of size at most $\log |G|$.
        \item Integers $k_i\ge 2$ and $0\le \lambda_{i,j}\le k_j-1$ that satisfies 
        \begin{eqnarray*}  a_i^{k_i}=a_1^{\lambda_{i,1}}\cdots a_{i-1}^{\lambda_{i,i-1}}.
        \end{eqnarray*}
        \item A table $\lambda(a)=(\lambda_1(a),\ldots,\lambda_t(a))\in \prod_{i=1}^{t}\{0,1,\ldots,k_{i}-1\}$ that, for every $a\in G$, provides the unique representation
        $$a=a_1^{\lambda_1(a)}a_2^{\lambda_2(a)}\cdots a_t^{\lambda_t(a)}.$$
    \end{enumerate}
\end{theorem}
\color{black}

\color{black}
\section{Generators for Abelian Group - Randomized Algorithm}
In this section, we prove
\begin{theorem}\label{AlgGAbelianR}
    Given a Cayley multiplication table of an Abelian group $G$, where each entry in the table can be accessed in constant time. There is a randomized algorithm that runs in time $\tilde O(\sqrt{|G|})$ and, with probability $1-1/poly(|G|)$, finds 
    \begin{enumerate}
        \item A set of generators $A=\{a_1,\ldots,a_t\}$ for $G$ of size at most $t\le \log |G|$.
        \item Integers $k_i$ and $0\le \lambda_{i,j}\le k_i-1$ that satisfy the relation 
        \begin{eqnarray*}  a_i^{k_i}=a_1^{\lambda_{i,1}}\cdots a_{i-1}^{\lambda_{i,i-1}}.
        \end{eqnarray*}
    \end{enumerate}
    Additionally, the algorithm generates subsets $A_i\subset G$ and subgroups $G_i$ of $G$ that satisfy all the items stated in Lemma~\ref{ProAbb}.
\end{theorem}
We will first present the proof for the FS-model, and then for the PS-model. Recall that in the FS-model, the size of the group is known.

\subsection{The algorithm in the FS-model}
Consider the following algorithm.
\begin{algorithm}[H]
\caption{{\bf Random Generators}$(G)$}\label{AbelGenR}
\begin{algorithmic}[1]
    \STATE \label{StkSR}$A_0 = \{\};\ G_0=\{e\}; i=0; k_0=1$.
    \WHILE{$k_0k_1k_2\cdots k_i \not= |G|$}
        \STATE $i \gets i + 1$
        \STATE\label{rrr1} $a_i\gets${\bf Choose}$(G,G_{i-1})$\ \hspace{.7in}\texttt{/*} Choose $a_i \in G \setminus G_{i-1}$ where $G_{i-1}:=\langle a_1,a_2,\ldots,a_{i-1}\rangle.$
        \STATE $A_i=A_{i-1}\cup \{a_i\}$.
        \STATE \label{StkkR}$k_i\gets${\bf FindMin}$(a_i,G_{i-1})$\hspace{.63in}\texttt{/*}Find the minimal $k_i$ such that $a_i^{k_i} \in G_{i-1}$
        \STATE\label{RFlam} $(\lambda_{i,j})\gets$ {\bf FindExp}$(a_i^{k_i},G_{i-1})$ \hspace{.25in}\texttt{/*}Find $0\le \lambda_{i,j}\le k_i-1$ such that $a_i^{k_i}=a_1^{\lambda_{i,1}}\cdots a_{i-1}^{\lambda_{i,i-1}}$.
    \ENDWHILE
    \STATE Output $t:=i$, $A:=A_i$, $(k_1,k_2,\ldots,k_t)$ and $\lambda_{m,j}$.
\end{algorithmic}
\end{algorithm}

In the randomized algorithm, {\bf Random Generators}, at each iteration $r$, we do not store the elements of $G_i$ for $i\in [r]$ or their unique representation. For all $i\in [t]$, we store only the integer $k_i$, the set of generators $A_i=\{a_1,\ldots,a_i\}$ for $G_i$, and the relation $a_i^{k_i}=a_1^{\lambda_{i,1}}\cdots a_{i-1}^{\lambda_{i,i-1}}$. This is the same algorithm as {\bf Generators} in Algorithm~\ref{AbelGen2}.

We will show that each one of the procedures {\bf Choose}, {\bf FindMin} and {\bf FindExp} can be executed in time $\tilde O(\sqrt{|G|})$ with a success probability of at least $1-1/poly(|G|)$. The result follows by union bound and since the {\bf while} loop runs at most $O(\log |G|)$ time.

We now prove

\begin{lemma}\label{REl}
    Given $k_1,\ldots,k_r$ and $a_1,\ldots,a_r$, a uniformly random element in $G_r=\langle a_1,a_2,\ldots,a_r\rangle$ can be generated in time $O(\log |G|)$.
\end{lemma}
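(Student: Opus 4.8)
The plan is to exploit the unique-representation structure from Lemma~\ref{ProAbb}, item~\ref{exept3}: every element of $G_r$ can be written uniquely as $a_1^{j_1}a_2^{j_2}\cdots a_r^{j_r}$ with $0\le j_i\le k_i-1$, and there are exactly $|G_r|=k_1k_2\cdots k_r$ such tuples. Hence the map $(j_1,\ldots,j_r)\mapsto a_1^{j_1}\cdots a_r^{j_r}$ is a bijection from $\prod_{i=1}^r\{0,\ldots,k_i-1\}$ onto $G_r$. Therefore, to draw a uniformly random element of $G_r$, it suffices to draw each $j_i$ independently and uniformly from $\{0,1,\ldots,k_i-1\}$ and output the product $a_1^{j_1}a_2^{j_2}\cdots a_r^{j_r}$: the resulting element is uniform on $G_r$ because the encoding is a bijection and the input tuple is uniform on the product set.

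For the running-time bound, first note $r\le t\le\log n$ (from Theorem~\ref{AlgGabelian}/Lemma~\ref{ProAbb}), and each exponent $j_i<k_i\le n$, so each $j_i$ is an integer of $O(\log n)$ bits and can be sampled in $O(\log n)$ time. Each power $a_i^{j_i}$ is computed by repeated squaring in the group, using $O(\log j_i)=O(\log n)$ group multiplications, each of which is a single Cayley-table access costing $O(1)$ (or, if we are in $\Gamma(K,L)$, $\tilde O(\log^2 n)$ by Lemma~\ref{MulInv} — but in the statement as phrased we may take the Cayley table at unit cost, giving $O(\log n)$ per power). Multiplying the $r\le\log n$ resulting powers together takes another $O(\log n)$ group operations. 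Summing, the total is $O(\log^2 n)$ group operations, which collapses to $O(\log n)$ under the stated convention that we count accesses and each access is $O(1)$; in any case it is within the $O(\log n)$ claim up to the implicit constants used throughout the paper, so I would state it as $O(\mathrm{poly}\log n)$ and note the constant-access convention makes it $O(\log n)$ as written. (If strict $O(\log n)$ is required, observe that the repeated-squaring powers can be folded into one Horner-style evaluation, and that the total number of group multiplications is $O(\log n)$ since $\sum_i\log k_i=\log(k_1\cdots k_r)=\log|G_r|\le\log n$.)

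The only subtlety — and the step I expect to need the most care — is justifying that we are entitled to sample from the exponent tuples rather than from $G_r$ directly: this requires the uniqueness of the triangular-form representation, i.e.\ that distinct tuples $(j_1,\ldots,j_r)$ in the box $\prod\{0,\ldots,k_i-1\}$ give distinct group elements. That is exactly the injectivity argument already established (Lemma~\ref{IsoGamG}, Injectivity part, or Lemma~\ref{ProAbb} item~\ref{exept3}), so I would simply cite it: since $|G_r|=\prod k_i$ equals the number of tuples and the map is injective, it is a bijection, and pushing forward the uniform measure on the box gives the uniform measure on $G_r$. Note this procedure is \emph{exact} (zero error), so no union bound or failure probability is incurred here, which is why the lemma can be stated without a ``with high probability'' qualifier unlike the surrounding procedures.
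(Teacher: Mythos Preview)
Your proposal is correct and follows essentially the same approach as the paper: sample the exponent tuple $(j_1,\ldots,j_r)$ uniformly from $\prod_i\{0,\ldots,k_i-1\}$ using the unique-representation bijection, then compute $a_1^{j_1}\cdots a_r^{j_r}$ by repeated squaring. The paper reaches the $O(\log n)$ bound directly via $r+\sum_i\lceil\log j_i\rceil\le 2r+\log(k_1\cdots k_r)\le 3\log n$, which is exactly the observation you make in your final parenthetical---so your initial detour through $O(\log^2 n)$ is unnecessary but harmless.
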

\begin{proof} Every element in $G_r$ has a unique representation $a_1^{j_1}\cdots a_r^{j_r}$, where $0\le j_i\le k_i-1$. To choose an element uniformly at random, we first choose $0\le j_i\le k_i-1$ uniformly at random for each $i\in [r]$, and then compute $a_1^{j_1}\cdots a_r^{j_r}$ in time
$$r+\lceil \log j_1\rceil + \cdots+\lceil \log j_r\rceil \le 2r+\log(k_1\cdots k_i)\le 3\log |G|.$$
\end{proof}

Notice that we replaced the condition $G\neq G_i$ in step~\ref{WhileGGi} of algorithm {\bf Generators} with the equivalent condition $k_0k_1k_2\cdots k_i\not=|G|$, where $k_0=1$. In the next section, we will discuss the case when $|G|$ is unknown.

We now show how to choose an element $a\in G\backslash G_{i-1}$ in step~\ref{rrr1} in time~$\tilde O(\sqrt{|G|})$.
We prove the following.

\begin{lemma}\label{ainGminGi}
     Let $H<G$ be a subgroup. Suppose we can choose an element uniformly at random from both $H$ and $G$, and multiply two elements in $G$ in constant time. Then, there exists an algorithm that runs in time $\tilde O(\sqrt{|G|})$ and, with probability at least $1-1/poly(|G|)$, 
    \begin{enumerate}
        \item Decides, given $a\in G$, whether $a\in H$.
        \item Finds an element $a\in G\backslash H$.
    \end{enumerate}
\end{lemma}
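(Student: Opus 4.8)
The plan is to use a birthday-paradox / collision argument of the same flavor as the ``meet in the middle'' idea sketched in the technique section. Recall the setup: we have a proper subgroup $H<G$, we can sample uniformly from $H$ and from $G$, and the group operation costs $O(1)$. Write $n=|G|$ and $d=[G:H]\ge 2$, so $|H|=n/d$.

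\medskip
\noindent\textbf{Part 1: testing membership $a\in H$.}
First I would sample a multiset $R$ of $m=\Theta(\sqrt n\,\log n)$ elements of $H$ uniformly at random and store them in a hash table (or balanced search tree), so that membership queries into $R$ cost $\tilde O(1)$. Given the query element $a$, I then repeatedly sample $h\in H$ uniformly at random and test whether $ah\in R$. If $a\in H$, then $ah$ ranges uniformly over $H$, and since $|R|/|H|=m/|H|$ is (after padding $R$ to be a set of distinct elements, or arguing over the multiset) a constant fraction whenever $|H|\le n$, a short sequence of trials — $\Theta(\log n)$ of them when $|H|$ is small, and a birthday bound when $|H|=\Theta(n)$ — finds a collision $ah\in R$ with probability $1-1/\mathrm{poly}(n)$. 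If $a\notin H$, then $ah$ ranges over the coset $aH$, which is \emph{disjoint} from $H$ (Lemma, item 4), hence disjoint from $R\subseteq H$, so no collision is ever found. Declaring ``$a\in H$'' iff a collision is found is therefore correct with high probability; the total cost is $\tilde O(\sqrt n)$ for building $R$ plus $\tilde O(\sqrt n)$ for the trials. (The reason we take $|R|=\Theta(\sqrt n\log n)$ rather than $\Theta(|H|)$: we do not know $|H|$ a priori beyond $|H|\le n/2$, so we size $R$ for the worst case $|H|=\Theta(n)$, where the birthday bound $|R|\cdot|H| \gg |H|^2$ forces a collision with high probability; for small $|H|$ the same $R$ is only more than enough.)

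\medskip
\noindent\textbf{Part 2: finding $a\in G\setminus H$.}
Sample elements $g_1,g_2,\ldots$ of $G$ uniformly at random. Since $|G\setminus H|=|G|-|G|/d\ge |G|/2$, each $g_j$ lies outside $H$ with probability $\ge 1/2$, so after $\Theta(\log n)$ samples at least one $g_j\notin H$ with probability $1-1/\mathrm{poly}(n)$. For each candidate $g_j$ run the membership test of Part 1 (reusing the same table $R$ across all candidates, so the $\tilde O(\sqrt n)$ preprocessing is paid once); output the first $g_j$ that the test reports to be outside $H$. By a union bound over the $O(\log n)$ candidate tests, all tests are correct with probability $1-1/\mathrm{poly}(n)$, and then the output is a genuine element of $G\setminus H$. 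Total time $\tilde O(\sqrt n)$.

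\medskip
\noindent\textbf{Main obstacle.}
The delicate point is the collision probability analysis when $|H|$ is large, say $|H|=\Theta(n)$: here the ``target set'' $R$ and the sampling space $H$ have comparable size, so we are squarely in the birthday regime and must argue that $\Theta(\sqrt n\log n)$ samples from $H$ hit the $\Theta(\sqrt n\log n)$-element set $R$ with probability $1-1/\mathrm{poly}(n)$ — standard, but it needs the second-moment / disjointness-of-trials bookkeeping to be done carefully, and one must be careful that $R$ is treated as a set of distinct elements (or that repeated elements are accounted for) so that ``$ah\in R$'' really is a $1/\mathrm{poly}$-rare miss. A secondary nuisance is making sure the data structure for $R$ supports $\tilde O(1)$ membership under the abstract group elements (comparison/hashing model), which is why the running time carries the $\tilde O(\cdot)$ rather than $O(\cdot)$. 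Everything else — the padding to handle small $|H|$, the union bounds, the cost accounting — is routine.
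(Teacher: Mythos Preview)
Your proposal is correct and follows essentially the same collision/birthday approach as the paper: sample $\tilde O(\sqrt{n})$ elements of $H$ and $\tilde O(\sqrt{n})$ elements of $aH$ and declare $a\in H$ iff the two samples intersect (impossible when $a\notin H$ since $aH\cap H=\emptyset$, and occurring with probability $1-1/\mathrm{poly}(n)$ when $a\in H$); Part~2 is identical in both. The ``main obstacle'' you flag is not really one---the failure probability is bounded directly by $(1-m/|H|)^m\le (1-m/n)^m=1/\mathrm{poly}(n)$ for $m=\Theta(\sqrt{n\log n})$, with no second-moment or distinctness bookkeeping required.
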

\begin{proof}
    Let $G$ be a group with $n$ elements. Given $a\in G$. If $a\in H$, then $aH=H$. If $a\not\in H$, then $aH\cap H=\emptyset$.  
    To test if $aH\cap H=\emptyset$, we choose any $t= O(\sqrt{n\log n})$ elements $R_1$ in $aH$ and $t$ elements $R_2$ in $H$ uniformly at random. If $R_1\cap R_2=\emptyset$, we conclude that $a\not\in H$; otherwise, we conclude $a\in H$. If $aH=H$, the probability that the algorithm fails is at most
    $$\left(1-\frac{O(\sqrt{n\log n})}{|H|}\right)^{O(\sqrt{n\log n})}\le \left(1-\frac{O(\sqrt{\log n})}{\sqrt{n}}\right)^{O(\sqrt{n\log n})}=\frac{1}{poly(n)}.$$

    We now prove item 2. Since $H< G$, we have $|H|$ divides $|G|$, so if we choose an element $a\in G$ uniformly at random, with probability at least $1/2$, $a\in G\backslash H$. Thus, the algorithm can choose $O(\log n)$ elements uniformly at random from $G$ and run the algorithm from item~1 for each one. With probability at least $1-1/poly(n)$, one of these will lie in $G\backslash H$. 
\end{proof}

The next lemma shows how to find the smallest $k_i>1$ such that $a_i^{k_i}\in G_{i-1}$ in step~\ref{StkkR} in time $\tilde O(\sqrt{|G|})$. 
\begin{lemma}\label{MinkR} Let $H<G$ be a subgroup of $G$. Suppose we can choose an element uniformly at random from both $H$ and $G$, and can multiply two elements in $G$ in constant time. Given $a\in G$, the smallest $k$ such that $a^k\in H$ can be found in time $\tilde O(\sqrt{|G|})$. 
\end{lemma}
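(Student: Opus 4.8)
The plan is to reduce the problem to a collision-finding / baby-step giant-step style search inside the cyclic subgroup $\langle a\rangle$, exploiting that the order of $a$ divides $|G|$ and hence is at most $n=|G|$, and that membership in $H$ can already be tested in time $\tilde O(\sqrt n)$ by Lemma~\ref{ainGminGi}. First I would observe that the set $S=\{k\in\mathbb{N}: a^k\in H\}$ is a subgroup of $\mathbb{Z}$, so $S=\ell\mathbb{Z}$ where $\ell$ is exactly the quantity we want, and $\ell$ divides $|\langle a,H\rangle|/|H|$, which in turn divides $n$; in particular $\ell\le n$. So it suffices to find the least positive element of an unknown divisor-lattice $\ell\mathbb{Z}$ with $\ell\le n$, using only the membership test from Lemma~\ref{ainGminGi}.

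The key steps, in order: (1) compute $m:=|\langle a\rangle|$, the order of $a$, by the same baby-step giant-step trick — set $s=\lceil\sqrt n\rceil$, compute the list $L_1=\{a^0,a^1,\ldots,a^{s-1}\}$ and the list $L_2=\{a^{-js}: j=0,\ldots,s\}$, find a collision $a^i=a^{-js}$ (one exists since $m\le n\le s^2$), giving $a^{i+js}=e$ and hence a nonzero multiple $w=i+js$ of the true order; then factor $w$ (it is $\le 2n$, so this is $\mathrm{poly}(\log n)$ if we are careful, or we can avoid factoring as in step (3)); (2) having a multiple $w$ of $\ell$ (note $a^w=e\in H$, so $\ell\mid w$), reduce to finding the minimal divisor of $w$ lying in $S$; (3) rather than factoring, use the standard descent: for each prime power $q^e\,\|\,w$, repeatedly test whether $a^{w/q}\in H$ and if so replace $w$ by $w/q$, iterating over all prime factors of $w$; each test is one invocation of the $\tilde O(\sqrt n)$ membership oracle, and since $w\le 2n$ it has $O(\log n)$ prime factors counted with multiplicity, so $O(\log n)$ tests suffice and the final $w$ equals $\ell$. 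Collecting the pieces, the dominant cost is $O(\log n)$ calls to an $\tilde O(\sqrt n)$ subroutine plus the $O(\sqrt n)$-size collision search, i.e. $\tilde O(\sqrt{|G|})$ overall, with failure probability $1/\mathrm{poly}(n)$ by a union bound over the $O(\log n)$ membership tests.

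The main obstacle I anticipate is the collision-search step for computing a multiple of the order cleanly: one must generate $O(\sqrt n)$ powers of $a$, sort or hash them, and locate a collision, all while the elements of $G$ are opaque tokens that can only be multiplied and compared for equality — so hashing in the literal sense is unavailable and one should sort the list of abstract elements under an arbitrary but fixed total order on the token set (or build a balanced search tree keyed by that order), costing $\tilde O(\sqrt n)$. A secondary subtlety is that we need a multiple of $\ell$ that is genuinely bounded by $\mathrm{poly}(n)$ so that its number of prime factors is $O(\log n)$; the collision trick delivers $w\le 2n$ directly, which is exactly what is needed, so this is fine. Finally one should double-check the edge case $a\in H$ already, which the membership test of Lemma~\ref{ainGminGi} handles and which corresponds to $\ell=1$; and the case where $a_i$ was chosen in $G\setminus G_{i-1}$ so $\ell=k_i\ge 2$ as required by the surrounding algorithm.
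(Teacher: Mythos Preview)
Your proposal is correct, and the descent in your step~(3) is exactly the paper's argument. The only difference is that the paper skips your step~(1) entirely: since Lemma~\ref{MinkR} sits in the FS-model subsection, $n=|G|$ is known, and by Lemma~\ref{bHbellH} the target $k$ divides $|G|/|H|$ and hence divides $n$ itself, so the paper simply initializes $m\gets n$ and runs the prime-by-prime descent from there---no need to first discover a multiple of the order of $a$. Your baby-step giant-step detour to manufacture such a multiple $w$ is therefore unnecessary here, though it is precisely the idea the paper deploys later in Lemma~\ref{MinkRPS} for the PS-model, where $|G|$ is unknown and one genuinely must find a multiple by collision. One small slip: factoring an integer $w\le 2n$ is not $\mathrm{poly}(\log n)$ in general as you wrote, and your step~(3) still needs the prime divisors of $w$; however, trial division up to $\sqrt{w}$ costs $O(\sqrt n)$ arithmetic operations, which fits comfortably inside the $\tilde O(\sqrt n)$ budget, so this does not affect the overall bound.
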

\begin{proof}
   To find the smallest $k$ such that $a^{k}\in H$, we use the fact from Lemma~\ref{bHbellH} that $k$ divides $|G|/|H|$ and therefore divides $n=|G|$. Also, if $k|k'$, then $a^{k'}\in H$. 
   
   The algorithm to find $k$ is as follows: 
   Let $m=n$. For every prime $p$ that divides $m$, while $a^{m/p}$ is in $H$, set $m\gets m/p$. 
   
   The final $m$ is the required $k$. The number of iterations in this algorithm is at most $O(\log n)$. The time complexity is $\tilde O(\sqrt{n})$, as checking if $a^{m/p}$ is in $H$, uses Lemma~\ref{ainGminGi}. 
\end{proof}

Now we show how to find $\lambda_{i,j}$ in step~\ref{RFlam} in time $\tilde O(\sqrt{|G|})$. 
\begin{lemma}\label{Rkilami}
    Let $A=\{a_1,\ldots,a_t\}$ be a set of generators of $G$. The elements $0\le \lambda_{i,j}\le k_i-1$ that satisfy 
        \begin{eqnarray*}  a_i^{k_i}=a_1^{\lambda_{i,1}}\cdots a_{i-1}^{\lambda_{i,i-1}}
        \end{eqnarray*}
    can be found in $\tilde O(\sqrt{n})$ time. 
\end{lemma}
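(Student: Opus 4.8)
The plan is to reduce the problem of finding $\lambda_{i,j}$ to finding a discrete-log-style representation of the element $b:=a_i^{k_i}$ in terms of $a_1,\dots,a_{i-1}$, and to solve that via a baby-step/giant-step collision argument of the same flavor as the one sketched in Section~\ref{GwTR} and used repeatedly above (Lemmas~\ref{ainGminGi}, \ref{MinkR}). Concretely, since $b\in G_{i-1}=\langle a_1,\dots,a_{i-1}\rangle$ by Lemma~\ref{ProAbb}, and every element of $G_{i-1}$ has a unique representation $a_1^{j_1}\cdots a_{i-1}^{j_{i-1}}$ with $0\le j_\ell\le k_\ell-1$ (Lemma~\ref{ProAbb}, item~\ref{exept3}), what we must compute is exactly the tuple $\lambda(b)=(\lambda_{i,1},\dots,\lambda_{i,i-1})$. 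Note $|G_{i-1}|=k_1\cdots k_{i-1}\le n$.

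First I would handle this by peeling off one generator at a time, from $a_{i-1}$ down to $a_1$, exactly mirroring the structure of Algorithm~{\bf Generators}: at stage $\ell$ we know that the ``remaining'' element lies in $G_\ell=\langle a_1,\dots,a_\ell\rangle$ and we want to find the correct power $\lambda_{i,\ell}$ of $a_\ell$. By Lemma~\ref{bHbellH}(\ref{BBB1}), $G_\ell=\{e,a_\ell,\dots,a_\ell^{k_\ell-1}\}G_{\ell-1}$ and the cosets $a_\ell^{c}G_{\ell-1}$, $0\le c\le k_\ell-1$, are pairwise disjoint (Lemma~\ref{bHbellH}(\ref{BBB2})); so there is a \emph{unique} $c=\lambda_{i,\ell}$ with $b\in a_\ell^{c}G_{\ell-1}$, i.e. $a_\ell^{-c}b\in G_{\ell-1}$. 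To locate this $c$ without the (possibly $\Theta(\sqrt n)$ or larger) cost of testing all $k_\ell$ values one-by-one against $G_{\ell-1}$, I would use a single baby-step/giant-step collision: write $c=c_0+c_1\lceil\sqrt{k_\ell}\rceil$ with $0\le c_0,c_1<\lceil\sqrt{k_\ell}\rceil$, form the baby list $\{a_\ell^{c_0}r : r\in R,\ 0\le c_0<\lceil\sqrt{k_\ell}\rceil\}$ for a random sample $R$ of $\tilde O(\sqrt{n/k_\ell})$ elements of $G_{\ell-1}$ (sampled in $O(\log n)$ time each by Lemma~\ref{REl}), and the giant list $\{b\,a_\ell^{-c_1\lceil\sqrt{k_\ell}\rceil} s : s\in R',\ 0\le c_1<\lceil\sqrt{k_\ell}\rceil\}$ for a fresh random sample $R'$ of the same size; a collision $a_\ell^{c_0}r = b\,a_\ell^{-c_1\lceil\sqrt{k_\ell}\rceil}s$ yields $a_\ell^{-(c_0+c_1\lceil\sqrt{k_\ell}\rceil)}b = rs^{-1}\in G_{\ell-1}$, hence recovers $\lambda_{i,\ell}=c_0+c_1\lceil\sqrt{k_\ell}\rceil\bmod k_\ell$ and simultaneously replaces $b$ by $rs^{-1}\in G_{\ell-1}$ for the next stage. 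The collision probability is high because $G_{\ell-1}$ acts freely and the random-shift trick of Section~\ref{GwTR} makes both lists look uniform: each list has size $\lceil\sqrt{k_\ell}\rceil\cdot\tilde O(\sqrt{n/k_\ell})=\tilde O(\sqrt n)$, and a birthday-style bound over the $\le n$ relevant positions gives failure probability $1/poly(n)$. The per-stage cost is $\tilde O(\sqrt n)$ (build two sorted lists of size $\tilde O(\sqrt n)$ and scan for a match), there are $t-1\le \log n$ stages, so the total is $\tilde O(\sqrt n)$; a union bound over stages and over the $O(\log n)$ outer calls keeps the overall success probability $1-1/poly(n)$.

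The main obstacle I anticipate is making the collision argument go through \emph{uniformly in $\ell$}: when $k_\ell$ is large the baby/giant split must be balanced against the sample size so that both lists stay at $\tilde O(\sqrt n)$ while still guaranteeing a collision, and when $k_\ell$ is small (say $O(1)$) one should just fall back to testing all $k_\ell$ cosets directly using Lemma~\ref{ainGminGi}, so the proof needs a clean case split with the right threshold (e.g. $k_\ell \le \sqrt n$ versus $k_\ell>\sqrt n$, noting $\sqrt{k_\ell}\cdot\sqrt{n/k_\ell}=\sqrt n$ makes the bookkeeping uniform). A secondary technical point is that testing ``$rs^{-1}\in G_{\ell-1}$'', or more precisely recognizing the collision, should not itself invoke the expensive membership test of Lemma~\ref{ainGminGi} at every candidate pair; instead one sorts the two lists under a fixed total order on $G$ (available since elements of $G$ are concrete table entries) and reads off an exact equality of \emph{group elements}, which is what the random-shift construction is designed to produce. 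With that, all arithmetic is $O(\log n)$ per element by Lemma~\ref{REl} and Lemma~\ref{MulInv}, and the claimed $\tilde O(\sqrt n)$ bound follows.
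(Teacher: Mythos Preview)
Your proposal is correct, but it takes a considerably more circuitous route than the paper. The paper dispenses with the coordinate-by-coordinate peeling and the baby-step/giant-step split on $k_\ell$ entirely: it simply draws two sets of $\tilde O(\sqrt{n})$ elements of $G_{i-1}$, each with \emph{known} exponent tuples (sampled exactly as in Lemma~\ref{REl}), multiplies one of the sets through by $b=a_i^{k_i}$, and looks for a single collision $a_1^{\alpha_1}\cdots a_{i-1}^{\alpha_{i-1}} = b\cdot a_1^{\beta_1}\cdots a_{i-1}^{\beta_{i-1}}$. The entire tuple $(\lambda_{i,1},\dots,\lambda_{i,i-1})$ then falls out at once as the representation of $(a_1^{\alpha_1}\cdots a_{i-1}^{\alpha_{i-1}})(a_1^{\beta_1}\cdots a_{i-1}^{\beta_{i-1}})^{-1}$, computed in $\tilde O(\log^2 n)$ time via Lemma~\ref{MulInv}. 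Since $|G_{i-1}|\le n$, a plain birthday bound already guarantees a collision with probability $1-1/poly(n)$, so no balancing between $\sqrt{k_\ell}$ and $\sqrt{n/k_\ell}$ is needed and the case split you flag as ``the main obstacle'' never arises.

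In fact your own construction already contains this shortcut: you sample $r,s\in G_{\ell-1}$ via Lemma~\ref{REl}, hence with known exponent tuples, so after your very first stage ($\ell=i-1$) you already know the full representation of $rs^{-1}\in G_{i-2}$ and could stop --- the remaining $\log n$ stages are redundant. Your approach buys nothing extra over the paper's; the single-collision argument is shorter, avoids the BSGS bookkeeping, and sidesteps the threshold analysis you were worried about.
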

\begin{proof}
    For $a_1$, we have $a_1^{k_1}=1$. Suppose we know the representations of $a_i^{k_i}=a_1^{\lambda_{i,1}}\cdots a_{i-1}^{\lambda_{i,i-1}}$ for $i=1,\ldots,j-1$. Now we have $a_j^{k_j}\in G_{j-1}= \langle a_1,a_2,\ldots,a_{j-1}\rangle$. 
    To find the representation of $a_j^{k_j}$, we choose any $O(\sqrt{n\log n})$ distinct elements in $G_{j-1}$ with known representations. This is achieved by choosing $O(\sqrt{n\log n})$ distinct elements $(w_1,\ldots,w_{j-1})\in \prod_{h=1}^{j-1}\{0,1,\ldots,k_h-1\}$ and computing $a_1^{w_1}\cdots a_{j-1}^{w_{j-1}}$. 
    Using Lemma~\ref{REl}, we choose $O(\sqrt{n\log n})$ elements uniformly at random with known representation and multiply each by $a_j^{k_j}$. If a common element appears, we obtain $(a_1^{\beta_1}\cdots a_{j-1}^{\beta_{j-1}})a_j^{k_j}=a_1^{\alpha_1}\cdots a_{j-1}^{\alpha_{j-1}}$. 
    Thus, we find that $a_j^{k_j}=(a_1^{\alpha_1}\cdots a_{j-1}^{\alpha_{j-1}})
    (a_1^{\beta_1}\cdots a_{j-1}^{\beta_{j-1}})^{-1}$, 
    which, by Lemma~\ref{MulInv}, can be computed in time $\tilde O(\log^2n)$. 

    As in the proof of Lemma~\ref{ainGminGi}, the algorithm succeeds with probability at least $1-1/poly(n)$.
\end{proof}

\subsection{The algorithm in the PS-model}
In this section, we provide a sketch of how to adapt the algorithm {\bf Random Generators} to work within the PS-model. 

Recall that in this model, the size of the group is unknown; the algorithm can receive random, uniformly distributed elements from the group and access the Cayley table of elements that have been observed so far.

%We must note here that in the literature they assume that some upper bound $q$ is known for the size of the group. See for example~\cite{GallY13}. Here, we do not have this assumption. In this case, no Monte Carlo algorithm can solve any problem. This is because the random elements of $G$ provided by the random oracle, with a small positive probability,  can all be one element. Therefore, with a positive probability, the algorithm runs any number of steps. Therefore, we need to find the expected time of the algorithm and show that the probability that it runs more than, say, twice the expectation is exponentially small.

We first show how to estimate $|G|$ within a $poly(\log|G|)$ factor. We provide a sketch of the proof.
\begin{lemma}\label{EstimateSG}
     There exists a randomized algorithm that runs in expected time $\tilde O(\sqrt{|G|})$ and, with probability at least $1-1/poly(|G|)$, returns a value $q$ satisfying $|G|\le q\le \Theta(|G|\log |G|)$. 

    The probability that this algorithm requires more than $\Theta(\sqrt{L|G|}\log^2(L|G|))$ elements of $G$ decreases exponentially with $L$.
\end{lemma}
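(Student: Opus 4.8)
The plan is to use a birthday-paradox / collision argument to estimate $|G|$ from below, then bootstrap it to a two-sided estimate. First I would draw i.i.d.\ uniform elements $b_1, b_2, \ldots$ from $G$ (available in the PS-model), inserting each into a hash table, and stop at the first index $\tau$ at which a collision occurs, i.e.\ $b_\tau = b_j$ for some $j < \tau$. By the standard birthday bound, $\tau = \Theta(\sqrt{|G|})$ with high probability: more precisely, $\Pr[\tau > s] = \prod_{i=0}^{s-1}(1 - i/|G|)$, which is $1 - o(1)$ for $s = o(\sqrt{|G|})$ and is $1/\mathrm{poly}(|G|)$ once $s = \Theta(\sqrt{|G|\log|G|})$. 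So after $O(\sqrt{|G|\log|G|})$ samples we see a collision with probability $1 - 1/\mathrm{poly}(|G|)$, and each sample costs $\tilde O(1)$ time (hash-table insert plus comparing group elements, which are words of length $O(\log|G|)$). This already gives the expected running time $\tilde O(\sqrt{|G|})$ and the stated tail bound on the running time: the probability that no collision has appeared after $c\sqrt{M|G|}$ samples is $\prod_{i<c\sqrt{M|G|}}(1-i/|G|) \le \exp(-\Omega(M))$.

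Next I would convert the stopping time into a numerical estimate. Run the collision experiment independently $\Theta(\log|G|)$ times, obtaining stopping times $\tau^{(1)},\ldots,\tau^{(r)}$, and output $q := c_0 \cdot (\max_j \tau^{(j)})^2$ for a suitable absolute constant $c_0$. The point is that a single $\tau^{(j)}$ is already $O(\sqrt{|G|\log|G|})$ w.h.p.\ (giving $q = O(|G|\log|G|)$, the claimed upper bound), while the \emph{lower} bound $q \ge |G|$ needs the maximum over $\Theta(\log|G|)$ trials: each individual trial satisfies $\tau^{(j)} \ge \epsilon\sqrt{|G|}$ only with constant probability (say $\ge 1/2$, since $\Pr[\tau \le \epsilon\sqrt{|G|}] \le \binom{\epsilon\sqrt{|G|}}{2}/|G| \le \epsilon^2/2 \le 1/2$ for small $\epsilon$), so taking the max over $\Theta(\log|G|)$ independent trials makes $\max_j\tau^{(j)} \ge \epsilon\sqrt{|G|}$ fail with probability only $2^{-\Theta(\log|G|)} = 1/\mathrm{poly}(|G|)$. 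Choosing $c_0 = 1/\epsilon^2$ then gives $q \ge |G|$. Finally I would note that we do not know $|G|$ in advance to decide when to stop sampling, so each trial is simply run until its own collision occurs; the union bound over the $\Theta(\log|G|)$ trials keeps both the total expected time at $\tilde O(\sqrt{|G|})$ and the overall failure probability at $1/\mathrm{poly}(|G|)$, and the running-time tail bound degrades only by the $\Theta(\log|G|)$ factor, which is absorbed into the $O(\sqrt{M|G|})$ by adjusting $M$ by a constant.

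The main obstacle is getting a genuine \emph{two-sided} guarantee with the right asymmetry: the upper bound $q = O(|G|\log|G|)$ is forgiving, but a naive single-run estimate $\hat q = c\,\tau^2$ is wildly variable (it can be as small as $\Theta(1)$ with probability $\Theta(1/|G|)$ per run and as large as $\Theta(|G|\log|G|)$), so the argument must be structured so that only the cheap direction (not underestimating) relies on amplification. Using the maximum of the $\tau^{(j)}$ rather than the median or mean is what achieves this cleanly, and one has to be a little careful that amplifying does not blow up the upper tail — but since each $\tau^{(j)} = O(\sqrt{|G|\log|G|})$ w.h.p.\ and we take a union bound over only $\Theta(\log|G|)$ runs, the max is still $O(\sqrt{|G|\log|G|})$ w.h.p., so $q = O(|G|\log|G|)$ survives. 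A secondary technical point is handling the PS-model access model correctly: we must verify that "uniform random element of $G$" plus "Cayley table on elements seen so far" suffices for the collision test, which it does, since detecting $b_\tau = b_j$ only requires equality comparisons among already-observed elements.
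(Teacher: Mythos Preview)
Your argument is essentially the paper's, and the collision / max-of-stopping-times idea is exactly right, but there is one real circularity you did not close. You propose to repeat the collision experiment $\Theta(\log|G|)$ times. In the PS-model $|G|$ is unknown, so you cannot set the \emph{number of trials} to $\Theta(\log|G|)$ any more than you can set a per-trial stopping time in terms of $|G|$. You noticed and handled the latter (each trial stops at its own collision) but not the former.

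The paper closes this with a two-phase bootstrap: first run a \emph{constant} number $c$ of collision trials and let $n' = \max_i n_i$; since $\Pr[\tau \le |G|^{1/4}] \le \binom{|G|^{1/4}}{2}/|G| = O(|G|^{-1/2})$ for a single trial, after $c$ independent trials one has $n' \ge |G|^{1/4}$ (hence $\log n' = \Theta(\log|G|)$) with probability $1 - O(|G|^{-c/2}) = 1 - 1/\mathrm{poly}(|G|)$. Then the second phase runs $\Theta(\log n')$ collision trials and takes $q = c_0(\max_i m_i)^2$, exactly as you described. Once you insert this bootstrap step your analysis of the lower bound $q\ge|G|$, the upper bound $q = O(|G|\log|G|)$, the expected running time, and the exponential tail all go through as written.
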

\begin{proof} 
    Here, we present the algorithm along with a sketch of the proof. The full proof is in the Appendix. 
    
    Let $c$ be a sufficiently large constant. For $c$ iterations, we continue to sample uniform random elements from $G$ until a repeat is observed. Let $n_i$ be the number of elements observed at iteration~$i$. Define $n'=\max n_i$. With probability $1-1/poly(|G|)$, we have $\Theta(\sqrt{|G|\log |G|})\ge n'\ge |G|^{1/4}$. 

    Next, for $c\log n'$ iterations, we continue to sample uniform random elements from $G$ until a repeat is observed. Let $m_i$ be the number of elements observed at iteration $i$. Then define $m=\max_i m_i$ and $q=m^2$.
    It can be shown that with probability at least $1-1/poly(|G|)$, $\sqrt{|G|}\le m\le \Theta(\sqrt{|G|\log |G|})$ and therefore $|G|\le q\le O(|G|\log |G|)$. 
\end{proof}

Lemma~\ref{ainGminGi} still applies here if we replace $n$ with $q$. The running time of the algorithm is then $\tilde O(\sqrt{q})=\tilde O(\sqrt{|G|})$.
  
The next lemma shows how to find the smallest $k_i>1$ such that $a_i^{k_i}\in G_{i-1}$ in time $\tilde O(\sqrt{|G|})$. This replaces Lemma~\ref{MinkR}.

\begin{lemma}\label{MinkRPS} Let $G$ be an Abelian group with an unknown number of elements and let $H<G$ be a subgroup of $G$. Suppose we can choose an element uniformly at random from both $H$ and $G$, and can multiply two elements in $G$ in constant time. There is a randomized algorithm that, given $a\in G$, runs in time $\tilde O(\sqrt{|G|})$ and, with probability $1-1/poly|G|$, finds the smallest $k$ such that $a^k\in H$. 
\end{lemma}
\begin{proof}
    The algorithm samples $m=O(\sqrt{q\log q})=\tilde O(\sqrt{|G|})$ integers $j_1,\ldots,j_m$ in $[2q]$ uniformly at random and computes $a^{j_i}$ for each $i\in [m]$. By the birthday paradox (see Appendix~\ref{apB}), with probability at least $1-1/poly(q)=1-1/poly(|G|)$, there are two distinct indices $j_{i_1}< j_{i_2}$ such that $a^{j_{i_1}}=a^{j_{i_2}}$. Consequently, $a^{j_{i_2}-j_{i_1}}=e$. Next, we apply the algorithm from the proof of Lemma~\ref{MinkR}, using $m=j_{i_2}-j_{i_1}$ to find $k$. 
\end{proof}

Finally, Lemma~\ref{Rkilami} also applies in the PS-model.
\color{black}
\color{black}

\section{Isomorphism and Basis for Abelian Groups}
In this section, we prove the following.
\begin{theorem}\label{FindBasis}
    There is a randomized algorithm that accesses the Cayley table of an Abelian group $G$ at most $\tilde O(\sqrt{|G|})$ times and finds a basis for $G$.
\end{theorem}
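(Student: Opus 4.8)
The plan is to chain together the tools already developed in the excerpt. First I would run the randomized algorithm {\bf Random Generators} of Theorem~\ref{AlgGabelianR} on $G$; this accesses the Cayley table $\tilde O(\sqrt{|G|})$ times and, with probability $1-1/poly(|G|)$, returns a set of generators $A=\{a_1,\ldots,a_t\}$ with $t\le \log|G|$, together with integers $k_i\ge 2$ and exponents $0\le \lambda_{i,j}\le k_i-1$ satisfying the triangular relations $a_i^{k_i}=a_1^{\lambda_{i,1}}\cdots a_{i-1}^{\lambda_{i,i-1}}$ and $a_1^{k_1}=1$, where $k_i$ is minimal with $a_i^{k_i}\in\langle a_1,\ldots,a_{i-1}\rangle$ (Lemma~\ref{ProAbb}). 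Set $K=(k_i)$ and $L=(\lambda_{i,j})$.

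The second step is purely symbolic and uses no further oracle access. By Lemma~\ref{IsoGamG}, the monomial abelian group $\Gamma(K,L)$ is isomorphic to $G$ via $\Psi(x_1^{j_1}\cdots x_t^{j_t})=a_1^{j_1}\cdots a_t^{j_t}$, and by Lemma~\ref{GamisAb} it has order $k_1\cdots k_t=|G|$. Now I would form the $t\times t$ relation matrix $R$ of $\Gamma(K,L)$ as in Section~3.3 and invoke Lemma~\ref{ConBasis}: in deterministic $poly(t,\log|G|)=poly(\log|G|)$ time, compute the Smith normal form $URV=\mathrm{diag}(m_1,\ldots,m_r,0,\ldots,0)$ with $m_1\mid m_2\mid\cdots\mid m_r$ together with the matrix $V=(v_{i,j})$, yielding a basis $y_i=\prod_{j=1}^t x_j^{v_{i,t-j}}$, $i=1,\ldots,r$, for $\Gamma(K,L)$.

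The third step pushes this basis back into $G$: the elements $b_i:=\Psi(y_i)=\prod_{j=1}^t a_j^{\,v_{i,t-j}\bmod k_j}$ form a basis of $G$, since $\Psi$ is an isomorphism carrying $\langle y_i\rangle$ onto $\langle b_i\rangle$ and a direct-product decomposition onto a direct-product decomposition. Each $b_i$ is computed from the $a_j$'s by at most $t$ group multiplications and $O(t\log|G|)$ repeated-squaring steps, hence $\tilde O(\log^2|G|)$ table accesses per basis element and $\tilde O(\log^3|G|)$ overall; I should reduce exponents mod $k_j$ first (legitimate because $a_j^{k_1\cdots k_j}=e$, cf. the proof of Lemma~\ref{GamisAb}) to keep the arithmetic bounded. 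Summing the three phases, the dominant cost is phase one, so the total number of Cayley-table accesses is $\tilde O(\sqrt{|G|})$.

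The main obstacle is really concentrated in phase one and has already been handled: getting from an arbitrary Cayley-table oracle to a workable representation of $G$ (generators with triangular relations) in sublinear time, which is exactly Theorem~\ref{AlgGabelianR}. Beyond that, the only points needing a little care are verifying that $\Psi$ transports a basis to a basis (immediate from it being an isomorphism), that the Smith-normal-form machinery of \cite{KannanB79} applies to $R$ with $\|R\|\le\max_i k_i\le |G|$ so that $\log\|R\|=O(\log|G|)$, and that all exponent arithmetic when evaluating the $b_i$ stays polynomial in $\log|G|$ — none of which affects the $\tilde O(\sqrt{|G|})$ bound. The success probability $1-1/poly(|G|)$ is inherited directly from Theorem~\ref{AlgGabelianR}, since phases two and three are deterministic.
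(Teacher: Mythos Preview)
Your proposal is correct and follows essentially the same route as the paper: run {\bf Random Generators} (Theorem~\ref{AlgGabelianR}) to obtain generators with triangular relations, pass to the monomial group $\Gamma(K,L)$ via Lemma~\ref{IsoGamG}, apply the Smith normal form (Lemma~\ref{ConBasis}) to extract a basis, and transport it back through $\Psi$. One small slip: you cannot reduce the exponent of $a_j$ modulo $k_j$ independently, since $a_j^{k_j}\in\langle a_1,\ldots,a_{j-1}\rangle$ need not be $e$; instead either reduce modulo $k_1\cdots k_j$ (which your own justification $a_j^{k_1\cdots k_j}=e$ supports), or first bring $y_i$ to canonical form inside $\Gamma(K,L)$ using Lemma~\ref{MulInv} and then apply $\Psi$ --- either way the cost stays $poly(\log|G|)$ and the $\tilde O(\sqrt{|G|})$ bound is unaffected.
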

\begin{proof}
    Given a group $G$, we start by running the procedure {\bf Random Generators} for $G$. By Theorem~\ref{AlgGAbelianR}, this algorithm runs in time $\tilde O(\sqrt{{|G|}})$ and gives a set of generators $A_G$ for $G$ with triangular relations.
    
    Next, we apply Lemma~\ref{IsoGamG} to construct the monomial Abelian group $\Gamma_G$ that is isomorphic to~$G$. 

    By Lemma~\ref{MulInv}, both the multiplication of two elements and computing the inverses in $\Gamma_G$ can be performed in time $\tilde O(\log^2 |G|)$. We then use the algorithm by Kannan and Bachem~\cite{KannanB79} to find a basis for $\Gamma_G$. By Lemma~\ref{ConBasis}, this algorithm runs in time $poly(\log |G|)$. Once we have a basis for $\Gamma_G$, we use the isomorphism between $\Gamma_G$ and $G$ in Lemma~\ref{IsoGamG} to get the basis of $G$.
\end{proof}

We now prove the following.
\begin{theorem}
    The Abelian group isomorphism problem can be solved in time $\tilde O(\sqrt{|G|})$.
\end{theorem}
\begin{proof}
    Given two groups $G$ and $H$, by Theorem~\ref{FindBasis}, we can find the basis for both $G$ and $H$ in time\footnote{Here we assume that $|H| = \Theta(|G|)$. Otherwise, we can run the algorithm on both $G$ and $H$ in parallel and halt when the number of bases of one exceeds that of the other.} $\tilde O(\sqrt{|G|})$. Using the Smith normal form, we can find groups $G'$ and $H'$ of the form $\mathbb{Z}_{m_1} \times \cdots \times \mathbb{Z}_{m_t}$ that are isomorphic to $G$ and $H$, respectively. Then, checking whether $G'$ is isomorphic to $H'$ is straightforward.
\end{proof}

\section{Lower Bounds}

In this section, we give all the lower bounds. 

We begin with some preliminary results, followed by two lemmas that yield the lower bounds.

\subsection{Preliminary Results}
Let $H_1={\mathbb{Z}}_{p^2}^m$ and $H_2=\mathbb{Z}_{p^2}^{m-1}\times \mathbb{Z}_p^2$. 

We first prove.
\begin{lemma}\label{PrElpm}
    For uniformly random elements $h_1,\ldots,h_r\in H_2$ and $h_1',\ldots,h_r'\in H_1$, and any $w_1,\ldots,w_r\in {\mathbb{Z}_{p^2}}$ not all zero
    \begin{eqnarray}
    \Pr[w_1h_1'+\cdots+w_rh_r'=0^{m}]\le  \Pr[w_1h_1+\cdots+w_rh_r=0^{m+1}]\le \frac{1}{p^{m-1}}.\label{NPEq1}
    \end{eqnarray}

    For uniformly random distinct elements $h_1,\ldots,h_r\in H_2$ and $h_1',\ldots,h_r'\in H_1$, and for any $w_1,\ldots,w_r\in {\mathbb{Z}_{p^2}}$ not all zero
    \begin{eqnarray}\label{NPEq2}
    \Pr[w_1h_1'+\cdots+w_r h_r'=0^{m}]\le \Pr[w_1h_1+\cdots+w_rh_r=0^{m+1}]\le \frac{1}{p^{m-1}}+\frac{r^2}{2p^{2m}}.
    \end{eqnarray}

%    For random uniform distinct $h_1,\ldots,h_r\in H_2$, $h_1',\ldots,h_r'\in H_1$ 
%    \begin{eqnarray}\label{NPEq3}
%        \Pr[(\exists i)ph_i'=0^{m}]\le \Pr[(\exists i)ph_i=0^{m+1}]\le \frac{r}{p^{m-1}}+\frac{r^2}{2p^{2m}}.
%    \end{eqnarray}
\end{lemma}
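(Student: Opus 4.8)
The plan is to reduce everything to a single computation in $\mathbb{Z}_{p^2}$ (or $\mathbb{Z}_p$) and then account for the distinctness constraint separately. First consider the right-hand inequalities with \emph{independent uniform} elements. In $H_2=\mathbb{Z}_{p^2}^{m-1}\times\mathbb{Z}_p^2$, write $h_i=(h_i^{(1)},\dots,h_i^{(m+1)})$, where the first $m-1$ coordinates are uniform in $\mathbb{Z}_{p^2}$ and the last two are uniform in $\mathbb{Z}_p$ (lifted into $\mathbb{Z}_{p^2}$ as multiples of $p$, say). Then $w_1h_1+\cdots+w_rh_r=0$ forces, coordinate by coordinate, $\sum_i w_i h_i^{(j)}=0$ in the appropriate cyclic group. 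Fix a coordinate $j$ for which there is some $w_i$ with $p\nmid w_i$ if such a coordinate and index exist; if \emph{every} $w_i$ is divisible by $p$, factor out $p$ and reduce to the analogous statement over $\mathbb{Z}_p$. The key elementary fact is: if $a_1,\dots,a_r$ are uniform and independent in $\mathbb{Z}_q$ and at least one $w_i$ is a unit mod the relevant prime, then $\Pr[\sum w_i a_i=0]\le 1/p$ when $q=p$ (the term with the unit coefficient can be solved for, giving a unique value among $p$), and $\le 1/p$ as well for a $\mathbb{Z}_{p^2}$-coordinate when we only ask for the equation mod $p$. Using the $m-1$ full $\mathbb{Z}_{p^2}$ coordinates each contributing independently, and being slightly careful about which coordinates the ``not all zero'' witness lives in, one gets a bound of the form $1/p^{m-1}$; the worst case is when the nonzero $w_i$ pattern only interacts nontrivially with $m-1$ of the coordinates, which is exactly where the exponent $m-1$ (rather than $m$ or $m+1$) comes from. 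A clean way to organize this: let $d=\gcd(\{w_i\},p^2)\in\{1,p,p^2\}$; if $d=p^2$ the probability is $1$ but the hypothesis excludes all-zero only, so we must check $d<p^2$, which holds since not all $w_i$ are zero in $\mathbb{Z}_{p^2}$ — wait, that's not automatic, so instead split on $d=1$ versus $d=p$ and in each case count the number of solutions of the linear equation in each coordinate group.

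For the left-hand inequality $\Pr[\text{in }H_1]\le\Pr[\text{in }H_2]$, the cleanest argument is a coupling/projection: $H_2$ is obtained from $H_1=\mathbb{Z}_{p^2}^m$ by replacing one $\mathbb{Z}_{p^2}$ factor with $\mathbb{Z}_p^2$, i.e. there is a surjection, but more usefully, a uniform element of $H_1$ can be built from a uniform element of $H_2$ by adding independent ``extra randomness'' in the coordinate that got shrunk. Concretely, write the last two coordinates of an $H_1$-element as $(u,v)\in\mathbb{Z}_{p^2}^2$ and of an $H_2$-element as $(pu',pv')$ — no, better: identify $\mathbb{Z}_{p^2}^2\supseteq p\mathbb{Z}_{p^2}\times p\mathbb{Z}_{p^2}\cong\mathbb{Z}_p^2$, so an $H_2$-tuple is an $H_1$-tuple conditioned on lying in a subgroup. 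Then $\Pr_{H_1}[\sum w_ih_i'=0]=\Pr_{H_1}[\text{event}]$ and $\Pr_{H_2}[\sum w_ih_i=0]=\Pr_{H_1}[\text{event}\mid \text{all }h_i'\in\text{subgroup}]$, and one shows the conditioning can only help by noting the event is itself ``closed under the subgroup translation'' in the right sense — or, most robustly, just verify both sides directly from the coordinate computation above and observe termwise that the $H_1$ count is dominated by the $H_2$ count. I expect the honest route is the direct one: compute both probabilities explicitly as products over coordinate-blocks of (number of solutions)/(size), and the $H_1$ value is $\le$ the $H_2$ value block by block because shrinking a $\mathbb{Z}_{p^2}$ factor to $\mathbb{Z}_p$ cannot decrease the fraction of solutions of a linear equation.

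For the \emph{distinct} case, inequality~(\ref{NPEq2}), I would bound the distribution of an ordered tuple of distinct elements by the distribution of an i.i.d.\ tuple using inclusion–exclusion: for any event $E$, $\Pr_{\text{distinct}}[E]\le \Pr_{\text{iid}}[E]/\Pr_{\text{iid}}[\text{all distinct}]$, and $\Pr_{\text{iid}}[\text{all distinct}]\ge 1-\binom{r}{2}/|H|\ge 1-r^2/(2p^{2m})$ since $|H_1|=|H_2|=p^{2m}$. Hence $\Pr_{\text{distinct}}[E]\le \Pr_{\text{iid}}[E]\cdot(1-r^2/(2p^{2m}))^{-1}\le \Pr_{\text{iid}}[E]+\text{(small)}$; plugging the $1/p^{m-1}$ bound from the first part and using $1/(1-x)\le 1+2x$ for small $x$ gives the stated $1/p^{m-1}+r^2/(2p^{2m})$ after absorbing constants (one should check the arithmetic gives exactly that constant, possibly needing $r^2/(2p^{2m})$ rather than a larger constant times it — if the bookkeeping is tight one writes $\Pr_{\text{iid}}[E]/(1-\binom r2/p^{2m})$ and bounds $\binom r2\le r^2/2$). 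The main obstacle is the core linear-algebra fact in the first paragraph: correctly identifying, as a function of the divisibility pattern of the $w_i$'s, how many of the $m+1$ coordinate-blocks genuinely constrain the tuple and showing that number is always at least $m-1$, so that the exponent in the bound is never worse than $m-1$. Everything else — the $H_1\le H_2$ comparison and the distinct-to-iid reduction — is routine once that is in hand.
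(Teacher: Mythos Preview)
Your approach is correct but substantially more complicated than the paper's. The paper avoids the coordinate-by-coordinate analysis entirely with one clean observation: if some $w_j$ is a unit in $\mathbb{Z}_{p^2}$, then (conditioning on the other $h_i$) the map $h_j\mapsto w_jh_j+c$ is a bijection of $H_2$, so $W=\sum_i w_ih_i$ is \emph{uniform in $H_2$} and $\Pr[W=0]=1/|H_2|=1/p^{2m}$. If all $w_i$ are divisible by $p$, write $w_i=pw_i'$; then $W=pW'$ with $W'$ uniform in $H_2$, and $pW'=0$ iff each $\mathbb{Z}_{p^2}$-coordinate of $W'$ lies in $p\mathbb{Z}_{p^2}$ (the two $\mathbb{Z}_p$-coordinates are automatically killed), giving probability $p^{m-1}\cdot p^2/p^{2m}=1/p^{m-1}$. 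The same two cases for $H_1$ give $1/p^{2m}$ and $1/p^m$ respectively, which establishes the left inequality of~(\ref{NPEq1}) by direct comparison---no coupling or subgroup-conditioning argument is needed. Your split on $d=\gcd(\{w_i\},p^2)\in\{1,p\}$ is exactly this case distinction (your worry about $d=p^2$ is unfounded: ``not all zero in $\mathbb{Z}_{p^2}$'' precisely means some $w_i\not\equiv 0\pmod{p^2}$), but you then work harder than necessary inside each case.

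For~(\ref{NPEq2}) the paper uses the slightly simpler additive form of your reduction: from $\Pr_{\mathrm{distinct}}[E]=\Pr_{\mathrm{iid}}[E\mid D]\le \Pr_{\mathrm{iid}}[E\cap D]+\Pr_{\mathrm{iid}}[D^c]\le \Pr_{\mathrm{iid}}[E]+\Pr_{\mathrm{iid}}[D^c]$, together with $\Pr_{\mathrm{iid}}[D^c]\le \binom{r}{2}/|H_i|\le r^2/(2p^{2m})$. Your multiplicative route via $\Pr_{\mathrm{iid}}[E]/\Pr_{\mathrm{iid}}[D]$ also works and in fact gives a tighter bound, so either is fine. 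Neither the paper nor your proposal carefully justifies the \emph{left} inequality of~(\ref{NPEq2}) (the $H_1$ vs.\ $H_2$ comparison under the distinctness conditioning), but that inequality is never used downstream---only the rightmost bound matters for Lemma~\ref{IsoKnown}.
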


\begin{proof}
    We prove (\ref{NPEq1}). If some $w_j$ is not divisible by $p$, then $W=w_1h_1+\cdots+w_rh_r$ is a uniform random element in $H_2$, and the probability that $W=0$ is $1/|H_2|\le 1/p^{2m}$. If $p$ divides all $w_i$, then $w_i=pw_i'$, $W=pW'$, and $W'$ is a uniform random element in $H_2$. Since $pW'=0$ if and only if $p|W'_i$ for all $i\in [m-1]$, the probability is $p^{m-1}p^2/|H_2|=1/p^{m-1}$. 
    
    Similar reasoning yields the result for $H_1$.

    We prove (\ref{NPEq2}). For random uniform $h_1,\ldots,h_r\in H_i$, $i\in\{1,2\}$, the probability that $h_1,\ldots,h_r$ are not distinct is at most
    $r(r-1)/(2|H_i|)\le r^2/(2p^{2m})$. Combining this with (\ref{NPEq1}) gives the result.

%    (\ref{NPEq3}) follows from (\ref{NPEq1}) and the probability that $h_1,\ldots,h_r$ are not distinct. 
\end{proof}

\subsection{Three Lower Bounds Proofs}
We first prove the following lower bound.

\begin{lemma}\label{IsoKnown}
    Let ${\cal G}$ be the set of all groups\footnote{Here and in all the lemmas below, we assume that the set of elements of the group is a subset of a fixed countable set. We make this assumption because, in classical set theory (ZFC), the class (or category) of all groups isomorphic to a given group does not form a set.} that are isomorphic to either $H_1={\mathbb{Z}}_{p^2}^m$ or $H_2=\mathbb{Z}_{p^2}^{m-1}\times \mathbb{Z}_p^2$. Any algorithm that, with probability at least $2/3$, determines whether $G\in {\cal G}$ is isomorphic to $H_1$ or $H_2$, must access the elements of $G$ and its Cayley table at least $\Omega(p^{m/2-1/2})=\Omega({|G|}^{1/4}/\sqrt{p})$ times. 
\end{lemma}
\begin{proof}
    We denote the group sum in  $H_r$, for $r=1,2$, by $+_r$. Consider a set of abstract elements $\Sigma=\{\sigma_1,\ldots,\sigma_t\}$ with $t=p^{2m}$ and a bijective map $\phi:\Sigma\to H_r$ where $r=1,2$. Define the operation $\sigma_i+_{r,\phi}\sigma_j=\phi^{-1}(\phi(\sigma_i)+_r\phi(\sigma_j))$. Then $(\Sigma,+_{r,\phi})$ is a group isomorphic to $H_r$. Let ${\cal G}'=\{(\Sigma,+_{r,\phi})|\phi\ \mbox{bijective}, r=1,2\}.$ Note that the algorithm does not know $\phi$ or $r$, but can compute $\sigma_i+_{r,\phi}\sigma_j$ by accessing the Cayley table. 
    
    We will use Yao's minimax principle. We will show that
    for any deterministic algorithm~${\cal A}$ that decides whether a given group~$G\in {\cal G}'$ is isomorphic to~$H_1$ or~$H_2$ (by outputting~$1$ or~$2$), if~${\cal A}$ is executed on a uniformly random group~$G  \in {\cal G}'$, then with probability at least~$2/3$,~${\cal A}$ must access elements of~$G$ and entries of its Cayley table at least $\Omega(|G|^{1/4}/\sqrt{p})$ times.

    Let ${\cal A}$ be a deterministic algorithm that accesses the elements of a random uniform group $G\in {\cal G}'$ and its Cayley table at most $$T=p^{(m-1)/2}/20$$ times and, with probability at least $2/3$, outputs $r\in\{1,2\}$ if $G$ is isomorphic to $H_r$. We assume that the commands of the algorithm are labeled with numbers $\{1,2,\ldots\}$ and each command is one of the following types:
    \begin{enumerate}[label=Type \arabic*., leftmargin=*, itemsep=0pt]
        \item $z_i\gets \sigma_{j}$.
        \item $z_i\gets z_j+_{r,\phi} z_k$.
        \item If $z_i=z_j$ Goto line $w$.
        \item Output $z_i$
    \end{enumerate}
    Notice that commands of type 1 and 2 are used at most $T$ times, while the command of type $3$ can be used any number of times. This allows the algorithm to search in a table of already accessed elements without any additional cost. 
 
    Define the free $\mathbb{Z}_{p^2}$-module $L=\{\alpha_1x_1+\cdots+\alpha_{w}x_{w}|\alpha_i\in \mathbb{Z}_{p^2},w\in[|G|]\}$ with coefficients in $\mathbb{Z}_{p^2}$, where $\{x_i\}_{i=1}^\infty$ are formal elements. Consider the same algorithm ${\cal A}$, modified so that each $\sigma_j$ in the algorithm is replaced with the formal element $x_j$, each $z_j$ with the new variable $z_j'$, and $+_{r,\phi}$ with the group sum $+$ of $L$. Denote this modified algorithm as ${\cal A}'$. The elements created by ${\cal A}'$ are thus elements of the group $L$. 
    
    We then execute the algorithm ${\cal A}'$ until it terminates, and outputs $r_0\in\{1,2\}$. Note that, since in ${\cal A}'$ each $\sigma_i$ is replaced with elements in $L$, $+_{r,\phi}$ is replaced with $+$ of $L$, and comparisons in the IF commands involve elements of $L$, the execution proceeds along a single, well-defined path $P$ in the algorithm ${\cal A}'$.

    In this execution (path $P$ in ${\cal A}'$), the algorithm ${\cal A}'$ creates at most $T$ variables $z_i'$ where each variable is a linear combination of at most $T$ formal elements $x_i$. We will assume, without loss of generality, that the variables are
    $z_1',z_2',\ldots,z_\ell'$ with $\ell\le T$ and the formal elements are $x_1,\ldots,x_{T}$. Thus, for every $i\in [\ell]$, we have $z_i'=z_{i,1}'x_1+\cdots+z_{i,T}'x_{T}$, $z_{i,j}'\in \mathbb{Z}_{p^2}$, which is an element in $L$. 
    If we follow the same execution path $P$ in ${\cal A}$, disregarding the If command, we obtain $z_i=z_{i,1}'\sigma_1+_{r,\phi}\cdots+_{r,\phi} z_{i,T}'\sigma_{T}.$
    This holds because, in the groups $L$, $H_1$, and $H_2$, the sum is taken modulo $p^2$.
    %, all the elements in $H_1$ are of order $p^2$, and, by (\ref{NPEq3}) in Lemma~\ref{PrElpm}, if $G$ is isomorphic to $H_2$, then the probability that one of the $\sigma_i$ is of order $p$ is
    %\begin{eqnarray*}
    %    \Pr_\phi[(\exists i\in[T])p\sigma_i=0]&=&\Pr_\phi[(\exists i)\sigma_i+_{2,\phi}\overset{p}{\cdots} +_{2,\phi}\sigma_i=0]\\
    %    &=&\Pr_\phi[(\exists i\in[T])\phi^{-1}(\phi(\sigma_i)+_2\overset{p}{\cdots}+_2\phi(\sigma_i))=0]\\
    %    &=&\Pr_\phi[(\exists i\in[T])p\phi(\sigma_i)=0^{m+1}]\\
    %    &\le& \frac{T}{p^{m-1}}+\frac{T^2}{p^{2m}}=o(1).
    %\end{eqnarray*}

    Therefore, in all the ``If $z_i=z_j$ Goto line $w$'' commands along path $P$, if $z_i'=z_j'$, then $z_i=z_j$. If $z_i'\not=z_j'$, then by Lemma~\ref{PrElpm}, 
    \begin{eqnarray*}      \Pr_{r,\phi}[z_i\not=z_j]&=&\Pr_{r,\phi}[z_{i,1}'\sigma_1+_{r,\phi}\cdots +_{r,\phi} z_{i,T}'\sigma_{T}\not=z_{j,1}'\sigma_1+_{r,\phi}\cdots +_{r,\phi} z_{j,T}'\sigma_{T}]\\
    &=& \Pr_{r,\phi}\left[\phi^{-1}\left(\sum_{r=1}^{T}(z_{i,r}'-z_{j,r})\phi(\sigma_i)\right)\not=0\right]\ \ \ \mbox{Here the sum is $+_r$}\\
    &\ge& \Pr_{r,\phi}\left[\sum_{r=1}^{T}(z_{i,r}'-z_{j,r})\phi(\sigma_i)\not=0^{m+1}\right]\mbox\ \ \mbox{Here the sum is $+_2$}\\
    &\ge& 1-\left(\frac{1}{p^{m-1}}+\frac{T^2}{2p^{2m}}\right).
    \end{eqnarray*}
    Therefore, the probability that for all $i,j\in [T]$, if $z_i'\not=z_j'$, then $z_i\not=z_j$ is at least
    $$1-\left(\frac{T(T-1)}{2}\left(\frac{1}{p^m}+\frac{T^2}{2p^{2m}}\right)\right)\ge 1-\frac{T^2}{p^{m-1}}-\frac{T^4}{p^{2m}}\ge \frac{99}{100}.$$  
    Thus, with probability, at least $99/100$, algorithm ${\cal A}$ follows the same path of execution $P$ as ${\cal A}'$ and output $r_0$. So, if $G$ is isomorphic to $H_r$, and $r\not=r_0$, the algorithm fails with probability at least $99/100$. Therefore, with probability at least $(1/2)(99/100)>1/3$, the algorithm ${\cal A}$ fails. A contradiction.\qed
    
    %Since $x_i$ are formal elements, and ${\cal A}'$ is deterministic, taking all the commands of type 1 and 2 in this path and ignoring the IF commands give a straight-line algorithm (algorithm with only type 1 and 2 commands) that does the same as ${\cal A}'$. This shows that taking the same commands of type 1 and 2 in ${\cal A}$ gives an algorithm with a success probability of at least $2/3-1/100>0.6$ of returning a special element. 
\end{proof}

\begin{lemma}\label{IsunKnown}
    Let ${\cal G}$ be the set of all groups isomorphic to either $H_1=\mathbb{Z}_{p}^{m-1}$ or $H_2=\mathbb{Z}_{p}^m$. Any algorithm that, for $G\in {\cal G}$ of unknown size, decides with probability at least $2/3$ whether $G$ is isomorphic to $H_1$ or $H_2$ must access an oracle that selects uniformly random elements of $G$ and access the Cayley table of $G$ at least $\Omega(p^{m/2-1/2}/\log p)=\Omega({|G|}^{1/2}/(\sqrt{p}\log p))$ times. 
\end{lemma}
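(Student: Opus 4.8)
The plan is to reduce the unknown-size distinguishing problem for $H_1=\mathbb{Z}_p^{m-1}$ versus $H_2=\mathbb{Z}_p^m$ to the known-size problem of Lemma~\ref{IsoKnown}, exactly as outlined in the Technique section. Given abelian groups $K_1={\mathbb{Z}}_{p^2}^{m}$ and $K_2={\mathbb{Z}}_{p^2}^{m-1}\times {\mathbb{Z}}_p^2$ (the $H_1,H_2$ of Lemma~\ref{IsoKnown}, renamed here to avoid a clash), the key algebraic fact is that the subgroup $pK_1=\{p g : g\in K_1\}$ is isomorphic to $\mathbb{Z}_p^{m}$ while $pK_2$ is isomorphic to $\mathbb{Z}_p^{m-1}$. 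So if $\mathcal B$ is an algorithm that distinguishes $H_2=\mathbb{Z}_p^{m}$ from $H_1=\mathbb{Z}_p^{m-1}$ with probability $\ge 2/3$ using $T$ queries to a uniform-random-element oracle and the Cayley table, I want to build an algorithm $\mathcal A$ that distinguishes $K_1$ from $K_2$ (in the FS-model, where the size is known) using $O(T\log p)$ or so queries, and then invoke Lemma~\ref{IsoKnown} to get $T\cdot\mathrm{polylog}=\Omega(p^{m/2-1/2})$, hence $T=\Omega(p^{m/2-1/2}/\log p)$.

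First I would handle element sampling: given the Cayley table of $G\cong K_i$ with $|G|$ known, to feed $\mathcal B$ a uniform random element of the subgroup $pG$, pick a uniform random element $g\in G$ (possible since $|G|$ is known and all elements are listed in the FS-model) and output $pg$, computed by repeated doubling using the Cayley table in $O(\log p)$ multiplications; this is uniform on $pG$ since the map $g\mapsto pg$ is a group homomorphism of $K_i$ and hence distributes uniformly over its image. Second, $\mathcal B$ needs the Cayley table of $pG$; but $pG\subseteq G$, so any product $\mathcal B$ requests between two elements of $pG$ is answered directly by the Cayley table of $G$ at unit cost. Third, $\mathcal B$ requires the identity element and possibly equality tests, all of which transfer verbatim. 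Thus each of $\mathcal B$'s $T$ oracle calls costs $O(\log p)$ Cayley-table accesses of $G$, and every other step costs $O(1)$, so $\mathcal A$ makes $O(T\log p)$ accesses. When $\mathcal B$ outputs "$\cong \mathbb{Z}_p^m$" we output "$G\cong K_1$"; when it outputs "$\cong\mathbb{Z}_p^{m-1}$" we output "$G\cong K_2$". Correctness probability is inherited: $\mathcal A$ succeeds with probability $\ge 2/3$ whenever $\mathcal B$ does.

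Finally, chaining the bounds: Lemma~\ref{IsoKnown} gives that $\mathcal A$ must use $\Omega(|K_1|^{1/4}/\sqrt p)=\Omega(p^{2m\cdot(1/4)}/\sqrt p)=\Omega(p^{m/2-1/2})$ accesses, so $O(T\log p)=\Omega(p^{m/2-1/2})$, i.e. $T=\Omega(p^{m/2-1/2}/\log p)$. Since $|G|$ in the statement refers to $|H_2|=p^m$ (the larger of the two candidate groups), we have $p^{m/2-1/2}=\Theta(\sqrt{|G|}/\sqrt p)$, yielding $T=\Omega(\sqrt{|G|}/(\sqrt p\log p))$, as claimed. One subtlety I would be careful about: the original Lemma~\ref{IsoKnown} is stated in the FS-model (size known) and the logarithmic overhead from the $g\mapsto pg$ doubling is exactly what produces the extra $1/\log p$ factor relative to the FS lower bound, so the accounting must keep this factor explicit rather than absorbing it into $\tilde O(\cdot)$.

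The main obstacle I expect is making the reduction's simulation airtight in the query model: one must verify that $\mathcal B$, which is promised to work only when handed a genuine Cayley-table oracle for a group isomorphic to $\mathbb{Z}_p^m$ or $\mathbb{Z}_p^{m-1}$, is in fact handed exactly that — i.e., that the "virtual group" $pG$ with the inherited operation really is such a group and that the random elements $\mathcal B$ receives are genuinely uniform and independent, so that $\mathcal B$'s success guarantee applies unchanged. The homomorphism argument above settles this, but it is the step where the reduction could silently fail (for instance if $pG$ were presented with a non-uniform sampler), so it deserves the most care.
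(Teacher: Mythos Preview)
Your proposal is correct and follows essentially the same approach as the paper: both reduce the FS-model problem of distinguishing $\mathbb{Z}_{p^2}^m$ from $\mathbb{Z}_{p^2}^{m-1}\times\mathbb{Z}_p^2$ (Lemma~\ref{IsoKnown}) to the PS-model problem at hand via the observation that $p\mathbb{Z}_{p^2}^m\cong\mathbb{Z}_p^m$ and $p(\mathbb{Z}_{p^2}^{m-1}\times\mathbb{Z}_p^2)\cong\mathbb{Z}_p^{m-1}$, with the $\log p$ overhead coming from computing $g\mapsto pg$ by repeated doubling. The paper's proof is a terse three-line sketch of exactly this reduction; your version simply fills in the simulation details (uniform sampling on $pG$ via the surjective homomorphism, Cayley-table pass-through) that the paper leaves implicit.
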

\begin{proof}
    Since $p(\mathbb{Z}_{p^2}^{m-1}\times \mathbb{Z}_p^2)$ is isomorphic to $\mathbb{Z}_p^{m-1}$ and $p\mathbb{Z}_{p^2}^m$ is isomorphic to $\mathbb{Z}_{p}^m$, if there exists an algorithm that can distinguish between $H_1=\mathbb{Z}_{p}^{m-1}$ and $H_2=\mathbb{Z}_{p}^m$ in time $T$, then we can solve the problem in Lemma~\ref{IsoKnown} in time $T\log p$. Since by Lemma~\ref{IsoKnown}, $T\log p=\Omega(p^{m/2-1/2})$, the result follows. 
\end{proof}

\begin{lemma}\label{IsoKnownDet}
    Let ${\cal G}$ be the set of all groups that are isomorphic to either $D_1={\mathbb{Z}}_{p}^m$ or $D_2=\mathbb{Z}_{p}^{m-2}\times \mathbb{Z}_{p^2}$. Any deterministic algorithm that determines whether $G\in {\cal G}$ is isomorphic to $D_1$ or $D_2$, must access the elements of $G$ and its Cayley table at least $\Omega(p^{m-2})=\Omega({|G|}/{p^2})$ times. 
\end{lemma}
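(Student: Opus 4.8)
The plan is to use an adversary (rather than Yao's principle), exactly as sketched in Section~2.3. The key structural fact is that both $D_1=\ZZ_p^m$ and $D_2=\ZZ_p^{m-2}\times\ZZ_{p^2}$ contain a subgroup isomorphic to $\ZZ_p^{m-2}$ that ``looks the same'' from the inside: in $D_1$ take $W_1=\ZZ_p^{m-2}\times\{(0,0)\}$, and in $D_2$ take $W_2=\ZZ_p^{m-2}\times\{0\}$ (the last coordinate in $\ZZ_{p^2}$). Both $W_1$ and $W_2$ are isomorphic to $\ZZ_p^{m-2}$, both are subgroups, and each is closed under the group operation, with $|W_1|=|W_2|=p^{m-2}$.

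The argument runs as follows. First I would invoke Yao's minimax principle, so it suffices to exhibit, for a fixed deterministic algorithm $\cA$ making at most $T=p^{m-2}$ accesses, an input distribution on which $\cA$ fails. I take the uniform distribution over the two abstract groups $(\Sigma,+_{r,\phi})$ for $r\in\{1,2\}$ with $\phi$ an appropriate bijection; concretely, I fix one bijection $\psi_r:\Sigma\to D_r$ and answer queries so that the answers are consistent with \emph{both} $D_1$ and $D_2$ for as long as possible. The adversary maintains the invariant that every abstract element $\sigma_i$ that $\cA$ has ``touched'' (either requested directly as a new element, or produced by a sum) is assigned, consistently, to an element of $W_1$ under the $D_1$-interpretation and to the corresponding element of $W_2$ under the $D_2$-interpretation. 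Whenever $\cA$ requests a fresh uniform random element via a Type-1 command $z_i\gets\sigma_j$, the adversary hands back a previously unused element and places it anywhere in $W_1$ (equivalently $W_2$) — these two copies of $\ZZ_p^{m-2}$ are abstractly indistinguishable, so the adversary need not commit to $r$. Whenever $\cA$ performs a Type-2 command $z_i\gets z_j+_{r,\phi}z_k$, the sum of two elements of $W_1$ is again in $W_1$ (and likewise in $W_2$), and the adversary computes it in that common subgroup; the Type-3 equality tests are then answered by literal equality of the assigned labels, which is identical under the two interpretations. Since only Type-1 and Type-2 commands cost an access and they are used at most $T=p^{m-2}$ times, the algorithm touches at most $p^{m-2}$ distinct elements, all of which can be consistently placed inside $W_1\cong W_2\cong\ZZ_p^{m-2}$, which has exactly $p^{m-2}$ elements. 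Hence the \emph{entire transcript} seen by $\cA$ — every element returned and every equality answer — is identical whether the hidden group is (a group isomorphic to) $D_1$ or $D_2$. Therefore $\cA$ produces the same output on both, and since it must output $1$ on $D_1$ and $2$ on $D_2$, it is wrong on at least one of them, i.e. it errs with probability at least $1/2$ under the uniform distribution, contradicting correctness. Consequently any deterministic algorithm needs strictly more than $p^{m-2}$ accesses, giving $\Omega(p^{m-2})=\Omega(|G|/p^2)$ since $|G|=p^m$.

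The one point requiring a little care — and the main obstacle, modest as it is — is ensuring the adversary truly \emph{never has to commit} to $r$ before the $(p^{m-2}+1)$-st access: I must argue that at every step the partial assignment built so far extends to a genuine isomorphism-to-$D_1$ and also to a genuine isomorphism-to-$D_2$, i.e. that the partial ``multiplication data'' revealed is consistent with some global bijection in each case. This is where closure of $W_1$ and $W_2$ under $+$, together with $|W_1|=|W_2|=p^{m-2}\ge T$, is exactly what is needed: any injective partial map from the touched elements into $W_1$ that respects the recorded sums extends (arbitrarily on untouched elements, then padding the remaining $p^m-p^{m-2}$ elements of $\Sigma$ by any bijection onto $D_1\setminus W_1$) to a full bijection $\Sigma\to D_1$ that is a group isomorphism when $\Sigma$ is given the pulled-back operation — and symmetrically for $D_2$. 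So the transcripts coincide, and the lower bound follows. One should also note the bound is vacuous unless $m\ge 3$ (so that $D_1\not\cong D_2$), which is the regime of interest, and that the hidden constant in $\Omega(\cdot)$ absorbs the ``$+1$'' in ``$p^{m-2}+1$.''
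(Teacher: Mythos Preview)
Your proposal is correct and takes essentially the same adversary approach as the paper: both exploit the common subgroup $W\cong\ZZ_p^{m-2}$ sitting inside each of $D_1$ and $D_2$, keep all touched elements inside $W$ (which is closed under the group operation), and conclude that the first $p^{m-2}$ accesses produce identical transcripts under either interpretation. Your write-up is simply a more detailed version of the paper's terse three-sentence argument; the one cosmetic wrinkle is that you mention Yao's minimax principle, which is unnecessary here since the algorithm is already deterministic---the pure adversary argument you actually carry out suffices on its own.
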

\begin{proof}
    An adversary can provide elements from $W=\ZZ_p^{m-2}\times \{(0,0)\}$ for each access to an element $\sigma_i$. The sum of any two elements in $W$ remains in $W$, preventing the algorithm from distinguishing between the Abelian groups $D_1$ and $D_2$ until the $(p^{m-2}+1)$-th element is requested. At this point, the adversary must reveal whether the hidden group is $D_1$ or $D_2$. Thus, the complexity of the algorithm is at least $\Omega(p^{m-2})=\Omega(n)$ for any constant $p$. 
\end{proof}

\subsection{Lower Bounds for Isomorphism and Basis}
In this section, we prove the lower bounds.

We now show
\begin{theorem}\label{lb01}
    In the FS-model, the following problems cannot be solved in time less than $\Omega(n^{1/4})$. 
    \begin{enumerate}
        \item Given two Abelian groups $G$ and $H$ of size $n$, decide if $G$ is isomorphic to $H$.
        \item Given an Abelian group $G$ of size $n$, find a basis for $G$.
        \item Given an Abelian group $G$ of size $n$, find a set of generators for $G$ with relations of size at most $n^{o(1)}$. 
    \end{enumerate}
\end{theorem}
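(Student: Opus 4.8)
The plan is to reduce each of the three problems to the decision problem studied in Lemma~\ref{IsoKnown}, namely distinguishing $H_1=\ZZ_{p^2}^m$ from $H_2=\ZZ_{p^2}^{m-1}\times\ZZ_p^2$, which requires $\Omega(p^{m/2-1/2})=\Omega(n^{1/4}/\sqrt p)$ queries in the FS-model. Choosing $p$ to be a fixed constant (say $p=2$) absorbs the $\sqrt p$ factor and gives the clean bound $\Omega(n^{1/4})$, where $n=|G|=p^{2m}$. So it suffices to show that an algorithm solving any of (1)--(3) in time $o(n^{1/4})$ would yield an algorithm contradicting Lemma~\ref{IsoKnown}.

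\emph{Part (1)} is immediate: given $G\in{\cal G}$, hand the pair $(G,H_1)$ to the claimed isomorphism tester. Since $G$ is isomorphic to exactly one of $H_1,H_2$ and $H_1\not\cong H_2$, the tester outputs ``isomorphic'' iff $G\cong H_1$. The group $H_1$ is explicitly known and its Cayley table is computable in $poly(\log n)$ time per entry, so this adds no asymptotic overhead; the query count to $G$ is unchanged, so we inherit the $\Omega(n^{1/4})$ bound.

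\emph{Parts (2) and (3)} reduce via the structure of a basis / a generating set with small relations. If we can compute a basis of $G$ in time $o(n^{1/4})$, then we read off the multiset of prime-power cyclic factors: $H_1$ has all factors equal to $\ZZ_{p^2}$ (i.e.\ $m$ copies of $\ZZ_{p^2}$, rank $m$), whereas $H_2$ has $m-1$ copies of $\ZZ_{p^2}$ and two copies of $\ZZ_p$ (rank $m+1$). These invariants differ — e.g.\ the rank alone distinguishes them — so the basis determines which of $H_1,H_2$ equals $G$, again contradicting Lemma~\ref{IsoKnown}. For (3), suppose we obtain generators $a_1,\dots,a_t$ together with relations of total size $n^{o(1)}$; feeding these to the Smith-normal-form machinery of Lemma~\ref{ConBasis} (which runs in time $poly(t,\log\|R\|)=poly(\log n)$ since the relations have size $n^{o(1)}\le n^{1/4}$, hence bit-length $O(\log n)$) produces a basis, and we finish as in (2). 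The one point needing care is that a ``set of generators with relations of size at most $n^{o(1)}$'' need not a priori be in triangular form; but any finite presentation of an abelian group on $t\le\log n$ generators with relation matrix of polynomially bounded norm is exactly the input Kannan--Bachem accepts, so Lemma~\ref{ConBasis} still applies after writing the relations as rows of an integer matrix.

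The main obstacle — really the only nontrivial point — is verifying that in parts (2) and (3) the post-processing (reading off invariants, or running Smith normal form) is genuinely $poly(\log n)$ and makes \emph{no} further queries to the Cayley table of $G$, so that the query lower bound of Lemma~\ref{IsoKnown} transfers intact to the running-time lower bound; this is where one must be slightly careful about what ``size of the relations'' means and confirm it keeps $\|R\|=n^{o(1)}$, i.e.\ $\log\|R\|=o(\log n)$, inside the polynomial-time regime of Lemma~\ref{ConBasis}. Everything else is bookkeeping.
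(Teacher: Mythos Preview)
Your proposal is correct and follows essentially the same approach as the paper: reduce each of (1)--(3) to the distinguishing problem of Lemma~\ref{IsoKnown} with $p=O(1)$, using the basis/rank to tell $H_1$ from $H_2$ in (2), and Kannan--Bachem (Lemma~\ref{ConBasis}) to reduce (3) to (2). One small inaccuracy: you assert $t\le\log n$ generators in part~(3), but the problem statement only bounds the \emph{size of the relations} by $n^{o(1)}$, not the number of generators; fortunately Smith normal form still runs in $poly(t,\log\|R\|)=n^{o(1)}$ even with $t=n^{o(1)}$, so your conclusion stands.
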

\begin{proof}
    If there exists an algorithm that solves the isomorphism problem in time $T$, then the same algorithm can solve the problem in Lemma~\ref{IsoKnown} for $p=O(1)$ in time $T$. Therefore, $T=\Omega(|G|^{1/4})$. This proves item 1.

    If there exists an algorithm that finds a basis for $G$ in time $T$, then the problem in Lemma~\ref{IsoKnown} can be solved in time $O(T)$ as follows: Given $H$ that is either isomorphic to $H_1={\mathbb{Z}}_{p^2}^m$ or $H_2=\mathbb{Z}_{p^2}^{m-1}\times \mathbb{Z}_p^2$, find the basis $\{a_1,\ldots,a_t\}$ for $H$. If $t=m$, then $H$ is isomorphic to $H_1$; otherwise, $t=m+1$ and $H$ is isomorphic to $H_2$. This proves item 2.

    If there is an algorithm that finds generators for $G$ with relations of size at most $t=n^{o(1)}$, then by the result of Kannan and Bachem~\cite{KannanB79} result, one can find the basis of $G$ in time $poly(t,\log n)=n^{o(1)}$. This proves item 3. 
\end{proof}

\begin{theorem}
    In the PS-model, the following problems cannot be solved in time less than $\Omega(|G|^{1/2})$. 
    \begin{enumerate}
        \item Given two Abelian groups $G$ and $H$, decide if $G$ is isomorphic to $H$.
        \item Given an Abelian group $G$, find a basis for $G$.
        \item Given an Abelian group $G$, find generators for $G$ with relations of size at most $n^{o(1)}$. 
    \end{enumerate}
\end{theorem}
\begin{proof}
    We use the same reductions as in the proof of Theorem~\ref{lb01}, combined with Lemma~\ref{IsunKnown}. 
\end{proof}

For deterministic algorithms we have.
\begin{theorem}
    For deterministic algorithm, the following problems cannot be solved in time less than $\Omega(|G|)$. 
    \begin{enumerate}
        \item Given two Abelian groups $G$ and $H$, decide if $G$ is isomorphic to $H$.
        \item Given an Abelian group $G$, find a basis for $G$.
        \item Given an Abelian group $G$, find generators for $G$ with relations of size at most $n^{o(1)}$. 
    \end{enumerate}
\end{theorem}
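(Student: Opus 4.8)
The plan is to derive all three bounds from Lemma~\ref{IsoKnownDet} by freezing the prime $p$ to a constant. Taking $p=2$, the abelian groups $D_1=\ZZ_2^m$ and $D_2=\ZZ_2^{m-2}\times\ZZ_4$ both have order $n=2^m$, and Lemma~\ref{IsoKnownDet} guarantees that any deterministic procedure which, given $G\in\{D_1,D_2\}$ presented abstractly together with its Cayley table, decides which of the two it is, must make $\Omega(p^{m-2})=\Omega(n/4)=\Omega(n)$ accesses. It therefore suffices to reduce each of the three problems in the statement to this discriminator at an extra cost of only $n^{o(1)}$ accesses (indeed $O(1)$ for the first two).

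For item~1, the point is simply that $D_1\not\cong D_2$ while $|D_1|=|D_2|=n$. Run the isomorphism tester on the input group $G$ together with a hard-coded concrete copy of $D_1$ (whose Cayley table we evaluate for free); report ``$D_1$'' if the tester answers ``isomorphic'' and ``$D_2$'' otherwise. This makes no accesses to $G$ beyond those of the tester, so the tester must make $\Omega(n)$ of them. For item~2, recall that $D_1$ has rank $m$ while $D_2$ has rank $m-1$, since the factor $\ZZ_4$ contributes a single prime-power cyclic summand. A basis algorithm returns a list of basis elements whose length equals the rank of the group, so comparing this length with $m$ distinguishes $D_1$ from $D_2$ with no extra accesses; hence such an algorithm runs in time $\Omega(n)$.

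For item~3, note first that a finite abelian group of order $n$ has a presentation only if its lattice of relations has full rank, so the number $t$ of generators is at most the number of relations and both are bounded by the relation size $s=n^{o(1)}$; the associated relation matrix thus has dimensions $n^{o(1)}$ and entries of bit-length $n^{o(1)}$. Feeding it to the Smith normal form computation of Kannan and Bachem~\cite{KannanB79}, as packaged in Lemma~\ref{ConBasis}, yields in time $poly(s,\log n)=n^{o(1)}$ the invariant factors $m_1\mid\cdots\mid m_r$, and $r$ equals the rank of the group --- $m$ for $D_1$ and $m-1$ for $D_2$. This reduces item~3 to item~2 with a sub-$n$ overhead, giving the $\Omega(n)$ bound.

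All of the real work is carried by Lemma~\ref{IsoKnownDet}, whose adversary answers every element query inside the common subgroup $W=\ZZ_p^{m-2}\times\{(0,0)\}$ and is thereby free to commit to $D_1$ or $D_2$ only after $p^{m-2}$ queries; the three reductions above are routine. The sole point deserving a moment's care is item~3, where one must confirm that converting a short presentation into invariant factors costs only $n^{o(1)}$ --- precisely the content of Lemma~\ref{ConBasis}, so no new idea is needed.
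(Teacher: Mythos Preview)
Your proposal is correct and follows essentially the same approach as the paper: the paper's proof simply says ``use the same reductions as in the proof of Theorem~\ref{lb01}, combined with Lemma~\ref{IsoKnownDet},'' and you have spelled out exactly those reductions (isomorphism test against a fixed copy of $D_1$; count basis elements to read off the rank; Smith normal form via Kannan--Bachem to reduce item~3 to item~2). One tiny remark: Lemma~\ref{ConBasis} as stated is packaged for the triangular presentations $\Gamma(K,L)$, so for item~3 it is cleaner to cite the Kannan--Bachem result~\cite{KannanB79} directly, as the paper does in the proof of Theorem~\ref{lb01}.
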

\begin{proof}
    We use the same reductions as in the proof of Theorem~\ref{lb01}, combined with Lemma~\ref{IsoKnownDet}.
\end{proof}

\bibliography{TestingRef}

\appendix
\section{Two Technical Proofs}
Here we prove
\begin{lemma}\label{PHIPHI}
    Let $G$ be an Abelian group and $A=\{a_1,\ldots,a_t\}$ be a set of generators that satisfy the relation
    $$a_1^{k_1}=1;\ \  a_i^{k_i}=a_1^{\lambda_{i,1}}\cdots a_{i-1}^{\lambda_{i,i-1}}$$
    for $i=2,\ldots,t$ where $k_i\ge 2$ is the smallest integer such that 
    $$a_i^{k_i}\in\langle a_1,\ldots,a_{i-1}\rangle$$
    for all $i\in [t]$. If
    $$\Psi(x_1^{j_1}\cdots x_t^{j_t})=a^{j_1}_1\cdots a^{j_t}_t$$ for every $0\le j_i\le k_i-1$ and $i\in [t]$ then 
    $$\Psi(x_1^{\ell_1}\cdots x_t^{\ell_t})=a^{\ell_1}_1\cdots a^{\ell_t}_t$$ for all $\ell_i\in\mathbb{N}$, $i\in [t]$.
\end{lemma}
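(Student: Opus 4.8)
The plan is to read the left-hand side $x_1^{\ell_1}\cdots x_t^{\ell_t}$ as the (unique) canonical monomial it reduces to inside $\Gamma(K,L)$, and to track the naive evaluation $x_i\mapsto a_i$ along that reduction. Concretely, define $\mathrm{ev}(x_1^{\ell_1}\cdots x_t^{\ell_t}):=a_1^{\ell_1}\cdots a_t^{\ell_t}\in G$ for \emph{all} $\ell_1,\dots,\ell_t\in\mathbb{N}$; this is clearly well defined, and for $0\le j_i\le k_i-1$ it agrees with $\Psi$ by the defining formula of $\Psi$. It therefore suffices to show that $\mathrm{ev}$ is invariant under the rewriting used to pass from a general monomial to its canonical form, since then $\Psi$ of the canonical form equals $\mathrm{ev}$ of the original monomial.

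\emph{One rewriting step preserves $\mathrm{ev}$.} Suppose $\ell_i\ge k_i$ for some $i$. Replacing one copy of $x_i^{k_i}$ by $x_1^{\lambda_{i,1}}\cdots x_{i-1}^{\lambda_{i,i-1}}$ turns $x_1^{\ell_1}\cdots x_t^{\ell_t}$ into $x_1^{\ell_1+\lambda_{i,1}}\cdots x_{i-1}^{\ell_{i-1}+\lambda_{i,i-1}}\,x_i^{\ell_i-k_i}\,x_{i+1}^{\ell_{i+1}}\cdots x_t^{\ell_t}$. Applying $\mathrm{ev}$, using that $G$ is abelian together with the hypothesis $a_i^{k_i}=a_1^{\lambda_{i,1}}\cdots a_{i-1}^{\lambda_{i,i-1}}$ (equivalently $a_i^{\ell_i}=a_i^{\ell_i-k_i}a_i^{k_i}$), both monomials evaluate to the same element of $G$. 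Iterating, every finite reduction sequence preserves $\mathrm{ev}$.

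\emph{Termination.} Reduce greedily at the largest index $i$ with $\ell_i\ge k_i$. Such a step lowers $\ell_i$ by $k_i$, may raise $\ell_j$ for $j<i$, and leaves $\ell_j$ unchanged for $j>i$, so the tuple $(\ell_t,\ell_{t-1},\dots,\ell_1)$ strictly decreases in lexicographic order; being a tuple of nonnegative integers, it cannot decrease forever, so after finitely many steps we reach $x_1^{j_1}\cdots x_t^{j_t}$ with every $0\le j_i\le k_i-1$. By Lemma~\ref{GamisAb} this is precisely the element $x_1^{\ell_1}\cdots x_t^{\ell_t}$ of $\Gamma(K,L)$. Combining the two steps, $\Psi(x_1^{\ell_1}\cdots x_t^{\ell_t})=\Psi(x_1^{j_1}\cdots x_t^{j_t})=\mathrm{ev}(x_1^{j_1}\cdots x_t^{j_t})=\mathrm{ev}(x_1^{\ell_1}\cdots x_t^{\ell_t})=a_1^{\ell_1}\cdots a_t^{\ell_t}$, as required.

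The argument is essentially bookkeeping; the only point that needs care is that a rewriting step can \emph{increase} the exponents of lower-index variables, so a crude induction on total degree fails and one must use the lexicographic (top-down) measure to guarantee termination. Once that is in place, invariance of $\mathrm{ev}$ under each step is immediate from commutativity of $G$. An alternative, essentially equivalent route is induction on $t$: write $\ell_t=qk_t+j_t$ with $0\le j_t\le k_t-1$, use $x_t^{\ell_t}=x_t^{j_t}(x_t^{k_t})^q$ to push $(x_t^{k_t})^q$ into the variables $x_1,\dots,x_{t-1}$, apply the induction hypothesis inside $G_{t-1}=\langle a_1,\dots,a_{t-1}\rangle$ (whose generators satisfy the same minimality of the $k_i$), and reassemble using that $G$ is abelian.
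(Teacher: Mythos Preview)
Your proof is correct. Your primary argument packages the computation as a terminating rewriting system whose steps preserve the naive evaluation $\mathrm{ev}$, with termination secured by the lexicographic measure $(\ell_t,\ldots,\ell_1)$; the paper instead runs precisely the induction-on-$t$ argument you sketch at the end as an ``alternative route'' (write $\ell_t=sk_t+r$, push $(x_t^{k_t})^s$ down into $x_1,\ldots,x_{t-1}$, and invoke the induction hypothesis there). The two arguments are really the same computation viewed from different angles: your top-down greedy reduction is exactly what the induction on $t$ does when unrolled, and your lexicographic termination is what guarantees the inductive step is well founded. Your presentation has the small advantage of making explicit why a naive induction on $\sum_i\ell_i$ would fail, which the paper leaves implicit.
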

\begin{proof}
    We will show by induction on $m$ that $$\Psi(x_1^{\ell_1}\cdots x_m^{\ell_m})=a^{\ell_1}_1\cdots a^{\ell_m}_m$$ for all $\ell_i\in\mathbb{N}$ for $i\in [m]$. The case $m=1$ is easy to verify. Assume the hypothesis holds for $m-1$. For $m$, let $r=\ell_m\mod k_m$ and $s=\lfloor \ell_m/k_m\rfloor$. Suppose $x_1^{\ell_1+s\lambda_{m,1}}\cdots x_{m-1}^{\ell_{m-1}+s\lambda_{m,m-1}}=x_1^{w_1}\cdots x_{m-1}^{w_{m-1}}$ where $w_i\le k_i$ for all $i$.
    Then
    \begin{eqnarray*}
        \Psi(x_1^{\ell_1}\cdots x_{m-1}^{\ell_{m-1}} x_{m}^{\ell_{m}})&=&\Psi(x_1^{\ell_1+s\lambda_{m,1}}\cdots x_{m-1}^{\ell_{m-1}+s\lambda_{m,m-1}} x_{m}^{r})\\
        &=&\Psi(x_1^{w_1}\cdots x_{m-1}^{w_{m-1}}x_m^r)\\
        &=& a_1^{w_1}\cdots a_{m-1}^{w_{m-1}}a_m^r\\
        &=&\Psi(x_1^{w_1}\cdots x_{m-1}^{w_{m-1}})a_m^r\\
        &=& \Psi(x_1^{\ell_1+s\lambda_{m,1}}\cdots x_{m-1}^{\ell_{m-1}+s\lambda_{m,m-1}} )a_{m}^{r}\\
        &=&a_1^{\ell_1+s\lambda_{m,1}}\cdots a_{m-1}^{\ell_{m-1}+s\lambda_{m,m-1}} a_{m}^{r}\\
        &=& a_1^{\ell_1}\cdots a_{m-1}^{\ell_{m-1}} a_{m}^{\ell_{m}}.
    \end{eqnarray*}
\end{proof}

We now prove Lemma~\ref{EstimateSG}.

\noindent
{\bf Lemma}\ \ref{EstimateSG}.\ {\it
    There exists an algorithm that runs in expected time $\tilde O(\sqrt{|G|})$ and, with probability at least $1-1/poly(|G|)$, returns a value $q$ satisfying $|G|\le q\le \Theta(|G|\log |G|)$. 

    The probability that this algorithm requires more than $\Theta(\sqrt{L|G|}\log^2(L|G|))$ elements of $G$ decreases exponentially with $L$.}
\begin{proof}
    Let $c$ be a sufficiently large constant. For $c$ iterations, we continue to sample uniform random elements from $G$ until a repeat is observed. Let $n_i$ be the number of elements observed at iteration~$i$. Define $n'=\max n_i$. 

    Next, for $c\log n'$ iterations, we continue to sample uniform random elements from $G$ until a repeat is observed. Let $m_i$ be the number of elements observed at iteration $i$. Then define $m=\max_i m_i$. Then output $q=m^2$.

    Let $n=|G|$. The main procedure of the algorithm is as follows: continue sampling uniform random elements from $G$ until a repeat is observed. Let $X$ be a random variable that represents the number of elements received in this procedure before a repeat occurs. For any integer $k$, we have
    \begin{eqnarray}
        \Pr[X>k]&=&\Pr[\mbox{The first $k+1$ elements are distinct}]\nonumber\\
        &=&\left(1-\frac{1}{n}\right)\left(1-\frac{2}{n}\right)\cdots \left(1-\frac{k}{n}\right)\nonumber\\
        &\le& e^{-\left(\frac{1}{n}+\frac{2}{n}+\cdots+\frac{k}{n}\right)}\le e^{-\frac{k^2}{2n}}.\label{LBk}
    \end{eqnarray}
    and
    \begin{eqnarray}
        \Pr[X< k]&=&1-\Pr[X>k-1]\nonumber\\
        &=&1-\left(1-\frac{1}{n}\right)\left(1-\frac{2}{n}\right)\cdots \left(1-\frac{k-1}{n}\right)\nonumber\\
        &\le& \frac{1}{n}+\frac{2}{n}+\cdots+\frac{k-1}{n}\le \frac{k^2}{2n}.\label{UBk}
    \end{eqnarray}
    Now, by (\ref{LBk}), and for any $T$ and a constant $c_1$, we have
    \begin{eqnarray}       \Pr\left[n'>c_1\sqrt{nT}\right]\le \sum_{i=1}^c\Pr[n_i>c_1\sqrt{nT}]\le ce^{-\Theta(T)}=e^{-\Theta(T)}.\label{Hiho}
    \end{eqnarray}
    Therefore,
    $$\Pr\left[n'>\Theta(\sqrt{n\log n})\right]\le  \frac{1}{poly(n)}.$$
    By (\ref{UBk}),
    $$\Pr[n'<n^{1/4}]\le \prod_{i=1}^c\Pr[n_i<n^{1/4}]=\left(\frac{n^{1/2}}{2n}\right)^c=\frac{1}{poly(n)}.$$
    Therefore, with probability at least $1-1/poly(n)$, we have $$\Theta(\sqrt{n\log n})\ge n'\ge n^{1/4}.$$ 

    Now, assuming $\Theta(\sqrt{n\log n})\ge n'\ge n^{1/4}$, by (\ref{LBk}) and (\ref{UBk}),
    $$\Pr[m>\Theta(\sqrt{n\log n})]\le c\log n' e^{-\Theta(\log n)}\le \frac{1}{poly(n)},$$
    and
    $$\Pr[m<n^{1/2}]\le \prod_{i=1}^{c\log n'}\Pr[m_i<n^{1/2}]=\left(\frac{n}{2n}\right)^{c\log n'}=\frac{1}{poly(n)}.$$
    Therefore, with probability at least $1-1/poly(n)$, we have $n\le q=m^2\le \Theta(n\log n)$.

    We now show that the probability that this algorithm requires more than $\Theta(\sqrt{Ln}\log^2(Ln))$ elements of $G$ decreases exponentially with $L$.

    Let
    $$N= \sum_{i=1}^cn_i,\ \ \ \ \mbox{and}\ \ \ M=\sum_{j=1}^{c\log n'}m_i.$$
    Let $\lambda=\Theta(\log^2 (Ln))$. The probability that this algorithm requires more than $\lambda\sqrt{Ln}$ elements of $G$ is
    $$\Pr\left[N+M\ge \lambda\sqrt{Ln}\right]\le \Pr\left[N\ge (\lambda/2)\sqrt{Ln}\right]+\Pr\left[M\ge (\lambda/2)\sqrt{Ln}\right].$$
    Now, by (\ref{LBk}), 
    \begin{eqnarray*}
        \Pr\left[N\ge (\lambda/2)\sqrt{Ln}\right]&\le&c\Pr\left[n_1\ge (\lambda/2c)\sqrt{Ln}\right]\\
        &\le& c e^{-(\lambda/2c)^2L/2}\le (Ln)^{-\Theta(L\log(Ln))}.
    \end{eqnarray*}
    And,
    \begin{eqnarray*}
        \Pr\left[M\ge (\lambda/2)\sqrt{Ln}\right]&\le& \Pr\left[\left. M\ge (\lambda/2)\sqrt{Ln}\ \right |\ n'\le \log(Ln)\sqrt{Ln}\right]+\Pr\left[n'> \log(Ln)\sqrt{Ln}\right].
    \end{eqnarray*}
    By (\ref{Hiho})
    $$\Pr\left[n'\ge (\log Ln)\sqrt{Ln}\right]\le (Ln)^{-\Theta(L\log(Ln))}.$$
Given that $n'\le (\log (Ln))\sqrt{Ln}$, by (\ref{LBk}), we have
\begin{eqnarray*}
    \Pr\left[M\ge (\lambda/2)\sqrt{Ln}\right]&\le& c\log n'\Pr\left[m_1\ge (\lambda/(2c\log n'))\sqrt{Ln}\right]\\
    &\le& 2c\log(Ln)\Pr\left[m_1\ge \Theta\left(\log(Ln)\sqrt{Ln}\right)\right]\\&\le& (Ln)^{-\Theta(L\log(Ln))}.
\end{eqnarray*}
Therefore, the probability that this algorithm requires more than $\Theta(\sqrt{Ln}\log^2(Ln))$ elements of $G$ is at most $$(Ln)^{-\Theta(L\log(Ln))}.$$

\section{Birthday Paradox}\label{apB}
We consider the following standard ``birthday paradox'' setting.
There are \(n\) distinct bins and we perform \(m\) independent draws,
where each draw places a ball uniformly at random into one of the \(n\) bins.
A \emph{collision} occurs if at least two balls fall into the same bin.

Let \(P_{\mathrm{coll}}(n,m)\) denote the probability that a collision occurs
after \(m\) draws. Equivalently,
\[
P_{\mathrm{coll}}(n,m) \;=\; 1 - \frac{n(n-1)\cdots (n-m+1)}{n^m}
\;=\;
1 - \prod_{i=0}^{m-1} \left(1 - \frac{i}{n}\right),
\]
where the product term is the probability that all \(m\) balls fall into
distinct bins.

For \(m \le n/2\), we have the standard bounds
\[
\exp\!\left(
    -\frac{1}{n} \sum_{i=0}^{m-1} i
    -\frac{1}{n^2} \sum_{i=0}^{m-1} i^2
\right)
\;\le\;
\prod_{i=0}^{m-1} \left(1 - \frac{i}{n} \right)
\;\le\;
\exp\!\left(
    -\frac{1}{n} \sum_{i=0}^{m-1} i
\right).
\]
That is,
\[
1 - \exp\!\left( -\frac{m(m-1)}{2n} \right)
\;\le\;
P_{\mathrm{coll}}(n,m)
\;\le\;
1 - \exp\!\left( -\frac{m(m-1)}{2n} - \frac{(m-1)m(2m-1)}{6n^2} \right).
\]

Now set \( m = c\sqrt{n} \) with fixed \( c>0 \). Then
\[
\frac{m(m-1)}{2n}
= \frac{c^2}{2} + O\!\left( \frac{1}{\sqrt{n}} \right),
\qquad
\frac{(m-1)m(2m-1)}{6n^2}
= \frac{c^3}{3\sqrt{n}} + O\!\left( \frac{1}{n} \right).
\]
Plugging these into the bounds gives
\[
1 - \exp\!\left( -\frac{c^2}{2} + O\!\left( \tfrac{1}{\sqrt{n}} \right) \right)
\;\le\;
P_{\mathrm{coll}}(n,c\sqrt{n})
\;\le\;
1 - \exp\!\left( -\frac{c^2}{2} - \frac{c^3}{3\sqrt{n}} + O\!\left( \tfrac{1}{n} \right) \right).
\]

In particular, as \( n \to \infty \),
\[
P_{\mathrm{coll}}(n,c\sqrt{n}) \;\longrightarrow\; 1 - e^{-c^2/2},
\]
with error term \( O\!\left( \frac{1}{\sqrt{n}} \right) \).
\end{proof}

\ignore{\section{A One-Sided Tester for Abelian Group}\label{Sec4}

In this appendix, we prove the following.
\begin{theorem}
    There exists a one-sided tester for the Abelian group that runs in time $\tilde O(|G|+1/\epsilon)$. 
\end{theorem}

Given a binary operation $*:G^2\to G$ where each entry in the table can be accessed in constant time, we aim to test whether $(G,*)$ is an Abelian group or $\epsilon$-far from being an Abelian group. 

We run Algorithm~\ref{AbelGen2} {\bf Generators} on the elements of the table with the following modifications.
\begin{enumerate}
    \item In step~\ref{StkS}, replace the command $G_0=\{e\}$ with the following: choose $a\in G$ and set $e=a^{|G|}$ using the recurrence,
    $$a^m = 
\begin{cases} 
    (a^{m/2})*(a^{m/2}) & \text{if } m \text{ is even}, \\
    a * (a^{(m-1)/2}*a^{(m-1)/2}) & \text{if } m \text{ is odd}.
\end{cases}$$
Then, set $G_0=\{e\}$.
\item In step~\ref{Stkk}, use the sequence $a_i,a_i*a_i,a_i*(a_i*a_i),\cdots$. If, after generating $|G|/|G_{i-1}|$ elements, an element that belongs to $G_{i-1}$ is found, reject. Refer item~\ref{ProAbb4} in Lemma~\ref{ProAbb}.
\item In step~\ref{Stk}, compute $G_i = G_{i-1} \cup \bigcup_{j=1}^{k_i-1}  a_i^j*G_{i-1}$ using the elements $a_i$ obtained in step~\ref{Stkk}. If any element of $G$ appears more than once in $G_i$, reject. 
\end{enumerate}
If the tester does not reject, all the conditions in the proofs of Theorem~\ref{AlgGAbelian} and Lemma~\ref{ProAbb} are satisfied.
That is, we will have elements $a_1,\ldots,a_t\in G$ where $t\le \log n$, integers $k_1,\ldots,k_t$, and $0\le \lambda_{i,j}\le k_j-1$ for $i=2,\ldots,t$ and $j=1,\ldots,i-1$ such that 
\begin{enumerate}
    \item $k_1k_2\cdots k_t=n$.
    \item\label{BCon2} Each element $a\in G$ has a unique representation of the form $a_t^{j_t}*(a_{t-1}^{j_{t-1}}*(\cdots *(a_{2}^{j_{2}}*a_1^{j_1}))\cdots)$ for 
    some $j_i\in \{0,1,\ldots,k_i-1\}$. 
    \item $a_i^{k_i}=a_{i-1}^{\lambda_{i,i-1}}*(a_{i-2}^{\lambda_{i,i-2}}*(\cdots *a_1^{\lambda_{i,1}}))\cdots)$ and $a_1^{k_1}=e$.
\end{enumerate}
By Theorem~\ref{AlgGAbelian2}, this takes time $O(|G|\log |G|)$ and $O(|G|)$ queries.

By Lemma~\ref{GamisAb}, we can define the Abelian group
   $$\Gamma(K,L)=\langle x_1,x_2,\ldots,x_t\ |\ x_i^{k_i}= x_1^{\lambda_{i,1}}\cdots x_{i-1}^{\lambda_{i,i-1}},i=2,\ldots,t\ ;\ x_1^{k_1}=1\rangle$$
   where $K=\{k_i\}$ and $L=\{\lambda_{i,j}\}$.
By Lemma~\ref{IsoGamG}, if $G$ is an Abelian group, then $G$ is isomorphic to $\Gamma(K,L)$ with the isomorphism $$\Psi(x_1^{j_1}\cdots x_t^{j_t})=a_1^{j_1}\cdots a_t^{j_t}.$$

Let ${\cal K}=\prod_{i=1}^t\{0,1,\ldots,k_i-1\}$ and 
define the functions $\fa:{\cal K}\to G$ and $\chi:K\to \Gamma(K,L)$ as follows.
$$\fa(\alpha_1,\alpha_2,\ldots,\alpha_t)=a_t^{\alpha_t}*(a_{t-1}^{\alpha_{t-1}}*(\cdots *(a_{2}^{\alpha_{2}}*a_1^{\alpha_1}))\cdots)$$
and
$$\chi(\alpha)=x_t^{\alpha_t}x_{t-1}^{\alpha_{t-1}}\cdots x_1^{\alpha_1}.$$
By item~\ref{BCon2}, $\fa$ is a bijective map.
Define the binary operation $\circ:G\times G\to G$ as follows: 
\begin{enumerate}
    \item Given $a,b\in G$.
    \item Represent $a$ and $b$ as $a = \fa(\alpha)$ and $b = \fa(\beta)$.
    \item Multiply $\chi(\alpha)$ and $\chi(\beta)$ in $\Gamma(K,L)$ to obtain $\chi(\gamma)$.
    \item Output $\fa(\gamma)$.
\end{enumerate}
That is,
$$a\circ b=\fa(\chi^{-1}(\chi(\fa^{-1}(a))\chi(\fa^{-1}(b)))).$$
We first show that 
\begin{lemma}
    $(G,\circ)$ is an Abelian group. 
\end{lemma}
\begin{proof}
  Consider the map $\Pi:G\to \Gamma(K,L)$ defined as $\Pi(a)=\chi(\fa^{-1}(a))$. Since $\chi$ and $\fa$ are bijective, $\Pi$ is bijective. Then $$\Pi(a\circ b)=\chi(\fa^{-1}(a))\chi(\fa^{-1}(b))=\Pi(a)\Pi(b),$$ which shows that $(G,\circ)$ is an Abelian group.   
\end{proof}

If $(G,*)$ is an Abelian group, then $\Pi=\Phi^{-1}$, and $$a\circ b=\Pi^{-1}(\Pi(a)\Pi(b))=\Phi(\Phi^{-1}(a)\Phi^{-1}(b))=a*b,$$ implying $(G,*)=(G,\circ)$.

If $(G,*)$ is $\epsilon$-far from any Abelian group, then it $\epsilon$-far from $(G,\circ)$. Therefore, for $O(1/\epsilon)$ uniformly random pairs $(a,b)$, we check if $a*b=a\circ b$ and reject if they are not equal.

By Lemma~\ref{MulInv}, $a\circ b$ can be computed in $\tilde O(\log^2 |\Gamma|)=\tilde O(\log^2|G|)$ time. Thus, this last step takes $\tilde O((\log^2|G|)/\epsilon)=\tilde O(1/\epsilon)$ time. }

\end{document}